\newif\if@restonecol
\newcommand{\cut}[1]{}
\newcommand{\com}[1]{}
\newcommand{\rn}[1]{}
\crefname{table}{Table}{Tables}
\crefname{figure}{Figure}{Figures}
\crefname{cor}{Corollary}{Corollaries}
\crefname{step}{Step}{Steps}
\crefname{rrulev}{Rule}{Rules}
\crefname{thm}{Theorem}{Theorems}
\crefname{obs}{Observation}{Observations}
\crefname{lem}{Lemma}{Lemmas}
\crefname{claim}{Claim}{Claims}
\crefname{section}{Section}{Sections}
\crefname{subsection}{Section}{Sections}
\crefname{figure}{Figure}{Figures}
\crefname{algorithm}{Algorithm}{Algorithms}
\crefname{proposition}{Proposition}{Propositions}
\crefname{theorem}{Theorem}{Theorem}
\crefname{lemma}{Lemma}{Lemmas}
\crefname{construction}{Construction}{Constructions}
\newtheorem{theorem}{Theorem}
\newtheorem{definition}{Definition}
\newtheorem{lemma}{Lemma}
\newtheorem{corollary}{Corollary}
\newtheorem{proposition}{Proposition}
\newtheorem{observation}{Observation}
\theoremstyle{remark}
\crefname{rrule}{Rule}{Rules}
\newcommand{\smin}{s_{\text{min}}}
\newcommand{\smax}{s_{\text{max}}}
\newcommand{\MOV}{\text{MOV}}
\newcommand{\cv}{\chi} 
\newcommand{\col}{\text{col}}
\newcommand{\fairCD}{{\normalfont\textsc{FCD}}\xspace}
\newcommand{\pb}{\mathcal{B}} 
\newcommand{\ppb}{\mathcal{D}} 
\newcommand{\kp}{\kappa} 
\newcommand{\cc}{\zeta} 
\DeclareMathOperator*{\argmax}{arg\,max}
\DeclareMathOperator{\mln}{mln}
\DeclareMathOperator{\fen}{fen}
\DeclareMathOperator{\fvn}{fvn}
\DeclareMathOperator{\vc}{vcn}
\DeclareMathOperator{\pw}{pw}
\DeclareMathOperator{\twidth}{tw}
\newcommand{\nov}{p}
\definecolor{myred}{RGB}{255,153,153}
\definecolor{myorange}{RGB}{255,187,59}
\definecolor{myyellow}{RGB}{255,255,153}
\definecolor{mygreen}{RGB}{153,255,153}
\newcommand{\decprob}[3]{%
  \begin{center}%
    \begin{minipage}{0.92\linewidth}%
      \textsc{#1}\\
      \textbf{Input:} #2\\
      \textbf{Question:} #3
    \end{minipage}%
  \end{center}%
}
\tikzstyle{vertex}=[draw, circle, fill, inner sep = 2.4pt]
\tikzstyle{square}=[draw, fill, inner sep = 3pt]
\tikzstyle{penta}=[draw, regular polygon, regular polygon sides=5, fill, inner 
\tikzstyle{hexa}=[draw, regular polygon, regular polygon sides=6, fill, inner 
\tikzstyle{triangle}=[draw, regular polygon, regular polygon sides=3, fill, inner sep = 2pt]
\tikzstyle{itriangle}=[draw, regular polygon, regular polygon sides=3, 
\tikzstyle{star4}=[draw, star, star points=4, fill, inner sep = 2pt]
\tikzstyle{star5}=[draw, star, star points=5, fill, inner sep = 2pt]
\tikzstyle{star6}=[draw, star, star points=6, fill, inner sep = 2pt]
\title{A Refined Complexity Analysis of Fair Districting over 
Graphs\footnote{A 2-page extended abstract of this work will appear in the proceedings of the \emph{21st International Conference on Autonomous Agents and Multiagent Systems} (AAMAS 2022).}}
\author{Niclas Boehmer \and Tomohiro 
		Koana \and Rolf Niedermeier}
\date{TU Berlin, Algorithmics and Computational Complexity, 
Germany\\\texttt{\small 
\{niclas.boehmer,tomohiro.koana\}@tu-berlin.de}}
\begin{document}
\allowdisplaybreaks
\maketitle

\begin{abstract}
We study the NP-hard \textsc{Fair Connected Districting} problem recently proposed by Stoica et al.~[AAMAS 2020]: Partition a vertex-colored graph into $k$~connected components (subsequently referred to as districts) so that in every district the most frequent color occurs at most a given number of times more often than the second most frequent color. 
	\textsc{Fair Connected Districting} is motivated by various real-world scenarios where agents of different types, which are one-to-one represented by nodes in a network, 
	have to be  partitioned into disjoint districts.  Herein, one strives for ``fair districts'' without any type being in a dominating majority in any of the districts. This is to e.g.\ prevent segregation or political domination of some political party.  
	We conduct a fine-grained analysis of the (parameterized) computational complexity of \textsc{Fair Connected Districting}. 
	In particular, we 
	prove that it is polynomial-time solvable on paths, cycles, stars, and 
	caterpillars, but already becomes NP-hard on trees.
	Motivated by the latter negative result, we perform a parameterized complexity analysis with respect to various graph parameters, including treewidth, and problem-specific parameters, including, the numbers of colors and districts. 
	We obtain a rich and diverse, close to complete picture 
	of the corresponding parameterized complexity landscape 
	(that is, a classification along the complexity classes FPT, XP, W[1]-hard, and para-NP-hard). 
\end{abstract}

\section{Introduction}
Stoica et al.\ \cite{DBLP:conf/atal/StoicaCDG20} recently introduced 
graph-based problems
on fair (re)districting, employing ``margin of victory'' as the measure of 
 fair representation. In their work, 
they performed theoretical and empirical studies; 
the latter clearly supporting the practical relevance of these problems.
The main contribution of their work is certainly with respect 
to modeling and performing promising empirical 
studies (based on greedy heuristics).
In this paper, we instead focus on the theoretical aspects, significantly extending their
findings in this direction.

Dividing agents into groups is a ubiquitous task.
Electoral districting is one of the prime examples:
Voters are partitioned into voting districts, each electing its own representative.\footnote{One prominent example of electoral districting are the congressional districts in the United States: The US is divided into $435$ congressional districts each electing one member of the House of Representatives. How these districts are drawn is the topic of an ongoing debate \cite{campagna1990party,engstrom2006stacking,hirsch2003united,DBLP:journals/jea/LevinF19}.}
Another example emerges in education; in many countries, 
children are assigned to schools based on their residency.
In such scenarios, the agents (in the settings above, voters or school 
children) are often placed on a (social or geographical) network.
When assigning them to districts, it is natural to require 
that every district should be connected in the network and meet some further criteria.

In districting, there are various objectives.
What we study here can be interpreted as a ``benevolent'' counterpart
of the well-studied gerrymandering scenario in voting theory.
For gerrymandering, every voter is characterized by their 
projected vote in the upcoming election. 
The goal is then to find a partition of the voters into connected 
districts such that some designated alternative gains the majority 
in as many districts as possible. 
Following Stoica et al.\ \cite{DBLP:conf/atal/StoicaCDG20}, we consider an 
opposite objective. 
That is, we assume that some central authority wishes to partition the agents, which are of different types, into connected districts that are \emph{fair}, where a district is deemed fair if the margin of victory in the district is smaller than a given bound. 
The \emph{margin of victory} of a district is the minimum number of agents whose deletion results in a tie between the two most frequent types in the district.
When partitioning children into school districts, types may model sociodemographic attributes such as race and gender, and a low margin of victory could be beneficial to prevent the existence of schools where one trait is in a clear majority and which may thus be only associated with this single trait (see Stoica et al.\ \cite{DBLP:conf/atal/StoicaCDG20} for a more extensive discussion). 
In electoral districting where agents' types can represent their projected vote or ethnicity, a low margin of victory may foster competition among politicians, thereby motivating elected officials to do a great job.
To illustrate that districts that are dominated by a certain ethnicity are a serious problem in particular in developing countries, we quote the two noble price winners Banerjee and Duflo \cite[pp. 251-252]{banerjee2011poor}

\begin{displayquote}
There is reason to be concerned that voting [in developing countries] is often based on ethnic loyalties, which means that the candidate from the largest ethnic group often wins, whatever his intrinsic merit. [...] [I]f voters choose based on ethnicity rather than on merit, the quality of candidates representing the majority group will suffer: These candidates don't need to make much of an effort because the fact that they are from the "right" caste or ethnic group is sufficient to ensure that they are elected. 
\end{displayquote}

Banerjee and Duflo \cite{banerjee2007parochial} even found evidence that in the 1980s and 1990s in North India elected official that belong to the (clearly) dominating caste group were significantly more likely to be corrupt. 
This illustrates the practical importance of creating districts with a low margin of victory (in terms of ethnicity). 

In our work, we build upon the studies of Stoica et 
al.\ \cite{DBLP:conf/atal/StoicaCDG20} to search for tractable special cases 
of fair districting over graphs. 
We focus on the \textsc{Fair Connected Districting} (\fairCD) problem (a natural special case of 
Stoica et 
al.'s \textsc{Fair Connected Regrouping} problem).
The input of \fairCD consists of a graph~$G=(V,E)$ in which every vertex is assigned a color from a set~$C$, and integers $k$, $\ell$, $\smin$, and $\smax$.
The question is whether the vertex set of~$G$ can be partitioned into $k$ connected districts, each containing between $\smin$ and $\smax$ vertices, whose margin of victory is at most~$\ell$. The difference to \textsc{Fair Connected Regrouping} is that \fairCD does not impose any constraints to which districts an agent can be 
assigned.\footnote{We
mention  that Stoica et al.\ \cite{DBLP:conf/atal/StoicaCDG20} use a 
slightly different definition of margin of victory.
While we look at the number of vertices that need to be deleted to have a tied most frequent color, they examine the number of vertices that need to change their color such that the most frequent color changes. 
We chose our definition in order to be able to distinguish the case of two tied most 
frequent colors from the case where one color appears once more than the others 
(which both have margin of victory one in the model of Stoica et 
al.\ \cite{DBLP:conf/atal/StoicaCDG20}). 
Clearly, there are also other fairness measures, e.g., the difference between the occurrences of the most and least frequent color.
While these are natural as well, our definition is particularly appealing if each district may be only associated with its most frequent color if its margin of victory is too high. Notably, margin of victory is also a popular concept in other domains such as group identification \cite{DBLP:conf/ijcai/BoehmerBKL20}, tournament solutions \cite{BRILL2022103600}, and voting theory \cite{DBLP:conf/ijcai/DeyN15,DBLP:conf/sigecom/Xia12}.} 
 
It is easy to see that \fairCD generalizes the known NP-hard \textsc{Perfectly Balanced Connected Partition} problem \cite{DBLP:journals/ipl/Chlebikova96,DBLP:journals/dam/DyerF85}, which asks for a partition of a graph into two connected components of the same size (see \Cref{pr:np-hard}). This motivates a parameterized complexity analysis and the study of restrictions of the underlying graph in order 
to identify tractable special cases. 
Specifically, we analyze the computational complexity of \fairCD on specific graph classes and the parameterized complexity of \fairCD with respect to several problem-specific parameters (such as $ \vert C \vert $ and~$k$) as well as several parameters measuring structural properties of the underlying graph (such as its treewidth or vertex cover number). 

\subsection{Related Work}
Stoica et al.\ \cite{DBLP:conf/atal/StoicaCDG20} introduced
\textsc{Fair Connected Regrouping}, which is a generalization of our \textsc{Fair Connected Districting} problem. 
\textsc{Fair Connected Regrouping} differs from \fairCD in that, in \textsc{Fair Connected Regrouping}, one is additionally given  a function that specifies for each vertex to which district it can belong. 
They proved that \textsc{Fair Connected Regrouping} is NP-hard even for only two colors and two districts. 
Moreover, Stoica et al.\ \cite{DBLP:conf/atal/StoicaCDG20} considered special 
cases of \textsc{Fair Connected Regrouping}:  
\textsc{Fair Regrouping} (omitting connectivity constraints) and \textsc{Fair Regrouping\_X} (omitting connectivity constraints and any restriction to which districts vertices can belong). 
They proved that the former problem is NP-hard for three colors but in XP with respect to the number of districts, while the later problem is additionally in XP with respect to the number of colors. 
In addition, they proposed for each of the variants a greedy heuristic and evaluated them using synthetic and real-world data from UK general elections and US public schools.  

\fairCD is relevant in district-based elections, where voters are partitioned into districts and each district elects its own representative. 
Several papers have studied how to assign voters to districts so as to 
``fairly'' reflect the political choices of voters
\cite{DBLP:conf/ijcai/BachrachLLZ16,DBLP:journals/scw/LandauRY09,DBLP:journals/corr/LandauS14,DBLP:journals/corr/abs-2006-11865,DBLP:journals/mcm/PuppeT08}. 
Well-studied in this context is gerrymandering, which can be regarded 
as a ``malicious'' counterpart to 
our problem. 
In gerrymandering, the task is to partition a set of voters into districts obeying certain conditions such that a designated alternative wins in as many districts as possible. 
An intuitive strategy to solve this problem, which is not necessarily optimal 
\cite{puppe2009optimal}, is to maximize the number of districts where the designated 
alternative wins only by a small margin (this is somewhat related to our problem.)
Initially, gerrymandering has been predominantly studied from the 
perspective of social and political science 
\cite{erikson1972malapportionment,issacharoff2002gerrymandering,lublin1999paradox}. 
More recently, different variants of gerrymandering have been 
considered from an algorithmic perspective 
\cite{DBLP:conf/aaai/EibenFPS20,DBLP:conf/atal/LewenbergLR17}. Notably, the 
study of gerrymandering over graphs, which is analogous to 
our problem, has recently gained significant interest~\cite{DBLP:journals/corr/abs-2102-08905,DBLP:conf/atal/Cohen-ZemachLR18,DBLP:journals/corr/abs-2102-09889,DBLP:conf/atal/ItoK0O19}. In particular, as done here for \fairCD, Bentert et al.\ \cite{DBLP:journals/corr/abs-2102-08905}, Gupta et al.\ \cite{DBLP:journals/corr/abs-2102-09889}, and Ito et al.\ \cite{DBLP:conf/atal/ItoK0O19}  analyzed the complexity of gerrymandering on paths, cycles, and trees and studied the influence of the number of candidates/colors and the number of districts.
 A similar model for graph-based redistribution scenarios and political districting has been studied
 under the name ``network-based vertex dissolution''~\cite{DBLP:journals/siamdm/BevernBCFNW15}.

Partitioning agents of different types into balanced groups is conceptually closely related to studies of social segregation. 
In computer science, social segregation is, for instance, quite extensively studied in the context of Schelling's segregation 
model \cite{schelling1969models}: 
While initially the work in this context was mostly concerned with the 
theoretical analysis of segregation patterns 
\cite{DBLP:conf/stoc/BrandtIKK12,DBLP:conf/soda/BhaktaMR14}, 
recently Schelling's model has been approached from a game-theoretic perspective 
\cite{DBLP:journals/ai/AgarwalEGISV21,DBLP:journals/corr/abs-2105-06561}.

\subsection{Contribution}
Motivated by the NP-hardness of \fairCD (see \Cref{pr:paraNP}), we conduct a parameterized complexity analysis of \fairCD and study restrictions of the underlying graph in order 
to identify tractable special cases. 
We investigate the influence of problem-specific parameters (the number~$ \vert C \vert $ of colors, the number~$k$ of districts, and the margin of victory~$\ell$) and the structure of the underlying graph on the computational complexity of \fairCD.

We show that \fairCD is NP-hard even if $ \vert C \vert =k=2$ and $\ell=0$ but 
polynomial-time solvable on paths, cycles, stars, and caterpillars (for stars, 
our algorithm even runs in linear time).\footnote{While in real-world applications these simple graphs may occur not often, they are the building blocks of more complex graphs. 
This also motivated a study of these graph classes for gerrymandering over graphs \cite{DBLP:journals/corr/abs-2102-08905,DBLP:journals/corr/abs-2102-09889,DBLP:conf/atal/ItoK0O19}. 
Moreover, initially, it is not at all clear that \fairCD is polynomial-time solvable on these graphs, as, for instance, gerrymandering over graphs is NP-hard even on paths \cite[Theorem 1]{DBLP:journals/corr/abs-2102-08905}.
This also proves that \fairCD is sometimes easier than the corresponding gerrymandering over graphs~problem.}  
Subsequently, we extend our 
polynomial-time algorithms for paths and cycles to a polynomial-time algorithm 
for all graphs with a constant max leaf number ($\mln$), which are basically graphs that consist of a constant number of paths and cycles (where the two endpoints of each path and one point from each cycle can be arbitrarily connected). 

Remarkably, in our most involved hardness reduction, we show that \fairCD already becomes  
NP-hard and even W[1]-hard with respect to~$ \vert C \vert +k$ on trees. However, when the
number  of colors or the number of districts is constant, \fairCD on trees becomes polynomial-time solvable. In fact, we show that these results hold for some tree-like graphs as well. Herein, the tree-likeness of a graph is measured by one of three parameters, namely, the treewidth ($\twidth$), the feedback edge number ($\fen$), and the feedback vertex number ($\fvn$).
More precisely, as our most involved algorithmic results, we establish polynomial-time solvability of \fairCD when the number of colors and the treewidth are constant.
We achieve this with a dynamic programming approach on the tree decomposition of the given graph empowered by some structural observations on \fairCD. 
Moreover, we observe that there is a simple 
polynomial-time algorithm on graphs with a constant feedback edge number when there are a constant number of districts.
On the other hand, we prove that \fairCD is 
NP-hard for two districts even on graphs with $\fvn=1$ (and $\twidth=2$).
Lastly, we show that \fairCD is polynomial-time solvable on graphs with a 
constant vertex cover number ($\vc$) and fixed-parameter tractable with respect to the 
vertex cover number and the number of colors.
A summary of our parameterized results can be found in \Cref{fig:params}. Notably, all our hardness results also hold without size~constraints.

In our studies, we identify several sharp complexity dichotomies. 
For instance, \fairCD is polynomial-time solvable on trees with diameter at most 
three but NP-hard and W[1]-hard with respect to~$ \vert C \vert +k$ on 
trees with diameter four. 
Similarly, \fairCD is NP-hard and W[1]-hard with respect to~$ \vert C \vert +k$ on graphs 
with pathwidth at least two but polynomial-time solvable on 
pathwidth-one graphs.	

To summarize, we show that \fairCD without size constraints is NP-hard even in very restricted settings, e.g., on trees or if $ \vert C \vert =k=2$ and $\ell=0$. 
To make the problem tractable, one possibility is to significantly restrict the input graph, e.g., to consist of a constant number of paths and cycles, or to combine structural parameters of the given graph with the number $ \vert C \vert $ of colors or the number $k$ of districts.\footnote{Notably, in most applications, the number of colors and districts should be relatively small. E.g., in their experiments,  Stoica et al. \cite{DBLP:conf/atal/StoicaCDG20} partitioned \num{50000} voters into~$10$~voting districts and \num{41834} schoolchildren into $61$ school districts with $ \vert C \vert =7$. }
For small $ \vert C \vert $ and $k$, the tractability of \fairCD extends to certain tree-like graphs and graphs with a small vertex cover number.  
In contrast to the parameters $ \vert C \vert $ and~$k$, which have a 
strong influence on the complexity of \fairCD, the bound~$\ell$ on the margin of 
victory has only little impact as all hardness results already 
hold for~$\ell=0$ and all our algorithmic results hold for arbitrary~$\ell$. 

\paragraph{Organization} 
The remainder of this paper is structured as follows. 
In \Cref{se:prel}, we formally introduce \fairCD and the graph parameters we examine. 
In \Cref{se:first}, we present some preliminary results on \fairCD mostly concerning NP-hardness. 
In  \Cref{se:paths}, we then consider \fairCD on very special graph classes such as paths and cycles as well as on graphs that 
can be partitioned into a bounded number of those. 
In \Cref{se:trees}, we shift our attention to trees and graphs that are tree-like. 
Lastly, in \Cref{se:vc}, we analyze the influence of the vertex cover number on the complexity of \fairCD.

\begin{figure}[t]
	\centering
	\tikzstyle{arc}=[-{Latex[length=2mm]}]
	\resizebox{0.8\textwidth}{!}{\begin{tikzpicture}[xscale=1.6, yscale=.9]
		\tikzset{
			param/.style={draw, fill=white, rectangle, rounded corners=3, font=\small, minimum width=5em, minimum height=3.8ex},
		}
		\draw[draw=black, fill=myred, rounded corners=3, dotted, line width=0.05cm] (-3, -1) rectangle (2.2, 2.65);
		\draw[draw=black, fill=myorange, rounded corners=3, dashed, line width=0.05cm] (-3, 2.8) rectangle (1.5, 5.35);
		\draw[draw=black, fill=myyellow, rounded corners=3, solid, line width=0.05cm] (1.8, 2.8) rectangle (4.2, 5.35);
		\draw[draw=black, fill=mygreen, rounded corners=3] (3, 4.15) rectangle (4.2, 5.35);
		
		\node[param, inner sep=0pt,inner ysep=+0pt] at (0, -.5) (tw)
		{$\twidth$};
		\node[param, inner sep=+0pt] at (.8, .75) (fvn)
		{$\fvn$};
		\node[param, inner sep=+0pt] at (1.6, 2) (fen)
		{$\fen$};
		\node[param, inner sep=+0pt] at (-.8, .75) (twk)
		{$\twidth + \, k$};
		\node[param, inner sep=+0pt] at (0, 2) (fvnk)
		{$\fvn + \, k$};
		\node[param, inner sep=+0pt] at (-2.2, .8) (ck)
		{$ \vert C \vert  + k$};
		
		\node[param, inner sep=+0pt] at (-1.5, 3.5) (twc) {$\twidth + \,  \vert C \vert $};
		\node[param, inner sep=+4pt] at (-.8, 4.75) (twck) {$\twidth + \,  \vert C \vert  + k$};
		\node[param, inner sep=+0pt] at (.9, 3.5) (fenk) {$\fen + \, k$};
		
		\node[param, inner sep=+0pt] at (2.4, 3.5) (mln) {$\mln$};
		\node[param, inner sep=+0pt] at (3.6, 3.5) (vc) {$\vc$};
		
		\node[param, inner sep=+0pt] at (3.6, 4.75) (vcc) {$\vc + \,  \vert C \vert $}; 
		
		\draw[arc, rounded corners=10] (ck)  |- (twck);
		\draw[arc] (tw) -- (twk);
		\draw[arc] (twk) -- (fvnk);
		\draw[arc] (fvnk) -- (fenk);
		\draw[arc] (fen)-- (fenk);
		\draw[arc] (tw) -- (fvn);
		\draw[arc] (fvn) -- (fen);
		\draw[arc] (fvn) -- (fvnk);
		\draw[arc] (fen) -- (fenk);
		\draw[arc] (twk) -- (twck);
		\draw[arc] (twk) -- (fvnk);
		\draw[arc, rounded corners=10] (tw) -|  (twc);
		\draw[arc] (twc) -- (twck);
		\draw[arc] (fen) -- (mln);
		\draw[arc, rounded corners=10] (fvn) -|  (vc);
		\draw[arc] (vc) -- (vcc);
		\end{tikzpicture}}
	\caption{
		Overview of our parameterized complexity results.
		Each box represents one parameterization of \fairCD.
		An arc from parameter~$p$ to another parameter~$p'$ indicates that $p$ is upper-bounded by some function of~$p'$.
		For parameters in the red area (dotted), we prove that \fairCD is NP-hard even if the parameter is a constant.
		For parameters in the orange area (dashed), we prove W[1]-hardness and present an XP-algorithm.
		For parameters in the yellow area (solid thick), we have an XP-algorithm but W[1]-hardness is unknown.
		The green area (solid) indicates fixed-parameter tractability.
	}
	\label{fig:params}
\end{figure}
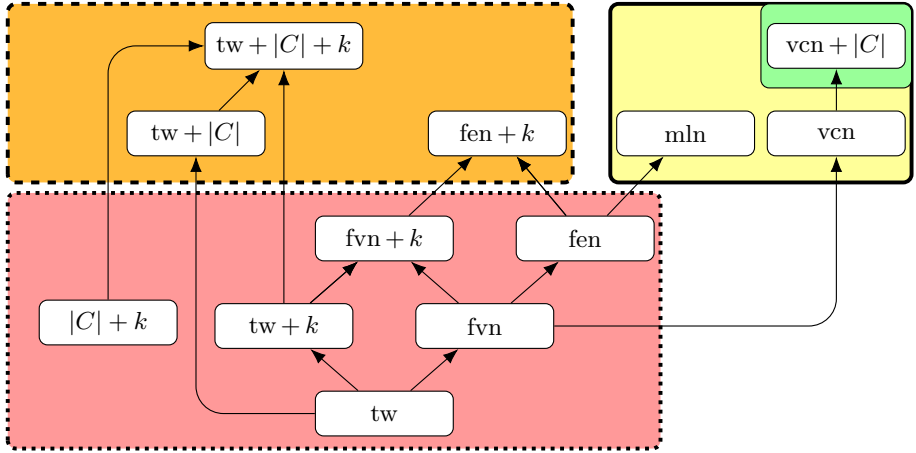

\section{Preliminaries} \label{se:prel}
For $a,b\in \mathbb{N}$, let $[a,b]$ denote $\{a,a+1,\dots, b-1,b\}$ and let $[b]$ denote $[1,b]$. Throughout the paper, all graphs $G = (V, E)$ are undirected and have no self-loops or multi-edges. By convention, we will use $n:=  \vert V \vert $.
Given a graph $G=(V,E)$ and a vertex set $V'\subseteq V$, let $G[V']$ be the graph $G$ induced by the vertices from $V'$. 

\paragraph{\textsc{Fair Connected Districting}.} 
Let $C=\{c_1,\dots, c_{ \vert C \vert }\}$ be the set of colors. 
We assume that each vertex $v\in V$ of the given graph has a color $c\in C$ determined by a given coloring function $\col:V \rightarrow C$. 
For a vertex set $V'\subseteq V$, let $\cv_c(V')$ denote the number of vertices of color $c$ in $V'$. 
Moreover, let $\cv(V') \in \mathbb{N}^{ \vert C \vert }$ denote the vector in which the $i$th entry contains the number of vertices of color $c_i$ in $V'$, i.e, $\cv_i(V')=\cv_{c_i}(V')$.
For a vector $\mathbf{x}=(x_1,\dots, x_t) \in \mathbb{N}^{t}$, let $i^*_\mathbf{x}$ denote an index with the largest entry in $\mathbf{x}$, i.e., $i^*_\mathbf{x}\in \argmax_{i\in [t]} x_i$.
The \emph{margin of victory} $\MOV(\mathbf{x})$ of a vector $\mathbf{x}$ is defined as the difference between the largest and second largest entry in $\mathbf{x}$, i.e., $\MOV(\mathbf{x}):=x_{i^*_\mathbf{x}}- \max_{i\in [t]\setminus \{i^*_\mathbf{x}\}} x_i$. 
If $t=1$, then we set $\MOV(\mathbf{x}):=  x_1$.
Accordingly, we define the margin of victory $\MOV(V')$ of a vertex set $V'\subseteq V$ of colored vertices as $\MOV(V'):=\MOV(\cv(V'))$.
For $\ell\in \mathbb{N}$, we call a vertex set $V'$ \emph{$\ell$-fair} if $\MOV(V')\leq \ell$. We use the term \emph{district} to refer to a vertex set $V'\subseteq V$. We say that a color $c_i\in C$ is the $j$th most frequent color in a district $V'$ if $\cv_i(V')$ is the $j$th~largest entry in~$\cv(V')$ (we break ties arbitrarily unless stated~otherwise). 
We now present our central problem:\footnote{Note that we chose to require that all districts are non-empty to be consistent with the closely related literature on gerrymandering over graphs \cite{DBLP:conf/atal/Cohen-ZemachLR18,DBLP:conf/atal/ItoK0O19}. Moreover, our variant can easily be used to decide whether there exists a partitioning into at most $k$ non-empty districts by simply iterating over all $i\in [k]$. For the reverse direction, this is not easily possible. For instance, consider a clique with two blue vertices and one green and one red vertex and $\ell=0$. Then, a partitioning into two non-empty $\ell$-fair districts is possible but not into any other number of non-empty districts.}
 
\decprob{Fair Connected Districting (\fairCD)}
{A graph $G=(V,E)$, a set $C$ of colors, a function $\col:V\rightarrow C$ assigning each vertex one color from $C$, a number $k\leq  \vert V \vert $ of districts, a maximum margin of victory $\ell$, and two integers $\smax \ge \smin \ge 1$.}
{Does there exist a partition of the vertices into $k$ districts $(V_1,\dots,V_k)$ such that, for all $i\in [k]$, $V_i$ is $\ell$-fair, $ \vert V_i \vert \in [\smin, \smax]$, and $G[V_i]$ is connected?}

\paragraph{Graph Parameters.}
We define several graph parameters for an undirected graph $G=(V,E)$.
The \emph{diameter} of~$G$ is the maximum shortest distance between any pair of vertices.
The \emph{max leaf number} $\mln(G)$ (for a connected graph~$G$) is the maximum number of leaves over all spanning trees of $G$.
A vertex set $V' \subseteq V$ is a \emph{vertex cover} of~$G$ if $G[V \setminus V']$ has no edge.
A vertex set $V' \subseteq V$ is a \emph{feedback vertex set} if $G[V \setminus V']$ is a forest.
Analogously, an edge set $E' \subseteq E$ is a \emph{feedback edge set} if $(V, E \setminus E')$ is a forest.
The \emph{vertex cover number} $\vc(G)$, \emph{feedback vertex number} $\fvn(G)$, and \emph{feedback edge number} $\fen(G)$  is the size of a smallest vertex cover, feedback vertex set, and feedback edge set, respectively.
A \emph{tree decomposition} of a graph $G = (V, E)$ is a pair $(T, \{ B_x \}_{x \in V_T})$, where $T = (V_T, E_T)$ is a rooted tree and $B_x \subseteq V$ for each $x \in V_T$ such that 
\begin{enumerate}[label=(\roman*)]
  \item
    $\bigcup_{x \in V_T} B_x = V$,
  \item
    for each edge $\{u, v\} \in E$, there is an $x \in V_T$ with $u, v \in B_x$, and 
  \item
    for each $v \in V$, the set of nodes $x \in V_T$ with $v \in B_x$ induces a connected subtree in $T$.
\end{enumerate}
The \emph{width} of $(T, \{ B_x \}_{x \in V_T})$ is $\max_{x \in V_T}  \vert B_x \vert  - 1$.
The \emph{treewidth} $\twidth(G)$ of $G$ is the minimum width of all tree decompositions of $G$.
The \emph{pathwidth} $\pw(G)$ of $G$ is defined analogously with the additional constraint that $T$ is a path.

If $G$ is clear from context, then we simply omit it and write $\mln$, $\vc$, $\fvn$, $\fen$, $\twidth$, and $\pw$, respectively. 
For all our parameterized algorithms with respect to one of these parameters, we assume that the corresponding structure is given as part of the input. For instance, in our algorithm parameterized by $\twidth$, we assume that we are given a tree decomposition. 
However, note that in all cases, the worst-case running time of our algorithms would not change if we compute the structure using known parameterized algorithms. 

 \paragraph{Parameterized Complexity Theory} \label{sec:PC}
 A \emph{parameterized problem}~$L$ consists of a problem instance~$\mathcal{I}$ and a parameter value~$k\in \mathbb{N}$. 
Then $L$ lies in XP with respect to $k$ if there exists an algorithm deciding~$L$ in $ \vert \mathcal{I} \vert ^{f(k)}$ time for some computable function $f$. 
Furthermore, $L$~is called \emph{fixed-parameter tractable} with respect to $k$ if there exists an algorithm deciding~$L$ in $f(k) \vert \mathcal{I} \vert ^{\mathcal{O}(1)}$ time for a computable function~$f$. The corresponding complexity class is called~FPT.
There is a hierarchy of complexity classes for parameterized problems: FPT$\subseteq$ W[1] $\subseteq$ W[2] $\subseteq$ XP, where it is commonly believed that all inclusions are strict. 
Thus, if~$L$ is shown to be W[1]-hard, then the common belief is that it is
not fixed-parameter tractable. For instance, 
computing a clique of size at least~$k$ in a graph is known to be W[1]-hard with respect to the parameter $k$.
One can show that $L$ is W[$t$]-hard, $t\geq 1$, by a \emph{parameterized reduction} from a known W[$t$]-hard parameterized problem~$L'$. 
A parameterized reduction from $L'$ to $L$ is a function that maps an instance $(\mathcal{I}', k')$ of $L'$ to an instance $(\mathcal{I}, k)$ of $L$ such that $(\mathcal{I}', k')$ is a yes-instance for $L'$ if and only if $(\mathcal{I}, k)$ is a yes-instance for $L$. 
Moreover, we require that $k$ is bounded in a function of $k'$ and that the  transformation takes at most $f(k') \vert \mathcal{I} \vert ^{\mathcal{O}(1)}$ time for some 
computable function $f$.
Finally, we say that a parameterized problem is \emph{para-NP-hard} if it is NP-hard even for constant parameter values.

\section{Basic Results on \textsc{FCD}} \label{se:first}
In this section, we make some basic observations on the computational complexity of \fairCD. 
First, we prove that \fairCD is para-NP-hard with respect to the combination $ \vert C \vert +k+\ell$ of all three problem-specific parameters. 
This strong general hardness result motivates the study of various restricted graph classes in subsequent sections.

The NP-hardness of \fairCD easily follows from the fact that it generalizes the \textsc{Perfectly Balanced Connected Partition} problem, which is NP-hard on bipartite graphs \cite{DBLP:journals/dam/DyerF85}: 
\decprob{Perfectly Balanced Connected Partition}
{A graph $G=(V,E)$.}
{Is there a partition $(V_1,V_2)$ of the vertex set~$V$ such that $G[V_1]$ and~$G[V_2]$ are connected and $ \vert V_1 \vert  =  \vert V_2 \vert $?
}

Consider the following polynomial-time many-one reduction:
Given an instance $G=(V,E)$ of \textsc{Perfectly Balanced Connected Partition}, construct an equivalent instance of \fairCD by coloring all vertices in $G$ in the same color, setting $k=2$, $\ell= \vert V \vert /2$, $\smin=1$, and $\smax=\infty$.
Moreover, it is also possible to prove the NP-hardness for the case with $\ell=0$ at the cost of introducing a second color:
\begin{proposition}
\label{pr:paraNP}\label{pr:np-hard}
 \fairCD is NP-hard even if $G$ is bipartite, $\smin=1$, $\smax=\infty$, and (i) $ \vert C \vert =1$, $k=2$ or (ii) $\ell=0$, $ \vert C \vert =2$,~$k=2$.
\end{proposition}
\begin{proof}
 To prove (ii), we present a polynomial-time Turing reduction from \textsc{Perfectly Balanced Connected Partition} to \fairCD. 
 To this end, we assume that we have access to an oracle for \fairCD. 
 
 \noindent \textbf{Construction.} Given an instance $(G=(V,E))$ of \textsc{Perfectly Balanced Connected Partition} with $n:= \vert V \vert $, for each pair of vertices $(v, v')\in V^2$, we construct an instance of \fairCD: 
 We set $C=\{c_1,c_2\}$, $k=2$, and $\ell=0$. 
 We color all vertices of the given graph~$G$ in color~$c_1$. 
 Moreover, we modify $G$ by introducing $n$ new vertices of color $c_2$. 
 We connect half of these vertices to $v$ and the other half to $v'$. 
 Subsequently, we query our oracle for \fairCD on the resulting instance and return yes if the oracle returns yes; otherwise, we continue with the next pair of vertices.
 After the oracle rejected all instances, we return no. 
 
 We now prove the correctness of the reduction by showing that there exists a pair of vertices $(v,v')\in V^2$ for which the constructed \fairCD instance is a yes-instance if and only if the given \textsc{Perfectly Balanced Connected Partition} instance is a yes-instance. 
 
 Let $(V_1,V_2)$ be a solution to the \fairCD instance constructed for some pair of vertices $(v,v')\in V^2$. 
 As $\ell=0$ and we require that both $V_1$ and $V_2$ need to be non-empty, it needs to hold that both districts contain vertices of both colors. 
 Moreover, as both $G[V_1]$ and $G[V_2]$ are connected, all vertices of color $c_2$ adjacent to $v$ need to be in the same district as $v$ (and similarly for $v'$).
 Thus, $v$ and $v'$ need to be in different districts, each consisting of $\frac{n}{2}$ vertices of color $c_2$ and $\frac{n}{2}$ vertices of color $c_1$.
 Thereby, after removing all vertices of color $c_2$, $(V_1,V_2)$ is a solution to the given \textsc{Perfectly Balanced Connected Partition} instance.
 
 To prove the reverse direction, let $(V_1,V_2)$ be a solution to the given \textsc{Perfectly Balanced Connected Partition} instance. 
 Let $v\in V_1$ be some vertex from $V_1$ and $v'\in V_2$ be some vertex from $V_2$. 
 Then, the \fairCD instance constructed for $(v,v')$ is a yes-instance, as it 
 is possible to add the $\frac{n}{2}$ vertices of color $c_2$ attached to $v$ 
 to $V_1$ and the $\frac{n}{2}$ vertices of color $c_2$ attached to $v'$ to 
 $V_2$ to arrive at a solution to the \fairCD instance constructed for pair 
 $(v,v')\in V^2$.
\end{proof}
Note that our results strengthen a result of Stoica et 
al.\ \cite{DBLP:conf/atal/StoicaCDG20}, who proved that \fairCD is NP-hard for $ \vert C \vert =2$ and $k=2$ if we can additionally specify for each vertex the districts to which it can be assigned. 

As our second basic result, using a simple dynamic programming approach, we show that an instance of \fairCD{} on a disconnected graph is polynomial-time solvable if \fairCD{} can be solved in polynomial time on its connected components.
This will play a crucial role, e.g., in developing an XP-algorithm for the max leaf number (\Cref{th:mln}). 

\begin{proposition}
\label{prop:disjointunion}
Let $\mathcal{G}$ be a class of graphs such that $\fairCD$ is polynomial-time solvable on any graph $G \in \mathcal{G}$.
  Then, \fairCD is polynomial-time solvable on any graph from $\mathcal{G}'$, where $\mathcal{G'}$ is the class of graphs obtained by taking disjoint unions of graphs from $\mathcal{G}$.
\end{proposition}
\begin{proof}
  Let $G'\in \mathcal{G}'$ and let $G_1, \dots, G_p\in \mathcal{G}$ be the connected components of $G'$. 
  Clearly, every district is contained in the vertex set of $G_i$ for some $i \in [p]$.
  We solve the problem using a simple subset-sum like dynamic programming algorithm. To this end, we introduce a table $T[i,j]$ for $i\in [p]$ and $j\in [k]$. 
  An entry~$T[i,j]$ is true if one can partition the vertices of $G_1, \dots, G_i$ into $j$ connected $\ell$-fair districts respecting the size constraints. 
  We also use a table $H$ where $H[i,j]$ for $i\in [p]$ and $j\in [k]$ is true if the vertex set of $G_i$ can be partitioned into $j$ connected $\ell$-fair districts respecting the size constraints.
  Note that $H$ can be computed in polynomial time by our initial assumption on $G$.

  To initialize $T$, we set $T[1,j]$ to $H[1,j]$ for all $j\in [k]$. 
  Subsequently, for increasing $i>1$, we update $T$ as follows: 
  $$T[i,j]=\bigvee_{\substack{j',j''\in [j]\\ j'+j''=j}} T[i-1,j'] \wedge H[i,j''].$$ 
  In the end, we return $T[p,k]$. 
  This algorithm runs in $\mathcal{O}(p\cdot k^2) = \mathcal{O}(n^3)$ time 
  (aside from the computation of $H$).
\end{proof}

\section{\textsc{FCD} on Paths, Cycles, and Beyond} \label{se:paths}
This section studies the computational complexity of \fairCD on simple graphs and graphs that can be partitioned into ``few'' paths and cycles. 
Specifically, in \Cref{sub:simplegraphs}, we develop polynomial-time algorithms on paths, cycles, stars and caterpillars. 
Subsequently, in \Cref{sub:mln}, we show that \fairCD is in XP when parameterized by the max leaf number, which generalizes polynomial-time solvability on paths and cycles.

\subsection{Polynomial-Time Algorithms for \textsc{FCD} on Simple Graph Classes} \label{sub:simplegraphs}
We start by proving that \fairCD is cubic-time solvable on paths using a simple dynamic programming approach.
\begin{proposition}
  \label{prop:path}
  \fairCD on paths can be solved in $\mathcal{O}(k \cdot n^2)$ time.
\end{proposition}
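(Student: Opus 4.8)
The plan is to exploit the fact that on a path the connected districts are precisely the contiguous intervals, which turns \fairCD into a one-dimensional interval-partition problem solvable by a left-to-right dynamic program.

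First I would observe that a vertex subset of a path $v_1, v_2, \dots, v_n$ induces a connected subgraph if and only if it is a set of consecutive vertices, i.e., an interval $\{v_a, v_{a+1}, \dots, v_b\}$, which I abbreviate as $[a,b]$. Hence any partition of~$V$ into $k$ non-empty connected districts is exactly a choice of $k-1$ of the $n-1$ gaps at which to cut, producing consecutive intervals $[1,b_1], [b_1+1,b_2], \dots, [b_{k-1}+1,n]$. So \fairCD on a path reduces to deciding whether $[1,n]$ can be split into exactly $k$ non-empty consecutive intervals, each of which is $\ell$-fair.

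Next I would set up the dynamic program. Let $D[i][j]$ be \emph{true} iff the prefix $[1,i]$ can be partitioned into $j$ non-empty $\ell$-fair intervals, with $D[0][0]$ true. The recurrence is
\[ D[i][j] = \bigvee_{0 \le i' < i} \big( D[i'][j-1] \wedge \text{``$[i'+1,i]$ is $\ell$-fair''} \big), \]
and the answer is $D[n][k]$. Correctness follows by a routine induction on~$j$: in any valid partition of $[1,i]$ the last district is some interval $[i'+1,i]$, and the remaining prefix $[1,i']$ must itself decompose into $j-1$ fair intervals. There are $\mathcal{O}(nk)$ states, and each is evaluated in $\mathcal{O}(n)$ time by ranging over~$i'$, \emph{provided} the fairness test $\MOV([i'+1,i])\le \ell$ is answerable in constant time.

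The only real obstacle is making the fairness tests cheap enough: evaluating $\MOV([a,b])$ naively costs $\Theta(|C|)$ and would inflate the bound to $\mathcal{O}(n^2 k |C|)$. I would therefore precompute, in $\mathcal{O}(n^2)$ total time, a boolean table recording for every interval whether it is $\ell$-fair. For each fixed left endpoint~$a$ I sweep the right endpoint~$b$ from~$a$ to~$n$, maintaining the color-count vector of $[a,b]$ together with its largest value $M_1$, the number of colors attaining it, and the second-largest value $M_2$ (initialized so that $\MOV = M_1 - M_2$ also yields the correct value for singleton intervals and for $|C|=1$). Since extending~$b$ by one increments a single color's count, these quantities update in $\mathcal{O}(1)$ per step by a short case analysis on whether the incremented color was below, at, or above the current maximum and whether that maximum was attained uniquely. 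This gives $\mathcal{O}(n)$ work per left endpoint and $\mathcal{O}(n^2)$ overall; the dynamic program then runs in $\mathcal{O}(n^2 k)$ time, which dominates, yielding the claimed $\mathcal{O}(k \cdot n^2)$ bound.
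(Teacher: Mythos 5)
Your proposal is correct and follows essentially the same route as the paper: both reduce connected districts on a path to contiguous intervals, precompute an $\mathcal{O}(n^2)$-size table of which intervals are $\ell$-fair, and run the identical prefix dynamic program with $\mathcal{O}(kn)$ states and $\mathcal{O}(n)$ work per state. Your explicit bookkeeping of the largest value, its multiplicity, and the second-largest value to answer each fairness test in constant time is in fact slightly more careful than the paper's treatment of that step.
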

\begin{proof}
  Let $G=(\{v_1,\dots,v_n\}, \{\{v_i,v_{i+1} \} \mid i\in [n-1]\})$ be the input path. 
  We first create a table~$A[i,j]$, where $A[i,j]$ for $i\leq j\in [n]$ is true if and only if $\{v_i,\dots , v_j\}$ is $\ell$-fair and $ \vert \{v_i,\dots, v_j\} \vert =j - i + 1 \in [\smin, \smax]$.
  We then create a table~$T$ with entries $T[i,t]$ for $i\in [n]$ and $t\in [k]$.
  The meaning of an entry $T[i,t]$ is that it is true if there is a partition of the vertices $\{v_1,\dots, v_i\}$ into $t$ paths (connected districts) that are all $\ell$-fair and respect the size constraints. 
  Note that the input is a yes-instance if and only if $T[n,k]$ is true.

  We initialize table~$T$ by setting $T[i,1]$ for $i\in [n]$ to $A[1,i]$. 
  Subsequently, for $t>1$, we update the table using:
  $$T[i,t]=\bigvee_{j\in [i-1]} T[j,t-1] \wedge A[j+1,i].$$
  The reasoning behind this is that if we partition $\{v_1,\dots , v_i\}$ into $t$ districts, then there needs to be some $j\in [i-1]$ such that $\{v_{j+1},v_{j+2},\dots, v_i\}$ is one district in the solution. 
  Thus, if $T[i,t]$ is true, then there needs to be some $j\in [i-1]$ such that the vertices from $\{v_1,\dots, v_j\}$ can be partioned into $t-1$ $\ell$-fair districts respecting the size constraints and the vertices $\{v_{j+1},v_{j+2},\dots, v_i\}$ form an $\ell$-fair district respecting the size constraints. 

  The table $A$ can be filled in $\mathcal{O}(n^2)$ time: 
  for a fixed $i\in [n]$, we can compute all $A[i,n-j]$ for $j\in [0,n-i]$ in linear time by starting for $j=0$ with computing $\cv(\{v_i,\dots, v_{n-j}\})$ and, subsequently, for increasing~$j>1$ compute $\cv(\{v_i,\dots, v_{n-j}\})=\cv(\{v_i,\dots, v_{n-{j+1}}\})- \cv(\{v_{n-j+1}\})$ (which can be done in constant time).  
  It can be determined in $O(1)$ time whether $\cv(\{v_i,\dots, v_{n-j}\})$ is $\ell$-fair by additionally storing the number of occurrences of integers (which are at most $n$) in $\cv(\{v_i,\dots, v_{n-j}\})$.
  Additional bookmarking allows us to find the two largest entries in $\cv(\{v_i,\dots, v_{n-j}\})$ in $\mathcal{O}(1)$ time.
  Moreover, the number of occurrences in $\cv(\{v_i,\dots, v_{n-j}\})$ can be updated in $\mathcal{O}(1)$ time as we increase $j$.
  We set $A[i, n - j]$ to true if and only if $\cv(\{v_i,\dots, v_{n-j}\})$ is $\ell$-fair and $\smin \le n - i - j + 1 \le \smax$.
 
  Note that table $T$ consists of $k\cdot n$ entries. 
  We spend $\mathcal{O}(n)$ time for every entry, resulting in an overall 
  running time of $\mathcal{O}(k\cdot n^2)$.
\end{proof}

To solve \fairCD on cycles, we iterate over all vertices of the cycle as the starting point of the first district and split the cycle at this point to convert it into a path. Subsequently, we employ the algorithm from above, which results in a running time of $\mathcal{O}(k\cdot n^3)$: 
\begin{corollary}
  \fairCD on cycles is solvable in $\mathcal{O}(k\cdot n^3)$ time.
\end{corollary}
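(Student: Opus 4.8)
The plan is to reduce \fairCD on cycles to the path case already handled in \Cref{prop:path}. The key structural observation is that, in any partition of a cycle into $k \geq 2$ connected districts, every district is a contiguous arc: a proper nonempty subset $V'$ of the cycle's vertices induces a connected subgraph if and only if $V'$ consists of consecutive vertices. Consequently, such a partition is completely described by its $k$ boundary edges, i.e., the $k$ edges of the cycle whose endpoints lie in different districts. The idea is then simply to guess one such boundary edge, cut it to obtain a path on the same vertex set, and invoke the path algorithm.

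Concretely, let $G$ be the input cycle with vertices $v_1, \dots, v_n$ in cyclic order. First I would dispose of the case $k=1$: the only candidate district is all of $V$, which induces the connected cycle, so it suffices to test whether $\MOV(V) \leq \ell$. For $k \geq 2$, I would iterate over all $n$ vertices $v_i$, viewing $v_i$ as the first vertex of its district and hence treating the edge joining $v_i$ to its clockwise-previous neighbor as a cut edge. Removing this edge turns the cycle into the path $v_i, v_{i+1}, \dots$ (indices taken cyclically) on the same vertex set, to which I apply \Cref{prop:path} with the same coloring, $k$, and $\ell$. The algorithm returns yes if and only if some such path is a yes-instance.

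For correctness, note that $\MOV$ of a vertex set depends only on the set, not on the ambient graph, and that the arcs produced by the path algorithm induce connected subgraphs of the cycle as well. Thus any $\ell$-fair partition of one of the cut paths into $k$ subpaths immediately yields an $\ell$-fair connected partition of the cycle, proving soundness. Conversely, if the cycle instance is a yes-instance, fix a valid partition into $k$ arcs and pick any one of its $k$ boundary edges; cutting that edge splits no district, so the very same partition is a valid partition of the corresponding cut path into $k$ $\ell$-fair subpaths, which \Cref{prop:path} will detect. Since the partition has at least one boundary edge and we try all $n$ possible cut edges, completeness follows.

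The running time is $n$ invocations of the $\mathcal{O}(k \cdot n^2)$-time path algorithm, for a total of $\mathcal{O}(k \cdot n^3)$ (one could even precompute the $\ell$-fairness of all arcs once and share it across invocations, but this does not change the asymptotics). The only genuine subtlety—and thus the main point to argue carefully—is the structural claim that connected districts on a cycle are exactly arcs, together with the guarantee that every valid cycle partition has a boundary edge that our enumeration is certain to hit; everything else is routine bookkeeping on top of \Cref{prop:path}.
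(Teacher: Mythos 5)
Your proposal is correct and matches the paper's approach exactly: the paper likewise iterates over all $n$ vertices as the starting point of a district, splits the cycle there to obtain a path, and invokes \Cref{prop:path}, yielding $\mathcal{O}(k\cdot n^3)$ time. Your additional justification that connected districts on a cycle are arcs and that every valid partition has a boundary edge the enumeration will hit is the (implicit) correctness argument the paper leaves to the reader.
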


Next, we proceed to stars, for which we derive a precise characterization of yes-instances.
In fact, using a relatively involved analysis, we consider a more general problem, in which some set $X$ of leaves must belong to the same district as the center vertex, i.e., the vertex which is adjacent to all other vertices from the star. 
This proves useful in speeding up the algorithms to be presented 
in \Cref{th:cater,pr:v2}.

\begin{proposition}
\label{le:stars}
 Let $(G=(X\cup Y,E),C,\col,k,\ell,\smin,\smax)$ be an \fairCD instance where $G$ is a star with center vertex $v\in X$. Then, one can decide in linear time whether there is a partition of $X\cup Y$ into $k$~connected $\ell$-fair districts respecting $\smin$ and $\smax$ such that all vertices from $X$ are part of the same district. 
\end{proposition}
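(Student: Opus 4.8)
The plan is to first determine the structure of every connected partition of a star, and then reduce the fairness question to a balancing problem over the color counts.

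\emph{Structure of star partitions.} In a star with center $v$, every connected district either contains $v$---and may then contain an arbitrary subset of the leaves, since all leaves are adjacent to $v$---or is a single leaf, because two distinct leaves are non-adjacent and hence cannot lie together in a connected district avoiding $v$. Thus any partition into connected districts consists of exactly one \emph{center district} $D \ni v$ together with one singleton district per leaf outside $D$. As $v \in X$, the requirement that all of $X$ lie in one district forces $X \subseteq D$, so only the free leaves in $Y$ may be split off. Consequently a feasible partition is completely described by the set $D \cap Y$ of free leaves kept with the center, and it has exactly $1 + |Y \setminus D|$ districts.

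\emph{Reduction to balancing.} Requiring exactly $k$ districts fixes the number of free leaves turned into singletons to $r := k-1$; this is possible only if $0 \le r \le |Y|$, which I check first. Every singleton leaf has margin of victory $1$, so whenever $r \ge 1$ one may reject immediately unless $\ell \ge 1$. It then remains to decide whether the $r$ removed free leaves can be chosen so that the center district is $\ell$-fair. Let $N_c := \cv_c(X \cup Y)$, $n_c := \cv_c(X)$, and $f_c := N_c - n_c$ denote, respectively, the number of color-$c$ vertices in the whole graph, the number forced into the center, and the number of removable free leaves of color $c$. Then the center district has color counts $z_c = N_c - b_c$ for integers $0 \le b_c \le f_c$ with $\sum_c b_c = r$, and I must decide whether some such choice attains $\MOV((z_c)_{c \in C}) \le \ell$; equivalently, whether the minimum of $\MOV$ over all feasible removals is at most $\ell$.

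\emph{Solving the balancing problem.} Minimizing $\MOV$ amounts to making the two largest counts as close as possible while removing exactly $r$ leaves subject to the floors $z_c \ge n_c$ and caps $z_c \le N_c$. The case $|C| = 1$ is immediate, as then $\MOV$ equals the center size $n - r$. For $|C| \ge 2$ the natural idea is a water-filling argument: removing a leaf from a currently most frequent color never increases the gap between the top two counts, so one keeps reducing the maximum. The delicate case---and the main obstacle---arises when the most frequent color is \emph{stuck}, i.e.\ all its vertices are forced into $X$ (so $f_c = 0$): a removal from the runner-up would then enlarge the gap, so the forced removals must instead be absorbed by less frequent colors while the runner-up is kept as high as possible, and a naive ``remove from the largest color'' rule fails here. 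To make this precise and to obtain the claimed running time, I would instead guess the final maximum value $M \ge \max_c n_c$ and test feasibility of the system ``$z_c \le M$ for all $c$, at least two colors with $z_c \ge M - \ell$, and $\sum_c z_c = n - r$'', which reduces to comparing $n-r$ against an easily computed lower and upper bound on the attainable total. Both bounds are monotone in $M$, so the admissible values of $M$ form an interval whose endpoints can be read off from prefix sums over the color counts (sorted by counting sort, as all values lie in $[0,n]$), yielding an overall linear-time test. The instance is a yes-instance precisely when the checks on $r$ and $\ell$ pass and this minimum margin of victory does not exceed $\ell$.
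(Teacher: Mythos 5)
Your structural reduction (one center district containing all of $X$, every other district a singleton free leaf, hence exactly $r=k-1$ leaves split off, each with margin of victory $1$) is the same starting point as the paper's, and your handling of the special cases $\ell=0$ and $|C|=1$ matches it. From there the two arguments genuinely diverge. The paper characterizes the feasible values of $k$ directly as an interval $[b_l+1,\,|Y|+1-b_u]$ with $b_l,b_u$ given by explicit formulas in the color counts of $X$ and $X\cup Y$: the two endpoints are certified by two concrete solutions, and the interval property follows from an exchange argument that moves one leaf at a time out of the center district while preserving $\ell$-fairness. You instead fix $r$ and quantify existentially over the final maximum count $M$, using that $\MOV(\mathbf{z})\le\ell$ holds iff some $M$ satisfies $z_c\le M$ for all $c$ and at least two colors have $z_c\ge M-\ell$, and that for fixed $M$ the attainable totals form an interval $[L(M),U(M)]$ whose endpoints are nondecreasing in $M$. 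This is sound: $U(M)=\sum_c\min(N_c,M)$ is clearly monotone, and $L(M)$ is monotone because each pair's lifting cost $\max(0,M-\ell-n_c)$ grows with $M$ while the set of eligible pairs (colors with $N_c\ge M-\ell$) only shrinks; moreover, evaluating $L$ at $M=\max_c n_c$ recovers the paper's $b_u$ and the threshold for $U$ recovers $b_l$, so the two characterizations coincide. What your route buys is a uniform treatment that avoids the paper's somewhat delicate choice of the colors $c^{\star}_1,c^{\star}_2$; what it costs is that the one step you leave implicit---computing $L(M)$, which requires tracking the two \emph{eligible} colors of largest $n_c$ while sweeping $M$ over its $\mathcal{O}(|C|)$ breakpoints---is exactly where the ``stuck most frequent color'' difficulty you correctly identify resides, and it should be spelled out to substantiate the linear running time. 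The paper's exchange argument sidesteps this by only ever exhibiting the two extreme solutions explicitly.
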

\begin{proof}
 In the following, we call a partition of the vertices in $X\cup Y$ into $k$~connected districts a solution if each district is $\ell$-fair, respects the size constraints, and all vertices from $X$ are part of the same district to which we refer as the center district. 
 Note that each solution consists of the center district and several districts consisting of a single vertex from~$Y$.
 Hence, we may assume that $\smin = 1$.
 (For $\smin \ge 2$, we can immediately conclude that there is no solution unless $k = 1$, $X \cup Y$ is $\ell$-fair, and $ \vert X\cup Y \vert \in [\smin,\smax]$.)
 We may also assume that $ \vert X \vert  \le \smax$.
 We now describe a linear-time algorithm that decides whether a given \fairCD instance admits a solution. 
 
 We start with several special cases: For $\ell=0$, a solution exists if and only if $\MOV(X\cup Y) = 0$, $k = 1$, and $ \vert X\cup Y \vert \leq \smax$.
 If $ \vert C \vert =1$, then all solutions containing $k$ districts are isomorphic, i.e., they contain $k-1$ districts containing a single vertex and a single district containing the remaining vertices. 
 Thus, the given instance admits a solution if and only if $ \vert Y \vert \geq k-1$,  $\ell\geq \cv_{c}(Y\cup X)-(k-1)$ where $C = \{ c \}$, and $ \vert Y\cup X \vert -(k-1)\leq \smax$.  
 
 Let $c_1\in C$ be the most frequent color in $X\cup Y$ and $c_2\in C$ be the second most frequent color in $X\cup Y$.
 If $\cv_{c_1}(X) > \cv_{c_2}(X \cup Y) + \ell$, then the given instance has no solution because the center district cannot be $\ell$-fair. 
 
 Having dealt with several special cases separately, we now present a precise characterization of yes-instances assuming that $ \vert C \vert >1$, $\ell>0$, $\smin = 1$, $\smax \le  \vert X \vert $, and $\cv_{c_1}(X) \le \cv_{c_2}(X \cup Y) + \ell$.
 Let $c^{\star}_1$ be the most frequent color in $X$. 
 Further, let $c^{\star}_2 \neq c^{\star}_1$ be the most frequent color in $X$ among colors $c'$ such that $\cv_{c'}(X\cup Y)\geq \cv_{c^{\star}_1}(X)-\ell$. 
 Note that $c^{\star}_2$ is always well-defined. 
 If $c^{\star}_1\neq c_1$, then by the definition of~$c_1$ it holds that $\cv_{c_1}(X\cup Y)\geq \cv_{c^{\star}_1}(X\cup Y) > \cv_{c^{\star}_1}(X)-\ell$. 
 On the contrary, if $c^{\star}_1=c_1$, then, by our assumption that $\cv_{c_1}(X) \leq \cv_{c_2}(X \cup Y) + \ell$, we have $\cv_{c_2}(X\cup Y)\geq \cv_{c^{\star}_1}(X)-\ell$.
 
 Let $b_l:=\max(0, \cv_{c_1}(X \cup Y)- \cv_{c_2}(X\cup Y)- \ell)$ and $b_u:=\max(0, \cv_{c^{\star}_1}(X)-\ell- \cv_{c^{\star}_2}(X))$.
 Then, there is a solution if and only if $k\in [b_l+1,  \vert Y \vert +1-b_u]$ and $k >  \vert X \vert  +  \vert Y \vert  - \smax$.
 
 We first prove the forward direction, i.e., that the existence of a solution implies that $k\in [b_l+1,  \vert Y \vert +1-b_u]$ and $k>  \vert X \vert + \vert Y \vert -\smax$. 
 For any solution, the center district contains at most $\cv_{c_2}(X \cup Y) + \ell$ vertices of color $c_1$ (otherwise it cannot be $\ell$-fair).
 Thus, at least $\cv_{c_1}(X \cup Y) - \cv_{c_2}(X \cup Y) - \ell$ vertices of color $c_1$ have to be part of a non-center district, which implies that $k > b_l$.
 Next, we show that $k \le  \vert Y \vert  + 1 - b_u$.
 The center district has at least $\cv_{c^{\star}_1}(X)$ vertices of color~$c^{\star}_1$.
 Since the center district is $\ell$-fair, it has at least $\cv_{c^{\star}_1}(X) -\ell$ vertices of color $c$ for some color $c\in C \setminus \{c^{\star}_1\}$. 
 As only $\chi_c(X)$ of the  $\chi_{c_1^{\star}}(X)-\ell$ vertices come from $X$, it follows that there are at least $\cv_{c^{\star}_1}(X)- \ell - \cv_{c}(X) $ vertices of color $c$ from $Y$ that are included in the center district. 
 As $c_2^{\star}$ is the most frequent color in $X$ among all colors occurring at least $c_{c_1^{\star}}(X)-\ell$ times, it follows that $\chi_{c_1^{\star}}(X)-\ell-\chi_{c_2^{\star}}(X)$ minimizes the expression from the previous sentence. 
 Using this, we can conclude that $k \le  \vert Y \vert  + 1 - b_u$.
 Finally, we show that $k >  \vert X \vert  +  \vert Y \vert  - \smax$.
 Since the center district contains at most $\smax$ vertices, at least $ \vert X \vert  +  \vert Y \vert  - \smax$ vertices are not part of the center district, which implies that $k >  \vert X \vert  +  \vert Y \vert  - \smax$.
 
 We now prove the backward direction, i.e., that $k\in [b_l+1,  \vert Y \vert +1-b_u]$ and $k >  \vert X \vert  +  \vert Y \vert  - \smax$ imply the existence of a solution.
 First, we show that if $k \in [b_l,  \vert Y \vert  + 1 - b_u]$, ignoring the size constraints, a solution always exists.
 We do this by first constructing two  ``extreme'' solutions where as many or as few vertices as possible are contained in the center district.
 Recall that by our assumption $\cv_{c_1}(X) \le \cv_{c_2}(X \cup Y) + \ell$.
 So we have $\cv_{c_1}(Y) = \cv_{c_1}(X \cup Y) - \cv_{c_1}(X) \ge \cv_{c_1}(X \cup Y) - \cv_{c_2}(X \cup Y) - \ell$, and hence $\cv_{c_1}(Y) \ge b_l$. 
 A solution consisting of $b_l+1$ districts exists: 
 Put all vertices from~$X\cup Y$ in one district and move $b_l$ vertices of color~$c_1$ from~$Y$ each in their own district (this is always possible by our above observation).
 Let $S$ be the center district in this solution. 
 $S$ is $\ell$-fair, as $\chi_{c_1}(S)\geq \chi_{c_2}(S)$, $\chi_{c_1}(S)=\chi_{c_1}(X\cup Y)-b_l\leq \chi_{c_2}(X\cup Y)-\ell=\chi_{c_2}(S)-\ell$, and $\chi_{c_2}(S)\geq \chi_c(S)$ for all $c\in C\setminus \{c_1\}$. 
 A solution consisting of $ \vert Y \vert +1-b_u$ districts is also guaranteed to exist: 
 Put the vertices of $X$ and $b_u$ vertices of color $c^{\star}_2$ from $Y$ into the center district; put all remaining vertices from $Y$ into their own district.
 Let $S'$ be the center district in this solution.
 $S'$ is $\ell$-fair, as $\chi_{c_1^{\star}}(S)-\ell= \chi_{c_1^{\star}}(X)-\ell\leq\chi_{c_2^{\star}}(X)+b_u=\chi_{c_2^{\star}}(S)\leq \chi_{c_1^{\star}}(S)$. 
 Note that from these two solutions, we can construct a solution consisting of $k$ districts for arbitrary $k\in [b_l+1,  \vert Y \vert +1-b_u]$. 
 For this, we start with the first solution and one by one move vertices from $Y \cap (S\setminus S')$ from the center district into their own district while maintaining that the center district is $\ell$-fair. 
 Such a vertex always exists:
  If there is a vertex $v$ of the most frequent color of the center district in $Y \cap (S\setminus S')$, then we move $v$ from $S$ into its own district (removing $v$ only increases the margin of victory by one if the center district is currently $0$-fair and we have $\ell>0$).
  Otherwise, we remove the remaining vertices from $Y \cap (S\setminus S')$ in 
  an arbitrary order and put each into its own district (doing so results in an $\ell$-fair center district, since 
  otherwise $S'$ would not be $\ell$-fair).
  If $k >  \vert X \vert  +  \vert Y \vert  - \smax$, then the above constructed solution for this $k$ satisfies the size constraints as it consists of a center district containing $ \vert X \vert + \vert Y \vert -(k-1)\leq \smax$ vertices where the inequality holds by our observation on $k$ .
\end{proof}

\Cref{le:stars} directly implies that \fairCD is linear-time solvable on stars, even if  the center vertex has a weight for each color, i.e., the center vertex represents multiple vertices.
\begin{corollary}
 \fairCD is linear-time solvable on stars.
\end{corollary}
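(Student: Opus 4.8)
The plan is to derive the corollary directly from \Cref{le:stars}. The structural fact I would lean on is that deleting the center vertex~$v$ of a star leaves only isolated leaves; consequently, in every partition into connected districts, exactly one district (the center district) contains~$v$, while all remaining districts are singletons, each consisting of a single leaf. Thus \fairCD on a star is nothing but the problem of \Cref{le:stars} in which no leaf is pre-assigned to the center district.

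Concretely, I would invoke \Cref{le:stars} with $X := \{v\}$ and $Y$ the set of all leaves. Since $|X| = 1$, the extra requirement of \Cref{le:stars}---that all vertices of~$X$ share a district---holds trivially, so the restricted problem it solves coincides exactly with \fairCD on the star. As that problem is decided in linear time, \fairCD on stars is linear-time solvable.

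Finally, to accommodate a weighted center (a single indivisible node~$v$ carrying a color-weight vector and thereby standing for several vertices), I would keep $X := \{v\}$ but interpret each color count $\cv_c(X)$ as the corresponding weight. Inspecting the algorithm behind \Cref{le:stars}, one sees that it manipulates only aggregate color counts such as $\cv_{c_1}(X)$, $\cv_{c_2}(X \cup Y)$, and the derived thresholds $b_l$ and $b_u$; each of these is obtained by summing weights and never requires enumerating the vertices that~$v$ represents. Hence the procedure still runs in time linear in the (compact) weighted input. The correctness is immediate from \Cref{le:stars}; the only point demanding a moment's care---and thus the main, albeit minor, obstacle---is checking that this weighted bookkeeping preserves the linear-time bound when the weights are encoded in binary.
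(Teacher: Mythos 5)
Your proposal is correct and matches the paper's argument exactly: the corollary is obtained by invoking \Cref{le:stars} with $X$ consisting of the center vertex alone and $Y$ the set of leaves, whereupon the side condition of \Cref{le:stars} becomes vacuous and the restricted problem coincides with \fairCD on the star. Your additional remark about the weighted center is also in line with what the paper claims alongside the corollary.
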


Lastly, we extend our polynomial-time algorithm for paths from \Cref{prop:path} to caterpillar graphs. A caterpillar graph is a tree where every vertex is either on a central path (spine) or a neighbor of a vertex on the central path. 
\begin{proposition} \label{th:cater}
  \fairCD on caterpillars can be solved in $\mathcal{O}(k\cdot n^3 )$ time.
\end{proposition}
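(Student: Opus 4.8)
The plan is to combine the path dynamic program of \Cref{prop:path} with the star characterization of \Cref{le:stars}. First I would fix notation: write the spine of the caterpillar as $v_1,\dots,v_m$ (a path with $m\le n$) and let $L_i$ denote the set of leaves attached to the spine vertex~$v_i$. The starting point is a description of the connected districts. Any connected subgraph of a caterpillar that contains \emph{no} spine vertex is a single leaf, since two distinct leaves are never adjacent; and any connected subgraph that \emph{does} contain spine vertices must contain a contiguous interval $v_a,\dots,v_b$ of the spine together with some of the leaves attached to $v_a,\dots,v_b$. Consequently, in any solution the districts containing spine vertices cut the spine into consecutive intervals $[1,j_1],[j_1{+}1,j_2],\dots$, while every leaf is either absorbed into the unique district of its spine vertex or forms its own singleton district. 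This is exactly the block structure underlying \Cref{prop:path}, except that each spine block now carries leaves.

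Next I would analyze a single spine block in isolation. Fix an interval $[a,b]$ and consider the sub-caterpillar on $\{v_a,\dots,v_b\}\cup L_a\cup\dots\cup L_b$ \emph{subject to the constraint that all of $v_a,\dots,v_b$ lie in one common district}. Since the spine vertices form a path they are connected, and each leaf is attached to one of them; hence contracting $v_a,\dots,v_b$ into a single weighted center vertex with color vector $\sum_{i=a}^{b}\cv(\{v_i\})$ yields precisely a weighted star with center $X$ and free leaves $Y=L_a\cup\dots\cup L_b$. By the weighted-center strengthening of \Cref{le:stars}, the set of integers $s$ for which this block admits a partition into $s$ connected $\ell$-fair districts with all spine vertices in one district is a (possibly empty) contiguous range $R(a,b)=[\,\underline{r}(a,b),\overline{r}(a,b)\,]$, and its endpoints are computable in time linear in the block size. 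Crucially, \Cref{le:stars} already accounts for the fairness of the singleton leaf districts (each has margin of victory~$1$, hence is admissible only when $\ell\ge 1$) as well as of the center district, so I may invoke it as a black box.

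With the per-block ranges in hand, I would run a prefix dynamic program mirroring \Cref{prop:path}. Let $T[i,t]$ be true iff $\{v_1,\dots,v_i\}\cup L_1\cup\dots\cup L_i$ can be partitioned into $t$ connected $\ell$-fair districts; set $T[0,0]$ true and read the answer off $T[m,k]$. The recurrence guesses the last spine block $[j{+}1,i]$ and the number $s$ of districts it contributes:
\[
  T[i,t] \;=\; \bigvee_{j=0}^{i-1}\ \bigvee_{s\in R(j+1,i)} T[j,\,t-s].
\]
Correctness follows from the block structure above: the districts meeting $v_i$ form the last spine block, the districts of earlier blocks together with their singleton leaves are counted by $T[j,t-s]$, and connectivity and $\ell$-fairness of the last block are certified by membership of $s$ in $R(j+1,i)$. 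For the running time, the ranges $R(a,b)$ over all $O(m^2)$ intervals are computed in $O(n^3)$ total time via \Cref{le:stars}, while $T$ has $O(nk)$ entries, each filled by ranging over $j\in[0,i-1]$ and over the at most $k$ values $s\in R(j+1,i)$, for $O(n^2k^2)$ time, which is within the claimed $O(k\cdot n^3)$ bound since $k\le n$.

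I expect the main obstacle to be the correct identification of a spine block as a weighted-star instance: verifying that forcing all spine vertices of a block into a single district loses no generality, and that the resulting achievable district counts form a contiguous interval. This is precisely what lets the path-style dynamic program glue the blocks together by summing one integer $s$ per block. Once this structural reduction to \Cref{le:stars} is in place, the remaining bookkeeping is routine.
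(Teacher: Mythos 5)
Your proposal is correct and follows essentially the same route as the paper: decompose the caterpillar into contiguous spine blocks, reduce each block (spine vertices forced into one district) to a weighted-center star handled by \Cref{le:stars}, and glue the blocks with a path-style prefix dynamic program as in \Cref{prop:path}. The only difference is presentational—you store the contiguous feasible range $R(a,b)$ of district counts per block, whereas the paper records the same information in a boolean table $A[i,j,t]$; both rest on the interval characterization already proved in \Cref{le:stars}.
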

\begin{proof}
  Let $G=(V,E)$ be the given caterpillar, let $(u_1, \dots, u_p)$ denote the spine (central path) of~$G$ and let $U=\{u_1,\dots , u_p\}$. 
  Moreover, for $i\in [p]$, let $U_i$ consist of $u_i$ and all vertices from $V\setminus U$ adjacent to $u_i$. 
  Note that each vertex from $V\setminus U$ is only adjacent to one vertex from $U$, as $G$ is in particular a tree. 
  We solve the problem by extending our algorithm for paths from \Cref{prop:path}. 
  
  As a first step, we introduce a table $A[i,j,t]$ for $i\leq j\in [n]$ and $t\in [k]$. 
  An entry $A[i,j,t]$ is set to true if there is a partition $(V_1, \dots, V_t)$  of $\bigcup_{i'\in [i,j]} U_{i'}$ into $t$ districts such that: 
  \begin{itemize}
   \item $V_1$ contains $u_{i'}$ for every $i' \in [i, j]$, and
   \item  $V_{t'}$ is $\ell$-fair, $ \vert V_{t'} \vert  \in [\smin, \smax]$, and $G[V_{t'}]$ is connected for every $t' \in [t]$.
  \end{itemize}
  We can fill table $A$ in $\mathcal{O}(k\cdot n^3)$ time using \Cref{le:stars} for each $A[i,j,t]$ (for this, it is necessary to slightly restructure $G[\bigcup_{i'\in [i,j]} U_{i'}]$ as a star $G'$ on $\bigcup_{i'\in [i,j]} U_{i'}$ with $u_i$ as the center vertex and $X=\{u_i,\dots, u_j\}$).
  Using $A$, we now apply dynamic programming. 
  To this end, we introduce a table $T[i,t]$ for $i\in [n]$ and $t\in [k]$. Entry~$T[i,t]$ is set to true if it is possible to partition the vertices from $\bigcup_{j\in [i]} U_j$ into $t$ connected districts that are $\ell$-fair and respect the size constraints. 
  We initialize the table by setting $T[i,1]$ to true if  $\bigcup_{j\in [i]} U_j$ is $\ell$-fair.
  Subsequently, for $t>1$, we update the table using:
  $$T[i,t]=\bigvee_{\substack{j\in [i-1] \wedge t',t''\in [t]: \\ t'+t''=t}} T[j,t'] \wedge A[j+1,i,t''].$$ 
	
  The reasoning behind this equality is that in a partitioning of $\bigcup_{j\in [i]} U_j$ into $t$ districts, there needs to be some $j\in [i-1]$ such that there is one district containing vertices $\{u_{j+1},\dots, u_i\}$. 
  However, not all vertices from $\bigcup_{t\in [j+1,i]} U_t$ need to be part of this district. 
  In fact, because of the connectivity constraints, there is some $t''$ such that one district contains vertices  $\{u_{j+1},\dots, u_i\}$ and all but  $t''-1$ vertices from $\bigcup_{t\in [j+1,i]} U_t$  and $t''-1$ districts consist of a single vertex from $\bigcup_{t\in [j+1,i]} U_t$ (such a partitioning respecting fairness and size constraints exists if and only if $A[j+1,i,t'']$ is true). 
  The remaining $t-t''$ districts then consists of vertices from  $\bigcup_{t\in [j]} U_t$ (such a partitioning respecting fairness and size constraints exists if and only if $T[j,t']$ is true).
  Table $T$ has $k\cdot n$ entries each computable in $\mathcal{O}(k \cdot n)$ 
  time. This leads to an overall running time of $\mathcal{O}(k \cdot n^3)$.  
\end{proof}

Note that a graph~$G$ has pathwidth one ($\pw(G) = 1$) if and only if $G$ is a disjoint union of caterpillars. 
By \Cref{prop:disjointunion} and \Cref{th:cater}, it follows that \fairCD is polynomial-time solvable on all graphs with pathwidth one (we later show in \Cref{co:pw2} that \fairCD is NP-hard on pathwidth-two graphs). 

\begin{corollary} \label{co:pw1}
 \fairCD on a graph~$G$ with $\pw(G) = 1$ is polynomial-time solvable. 
\end{corollary}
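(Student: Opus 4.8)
The plan is to derive the statement directly from the two preceding results together with the structural characterization of pathwidth-one graphs. First I would invoke the fact, recorded immediately before the corollary, that a graph $G$ satisfies $\pw(G) = 1$ exactly when $G$ is a disjoint union of caterpillars. This is the classical characterization that the connected components of a graph of pathwidth at most one are precisely the caterpillars (equivalently, trees in which deleting all leaves leaves a path, or the empty graph as a degenerate case); I would either cite this or give the short component-wise argument.

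Next I would set $\mathcal{G}$ to be the class of all caterpillars. By \Cref{th:cater}, \fairCD is solvable in $\mathcal{O}(k \cdot n^3)$ time on every member of $\mathcal{G}$, so the class $\mathcal{G}$ satisfies the hypothesis of \Cref{prop:disjointunion}. Applying \Cref{prop:disjointunion} then yields that \fairCD is polynomial-time solvable on the class $\mathcal{G}'$ obtained by taking disjoint unions of caterpillars.

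Finally, since $\mathcal{G}'$ is exactly the class of graphs with pathwidth one, this establishes the corollary. The only point that is not a pure citation is the characterization of pathwidth-one graphs as disjoint unions of caterpillars; everything else is a bookkeeping combination of \Cref{prop:disjointunion,th:cater}. Consequently there is no genuine obstacle here—the substantive work has already been carried out in proving those two results, and this corollary merely assembles them.
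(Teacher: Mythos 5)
Your proposal matches the paper's argument exactly: the paper also combines the characterization of pathwidth-one graphs as disjoint unions of caterpillars with \Cref{th:cater} and \Cref{prop:disjointunion}. Correct and essentially identical to the paper's proof.
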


\subsection{An XP-Algorithm for \texorpdfstring{$\mln$}{mln}} \label{sub:mln}
Now, we generalize the polynomial-time solvability on paths and cycles
to a larger graph class.
More precisely, we develop an XP-algorithm for the max leaf number ($\mln$).
Recall that the max leaf number of a connected graph $G$ is the maximum number of leaves over all spanning trees of~$G$. 
Notably, any path or cycle has $\mln = 2$.
In order to develop a polynomial-time algorithm for constant $\mln$, we use the notion of \emph{branches}.
(See \Cref{fig:branches} for an illustration.)

\begin{definition}
  A \emph{branch} in a graph is either a maximal path in which all inner vertices have degree two or a cycle  in which all but one vertex have degree two. 
\end{definition}

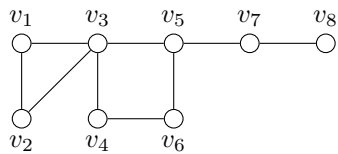
\begin{figure}
  \centering
  \begin{tikzpicture}[every node/.style={circle,draw,inner sep=0pt,minimum size=7pt}]
    \node at (0, 1) (v1) [label=above:$v_1$] {};
    \node at (0, 0) (v2) [label=below:$v_2$] {};
    \node at (1, 1) (v3) [label=above:$v_3$] {};
    \node at (1, 0) (v4) [label=below:$v_4$] {};
    \node at (2, 1) (v5) [label=above:$v_5$] {};
    \node at (2, 0) (v6) [label=below:$v_6$] {};
    \node at (3, 1) (v7) [label=above:$v_7$] {};
    \node at (4, 1) (v8) [label=above:$v_8$] {};

    \draw (v3) -- (v1) -- (v2) -- (v3);
    \draw (v3) -- (v4) -- (v6) -- (v5) -- (v3);
    \draw (v5) -- (v7) -- (v8); 
  \end{tikzpicture}
  \caption{An illustration of a graph with four branches: $(v_1, v_2, v_3), (v_3, v_5), \allowbreak (v_3, v_4, v_6, v_5), (v_5, v_7, v_8)$.}
  \label{fig:branches}
\end{figure}

Using a classical theorem of Kleitman and West~\cite{KW91}, 
Eppstein~\cite{Epp15} showed that any graph has at most $\mathcal{O}(\mln^2)$ 
branches.
For notational brevity, we assume that every branch $B$ that is a cycle has exactly one endpoint, namely, the vertex in $B$ with degree at least three in $G$ (if there is no such vertex in $B$, then we fix an arbitrary vertex as its endpoint).

\begin{theorem}\label{th:mln}
  \fairCD can be solved in $n^{\mathcal{O}(\mln^2)}$ time.
\end{theorem}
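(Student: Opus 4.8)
The plan is to exploit the decomposition of $G$ into $\mathcal{O}(\mln^2)$ branches (guaranteed by Eppstein's bound) and to brute-force the small amount of information that crosses branch boundaries. First I would assume $G$ is connected, handling the general case by applying \Cref{prop:disjointunion} to the connected components. Let $H$ be the set of branch endpoints, i.e.\ the vertices of degree $\neq 2$; the branches partition the remaining, degree-two vertices, so every vertex of $G$ lies either in $H$ or in the interior of a unique branch. The crucial structural observation is that an interior vertex of a branch is reachable only through that branch, so in any solution the intersection of a district with the interior of a branch is a contiguous sub-path. Consequently each branch is cut into a prefix attached to one endpoint's district, a suffix attached to the other endpoint's district, and a middle part that is carved into ``local'' districts avoiding $H$.

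Next I would enumerate two pieces of data: (i)~a partition of $H$ into groups, where each group is destined to form one district (these are the \emph{global} districts, that is, the ones meeting $H$); and (ii)~for every branch, the two cut positions describing how far the district of each endpoint extends into the branch. Since there are $\mathcal{O}(\mln^2)$ branches and $\mathcal{O}(n)$ choices per cut, the number of combined guesses is $n^{\mathcal{O}(\mln^2)}$; the number of partitions of $H$ is a function of $\mln$ only and is absorbed into this bound. For a cycle-branch, whose single endpoint may be entered from both sides, I would instead guess two prefix lengths, which is still $\mathcal{O}(n^2)$ per branch.

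For each guess I would verify feasibility in polynomial time. Each global district's color vector is the sum of the colors of its endpoints in $H$ together with the colors of all branch prefixes and suffixes it grabs (a branch fully contained in one group contributes its whole interior); I then check that this vector is $\ell$-fair. Connectivity of a global district reduces to a check on the auxiliary multigraph on $H$ obtained by contracting each branch to an edge: a group $S\subseteq H$ induces a connected district in $G$ if and only if $S$ is connected using only the branch-edges that the group grabs in full, precisely because a branch interior is adjacent to nothing but its one or two endpoints. Finally I would handle the district count: each middle segment is a fixed path, so by the table $T$ from \Cref{prop:path} I know the set of values $t$ into which it can be split into $\ell$-fair connected districts, and a simple subset-sum dynamic program over the $\mathcal{O}(\mln^2)$ branches tests whether the middle segments can be split so that the number of $H$-groups plus the total number of middle districts equals~$k$ (non-emptiness being automatic).

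The main obstacle, and the place where care is needed, is the aggregation step: a single global district collects vertices from many branches, so the branches cannot be solved independently and $\ell$-fairness can only be checked after summing all contributions of the district. Fixing the cut positions by brute force is exactly what decouples the branches — once the cuts are fixed, each district's color vector as well as the connectivity and counting conditions become explicit and independently checkable, while the $\mathcal{O}(\mln^2)$ bound on the number of branches keeps the total number of guesses within $n^{\mathcal{O}(\mln^2)}$.
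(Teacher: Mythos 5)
Your proposal is correct and follows essentially the same route as the paper's proof: decompose $G$ into its $\mathcal{O}(\mln^2)$ branches, guess the partition of the branch endpoints into global districts together with the two cut positions per branch, verify $\ell$-fairness of each global district, and reduce the leftover interiors to the path case via \Cref{prop:path,prop:disjointunion}. Your explicit connectivity check on the contracted multigraph and the subset-sum treatment of the middle segments are just more detailed renderings of steps the paper leaves implicit.
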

\begin{proof}
  Let $\mathcal{B}$ be the set of all branches of the given graph $G$ and let $X$ be the set of all endpoints of all branches.
  Suppose that there is a solution $\mathcal{V} = (V_1, \dots, V_k)$.
  Observe that there are naturally at most $ \vert X \vert $ subsets in~$\mathcal{V}$ containing at least one vertex from $X$.
  Without loss of generality, assume that $V_i$ contains at least one vertex from $X$ for $i \in [k']$ for some $k'\in [\min(k, \vert X \vert )]$, and let $X_i = X \cap V_i$.
  Let us now consider the relationship between a district~$V_i$ for $i\in [k']$ and a branch $B = (v_1, \dots, v_l) \in \mathcal{B}$ (where for all $i\in [l-1]$, $v_i$ and $v_{i+1}$ are adjacent in $G$ and if $B$ is a cycle, then $v_1 = v_l$).
  Since $V_i$ induces a connected subgraph in $G$, the following holds:
  \begin{itemize}
    \item
      If $v_1, v_l \notin X_i$, then $V_i$ and $B$ are disjoint:
      Since $V_i$ is connected and contains at least one vertex from $X \setminus \{v_1 ,v_l \}$, $V_i$ contains no ``inner'' vertices from $B$.
    \item
      If $v_1 \in X_i$ and $v_l \notin X_i$, then there is an integer $j \in [l - 1]$ such that $V_i \cap \{ v_1, \dots, v_l \} = \{ v_1, \dots, v_j \}$ (for $v_1 \notin X_i$ and $v_l \in X_i$, the situation is symmetric).
    \item
      If $v_1, v_l \in X_i$ for some $i\in [k]$, then there are two integers $j, j' \in [l]$ such that $V_i \cap \{ v_1, \dots, v_l \} = \{ v_1, \dots, v_j \} \cup \{v_{l - j' + 1}, \dots, v_l \}$.
      Note that $V_i$ contains all vertices of $B$ when $j + j' = l$. 
  \end{itemize}
	For all other districts~$V_i$, $i\in [k'+1,k]$, it holds that there is a branch $(v_1, \dots, v_l) \in \mathcal{B}$ with $V_i\subseteq \{v_2,\dots , v_{l-1}\}$. 

  Using these observations, our algorithm proceeds as follows.
  We iterate over all possible combinations of the following (note that the number of combinations is $n^{\mathcal{O}( \vert \mathcal{B} \vert )} = n^{\mathcal{O}(\mln^2)}$):
  \begin{itemize}
    \item
      An integer $k' \in [\min(k, \vert X \vert )]$.
    \item
      A partition $\mathcal{X} = (X_1, \dots, X_{k'})$ of $X$ into $k'$ subsets.
    \item
      For every branch $B = (v_1, \dots, v_l)\in \mathcal{B}$, two integers $j_B, j_B' \in [l]$ with $j_B+j_B' \leq l$.
  \end{itemize}
  Using these guesses, we can exactly determine the districts intersecting $X$: For every $i \in [k']$ and every branch $B =(v_1,\dots,v_l)\in \mathcal{B}$, we define $V_{i, B}$ as follows:
  \begin{itemize}
    \item
      If $v_1, v_l \notin X_i$, then $V_{i, B} := \emptyset$.
    \item
      If $v_1 \in X_i$ and $v_{l} \notin X_i$, then $V_{i, B}:=\{v_1,\dots, v_{j_B}\}$. 
      Symmetrically, if $v_1 \notin X_i$ and $v_{l} \in X_i$, then $V_{i,B}:=\{v_{l-j'_B+1},\dots, v_{l}\}$.
    \item
      If $v_1, v_l \in X_i$, then $V_{i, B}=\{v_1,\dots, v_{j_B}\}\cup \{v_{l-j'_B+1},\dots, v_{l}\}$.
  \end{itemize}
  For $i\in [k']$, let $V_i := \bigcup_{B \in \mathcal{B}} V_{i, B}$.
  We check whether the set $V_i$ is $\ell$-fair and respects the size constraints for every $i \in [k']$.
  If there is a $V_i$ that is not $\ell$-fair or violates the size constraints, then we proceed to the next combination.
  Otherwise, it remains to determine whether the vertices $V \setminus \bigcup_{i \in [k']} V_i$ can be partitioned into $k - k'$ $\ell$-fair districts that respect the size constraints.
  Since $G[V \setminus \bigcup_{i \in [k']} V_i]$ is a disjoint union of paths 
  as all endpoints of branches are contained in $\bigcup_{i \in [k']} V_i,$ this can be 
  done in polynomial time by \Cref{prop:disjointunion,prop:path}.
\end{proof}

We leave it open whether \fairCD parameterized by $\mln$ is fixed-parameter tractable or W[1]-hard.

\section{\textsc{FCD} on Trees and Tree-like Graphs} \label{se:trees}
After having seen in \Cref{se:paths} that \fairCD is polynomial-time solvable on paths, cycles, stars, and caterpillars, we now turn to trees.
In \Cref{sub:wtrees}, we prove that these polynomial-time results do not extend to trees. 
In particular, we prove that \fairCD on trees is NP-hard and W[1]-hard parameterized by $ \vert C \vert +k$. 
We complement these hardness results with an XP-algorithm for \fairCD on trees parameterized by $ \vert C \vert $ and another XP-algorithm for parameter~$k$. 
In fact, both XP-algorithms further extend to tree-like graphs: 
In \Cref{se:tw}, we prove that \fairCD parameterized by the treewidth of the given graph plus $ \vert C \vert $ is in XP. 
Subsequently, in \Cref{se:fen,se:fvn}, we consider the feedback edge number ($\fen$) and the feedback vertex number ($\fvn$), which are both alternative measures for the tree-likeness of a graph (the treewidth of a graph can be upper bounded in a function of $\fen$ and in a function of $\fvn$). Thus, the XP algorithm for $\twidth+~ \vert C \vert $ extends to the parameter combinations $\fvn+~ \vert C \vert $ and $\fen+~ \vert C \vert $, which is why we focus on the parameter combinations $\fvn+~k$ and $\fen+~k$ here. 
We prove that \fairCD parameterized by $\fen+~k$ is in XP and we show that there is (presumably) no such result for the treewidth and the feedback vertex number ($\fvn$) by showing that \fairCD is NP-hard even if $\fvn=1$, $\twidth=2$, and $k=2$. 

\subsection{W[1]-Hardness on Trees} \label{sub:wtrees}
In this subsection, we show that in contrast to paths, cycles, stars, and caterpillars,  
\fairCD on trees is NP-hard even without size constraints. Simultaneously, we show that \fairCD parameterized by the number $k$ of districts and the number $ \vert C \vert $ of colors is W[1]-hard on trees.
To this end, we present a parameterized reduction from the following version of 
\textsc{Grid Tiling} which is NP-hard and W[1]-hard  with respect to $t$ \cite{Mar07} (in 
this subsection, all indices are taken modulo~$t$): 

\decprob{Grid Tiling}{A collection $\mathcal{S}$ of $t^2$ (tile) sets $S^{i,j}\subseteq [m]\times [m]$, $i,j\in [t]$, where each $S^{i,j}$ consists of $n$~pairs and the first entries of all pairs (tiles) from $S^{i,j}$ sum up to the same number~$X$ for all $i,j\in [k]$ and the second entries of all pairs from $S^{i,j}$ sum up to the same number~$Y$ for all $i,j\in [k]$¸.}
{Can one choose one tile $(x^{i,j},y^{i,j})\in S^{i,j}$ for each $i,j\in [t]$ such that $x^{i,j}=x^{i+1,j}$ and $y^{i,j}=y^{i,j+1}$?}
Without loss of generality, we assume that $n>2$.

The general idea of the reduction is as follows.
Each solution to the constructed \fairCD instance has a \emph{center district}. 
The center district contains some large number~$Z$ of vertices of two ``dummy'' colors $c$ and $c'$.
The instance is constructed such that in every solution, vertices of any fixed color can appear at most $Z$ times in the center district. 
By definition, each tile $(x, y)$ belongs to one of $t^2$ tile sets $S^{i, j} \in \mathcal{S}$.
We construct a star $T_{x, y}^{i, j}$ for each tile such that for each tile set $S^{i, j} \in \mathcal{S}$, all but exactly one star $T_{x, y}^{i, j}$ need to be contained in the center district.
Thus, the center district (respectively its complement) basically encodes a selection of one tile from each tile set. 
Moreover, we construct the \fairCD instance in such a way that for two stars from two ``adjacent'' tile sets the respective first or second entries of the tiles need to match; otherwise the number of vertices of some color in the central district will exceed $Z$.  

\paragraph{Construction.} 
Let $(\mathcal{S},t,m,n,X,Y)$ be an instance of $\textsc{Grid Tiling}$. We construct an instance of \fairCD as follows. 

First of all, we set $\ell=0$, $\smin = 1$, $\smax = \infty$, and $k=t^2+1$.
For each $i,j\in [t]$, we introduce three distinct colors $b_{i,j}$, $d_{i,j}$, and $c_{i,j}$.
Moreover, we introduce three distinct colors $c$, $c'$, and $c^{\star}$. 
Now, we fix some constants which we use later. 
Let $W:=5n(t^2+t)+1$, $Z:=2(n-1)\cdot 5m W$, $f(i,j):=i\cdot t+j$, and $g(i,j):=t^2+t+i\cdot t+j$.
Note that this implies that $W\geq 2.5\cdot n \cdot g(i,j)$ and $W\geq 2.5\cdot n \cdot f(i,j)$ for all $i,j\in [t]$. 

We are now ready to construct the vertex-colored graph $G=(V,E)$.
We start by introducing a \emph{center vertex} $v_{\text{center}}$ of color $c^{\star}$. 
In a solution, we call the district containing $v_{\text{center}}$ the \emph{center district}. 
We add $Z$ vertices of color $c$ and $Z$ vertices of color $c'$, all of which are only adjacent to $v_{\text{center}}$. 

For each $i,j\in [t]$ and $(x,y)\in S^{i,j}$, we construct a star $T^{i, j}_{x, y}$ and connect its center to $v_{\text{center}}$.
We color the center of $T^{i, j}_{x, y}$ in $c^{\star}$.
Moreover, $T^{i, j}_{x, y}$ has the following leaves:
\begin{itemize}
 \item $\frac{Z}{2(n-1)}+W\cdot x -f(i,j)$ vertices of color $d_{i,j}$,
\item  $\frac{Z}{2(n-1)}-W\cdot x -f(i',j)$ vertices of color $d_{i',j}$ for $i' := i+1 \bmod k$,
\item $\frac{Z}{2(n-1)}+W\cdot y -g(i,j)$ vertices of color $b_{i,j}$,
\item $\frac{Z}{2(n-1)}-W\cdot y -g(i,j')$ vertices of color $b_{i,j'}$ for $j' := j + 1 \bmod k$, and
\item  $\max(\frac{Z}{2(n-1)}+W\cdot x-f(i,j),\frac{Z}{2(n-1)}+W\cdot y-g(i,j))$ vertices of color $c_{i,j}$.
\end{itemize}
Observe that the constructed star $T_{x,y}^{i,j}$ is $0$-fair, as the number of occurrences of $c_{i,j}$ matches the number of occurrences of the otherwise most frequent color. 
This concludes the construction.
\paragraph{Proof of Correctness.}  We start with a simple observation on the total number of vertices of some of the colors:
\begin{observation}\label{o:2} \label{o:1}
 For each $i,j\in [t]$, it holds that $\cv_{d_{i,j}}(V)= \frac{n}{n - 1} Z-2nf(i,j)$ and $\cv_{b_{i,j}}(V)= \frac{n}{n - 1} Z-2ng(i,j)$.
 For all $i,j\in [k]$, it holds that $\cv_{c_{i,j}}(V)\leq Z$.
\end{observation}
\begin{proof}
For each $i,j\in [t]$, vertices of color $d_{i,j}$ occur in $2n$ different stars, that is, in all stars corresponding to tiles from $S^{i,j}$ and $S^{i-1,j}$. 
As the first entries of all tiles from $S^{i,j}$ and of all tiles from $S^{i-1,j}$ sum up to $X$, it follows that 
\begin{align*}
  \cv_{d_{i,j}}(V)
  &= \sum_{(x, y) \in S^{i, j} } \frac{Z}{2(n - 1)} + W x - f(i, j) + \sum_{(x, y) \in S^{i - 1, j}} \frac{Z}{2(n - 1)} - Wx - f(i, j) \\
  &= \left(\frac{n}{2(n - 1)} Z + WX - n f(i, j)\right) + \left(\frac{n}{2(n - 1)} Z - WX - n f(i, j)\right) \\
  &= \frac{n}{n - 1} Z-2nf(i,j). 
\end{align*}
The same reasoning applies for all colors $b_{i,j}$ proving that $\cv_{b_{i,j}}(V)= \frac{n}{n - 1} Z-2ng(i,j)$. 
Lastly, for some $i,j\in [t]$,  vertices of color $c_{i,j}$ appear only in stars corresponding to tiles from $S^{i,j}$, each of them containing at most $\frac{Z}{2(n-1)}+W\cdot m$ such vertices. Thus, the number of vertices of color~$c_{i,j}$ is upper-bounded by $n\cdot(\frac{Z}{2(n-1)}+W\cdot m)$. 
As $W\cdot m= \frac{1}{8(n-1)}Z$ and $n>2$, this is smaller than $Z$ from which 
$\cv_{c_{i,j}}(V)\leq Z$ follows.
\end{proof}

Using \Cref{o:2}, we now prove the forward direction of the correctness of the construction.
\begin{lemma} \label{le:t-fo}
 If the given \textsc{Grid Tiling} instance is a yes-instance, then the constructed \fairCD instance is a yes-instance.
\end{lemma}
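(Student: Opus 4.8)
The plan is to construct an explicit $0$-fair connected partition directly from a solution of the \textsc{Grid Tiling} instance and verify its validity. Let $\{(x^{i,j},y^{i,j})\in S^{i,j}\}_{i,j\in[t]}$ be a solution, so that $x^{i,j}=x^{i+1,j}$ and $y^{i,j}=y^{i,j+1}$ for all $i,j\in[t]$. For each $i,j\in[t]$ I would let the whole star $T^{i,j}_{x^{i,j},y^{i,j}}$ (its center together with all its leaves) form one district $D_{i,j}$, and collect all remaining vertices --- the center vertex $v_{\text{center}}$, the $2Z$ attached vertices of colors $c$ and $c'$, and every non-selected star --- into a single center district $V_0$. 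This yields $t^2+1=k$ non-empty districts. Connectivity is immediate: each $D_{i,j}$ is a star, and $V_0$ is connected through $v_{\text{center}}$, to which every retained star center and every $c/c'$-vertex is adjacent. It remains to check that every district is $\ell$-fair, i.e.\ $0$-fair.

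For the star districts I would use the way the number of $c_{i,j}$-leaves was defined. Writing $a_d:=\cv_{d_{i,j}}(D_{i,j})$ and $a_b:=\cv_{b_{i,j}}(D_{i,j})$ for the counts coming from the first and third leaf types, the number of $c_{i,j}$-leaves equals $\max(a_d,a_b)$. Since $Wx-f(i,j)>0$ and $Wy-g(i,j)>0$ for every tile (as $W=5n(t^2+t)+1$ dominates all values of $f$ and $g$ while $x,y\ge1$), the $d_{i,j}$- and $b_{i,j}$-counts strictly exceed $\tfrac{Z}{2(n-1)}$, whereas the $d_{i',j}$- and $b_{i,j'}$-counts stay strictly below it, and the single $c^{\star}$-center is negligible. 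Hence the largest color count in $D_{i,j}$ is $\max(a_d,a_b)$, attained by $c_{i,j}$ and simultaneously by whichever of $d_{i,j},b_{i,j}$ realizes the maximum; the two largest entries coincide and $\MOV(D_{i,j})=0$.

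The crux is the $0$-fairness of the center district $V_0$, where the matching conditions enter. Here $\cv_c(V_0)=\cv_{c'}(V_0)=Z$, so it suffices to show that no color occurs more than $Z$ times in $V_0$. Color $c^{\star}$ occurs $1+t^2(n-1)\le Z$ times. For $d_{i,j}$ I would start from $\cv_{d_{i,j}}(V)=\tfrac{n}{n-1}Z-2nf(i,j)$ (\Cref{o:2}) and subtract the $d_{i,j}$-contributions of the two selected stars that carry this color, namely $T^{i,j}_{x^{i,j},y^{i,j}}$ (contributing $\tfrac{Z}{2(n-1)}+Wx^{i,j}-f(i,j)$) and $T^{i-1,j}_{x^{i-1,j},y^{i-1,j}}$ (contributing $\tfrac{Z}{2(n-1)}-Wx^{i-1,j}-f(i,j)$), which gives
$$\cv_{d_{i,j}}(V_0)=Z-(2n-2)f(i,j)-W\,(x^{i,j}-x^{i-1,j}).$$
The matching condition $x^{i,j}=x^{i+1,j}$ (applied to the pair $(i-1,j)$) yields $x^{i-1,j}=x^{i,j}$, so the $W$-term vanishes and $\cv_{d_{i,j}}(V_0)=Z-(2n-2)f(i,j)\le Z$. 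An entirely symmetric computation, using $\cv_{b_{i,j}}(V)=\tfrac{n}{n-1}Z-2ng(i,j)$ and the matching condition $y^{i,j}=y^{i,j-1}$, gives $\cv_{b_{i,j}}(V_0)=Z-(2n-2)g(i,j)\le Z$. Finally $\cv_{c_{i,j}}(V_0)\le\cv_{c_{i,j}}(V)\le Z$ by \Cref{o:2}. Thus every color count in $V_0$ is at most $Z=\cv_c(V_0)=\cv_{c'}(V_0)$, the two most frequent colors tie, and $\MOV(V_0)=0$.

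The main obstacle is exactly this center-district computation: it is the cancellation of the $W$-terms that forces the first (resp.\ second) entries of selected tiles from adjacent tile sets to agree. Because $W$ is chosen far larger than every value of $f$ and $g$, any mismatch would push some $d_{i,j}$- or $b_{i,j}$-count above $Z$ and destroy $0$-fairness, whereas the \textsc{Grid Tiling} matching conditions are precisely what make all these counts drop to at most $Z$. I expect the remaining bookkeeping --- identifying which two stars carry a given color and checking that their subtracted contributions are the stated ones --- to be routine given \Cref{o:2}.
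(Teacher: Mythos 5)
Your proposal is correct and follows essentially the same route as the paper's proof: the selected stars become singleton districts, everything else forms the center district, and $0$-fairness of the center district is established by showing every color count is at most $Z$, with the matching conditions $x^{i,j}=x^{i-1,j}$ and $y^{i,j}=y^{i,j-1}$ cancelling the $W$-terms exactly as in the paper's computation via \Cref{o:2}. Your explicit verification that each star is $0$-fair (via the $\max(a_d,a_b)$ tie with $c_{i,j}$) and your count for $c^{\star}$ are slightly more detailed than the paper, which asserts these points, but the argument is the same.
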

\begin{proof}
 Let $S=\{(x^{i,j},y^{i,j})\in S^{i,j} \mid i,j \in [t]\}$ be a solution for the given \textsc{Grid Tiling} instance. 
From this we construct a solution of the \fairCD instance as follows. 
For each $(x^{i,j},y^{i,j})\in S$, we create a separate district and put into this district all vertices from the star $T_{x,y}^{i,j}$ corresponding to $(x^{i,j},y^{i,j})$. We put all other vertices in the center district. 
Note that this construction respects the given number $k=t^2+1$ of districts and that all districts are by construction non-empty and connected. 

It remains to argue that all created districts are $0$-fair. 
Since every star is $0$-fair by construction, all non-center districts are $0$-fair.
For the center district, note that it contains $Z$ vertices of color $c$ and $Z$ vertices of color $c'$. 
Thus, it is sufficient to argue that there is no color such that the number of its occurrences in the center district exceeds~$Z$. 
For color~$c_{i,j}$ this directly follows from \Cref{o:1}. 
Next we consider color~$d_{i,j}$ for some $i,j\in[t]$.
All stars with some vertices of color~$d_{i, j}$ are part of the center district, except for the one corresponding to $(x^{i,j},y^{i,j})$ and the one corresponding to $(x^{i-1,j},y^{i-1,j})$.
Since $S$ is a solution, we have $x^{i-1,j}=x^{i,j}$.
Consequently, exactly $\frac{Z}{n-1}-2f(i,j)$ vertices of color~$d_{i,j}$ are excluded from the center district. 
Now it follows from \Cref{o:2} that the number of vertices of color~$d_{i, j}$ in the center district is at most~$Z$. 
An analogous argument also holds for $b_{i,j}$ for $i,j\in [t]$, which 
concludes the proof.
\end{proof}

It remains to prove the correctness of the backward direction.
To do this, we first observe that for every star, all of its vertices have to belong to the same district. 
Subsequently, we prove that the center district can contain at most $Z$ vertices of each color. 
We then show that in order to respect this bound, for each tile set, exactly one star corresponding to a tile from this set must be excluded from the center district. 
Again exploiting the fact that every color has at most $Z$ occurrences in the center district, we show that those excluded tiles form indeed a solution to the given \textsc{Grid Tiling} instance.
We start by observing that the vertices from one star need to be part of the same district.
\begin{observation} \label{ob:1}
 For each tile $(x,y)\in \mathcal{S}$, all vertices from the corresponding star need to be part of the same district in a solution to the constructed \fairCD instance.
\end{observation}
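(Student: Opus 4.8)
The plan is to exploit the tree structure of the constructed graph~$G$ together with the constraint $\ell = 0$. The key structural fact I would record first is that, in~$G$, every leaf of a star $T^{i,j}_{x,y}$ is adjacent only to that star's center, while the star's center is adjacent only to $v_{\text{center}}$ and to its own leaves. In particular, a star leaf has exactly one neighbour in the whole graph, namely its star center.

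I would then fix an arbitrary solution, that is, a partition of~$V$ into $k$ non-empty connected districts each with margin of victory at most $\ell = 0$, and establish the crucial sub-claim that no district can be a singleton. Indeed, for a single vertex~$u$ the colour vector $\cv(\{u\})$ has one entry equal to~$1$ and all remaining entries equal to~$0$; since $|C|>1$ here, we get $\MOV(\{u\}) = 1 > 0 = \ell$, contradicting $0$-fairness. (The same reasoning shows, more generally, that no district may be monochromatic.)

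With this in hand, I would argue by contradiction: suppose some leaf~$u$ of a star with center~$w$ lies in a district~$D$ with $w \notin D$. Because the only neighbour of~$u$ in~$G$ is~$w$, the vertex~$u$ is isolated in the induced subgraph $G[D]$. Connectivity of $G[D]$ then forces $D = \{u\}$, contradicting the sub-claim above. Hence every leaf of the star lies in the same district as its center~$w$, and therefore the entire star (its center together with all of its leaves) is contained in a single district, which is exactly the statement.

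I do not anticipate a genuine obstacle: the observation follows immediately from connectivity once we note that $\ell = 0$ forbids singleton districts. The only point deserving a little care is to phrase the isolation argument so that it simultaneously rules out a district that would contain several leaves but not~$w$ (in which case $G[D]$ is disconnected) and a district containing a lone leaf (a singleton, hence not $0$-fair); both cases are handled uniformly by observing that $u$ is isolated in $G[D]$ whenever $w \notin D$.
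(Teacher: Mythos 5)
Your proposal is correct and follows essentially the same route as the paper: since $\ell=0$ rules out singleton districts (a lone vertex has margin of victory $1$), and each star leaf's only neighbour is its star's center, connectivity forces every leaf into its center's district. Your write-up merely makes explicit the isolation-in-$G[D]$ step that the paper leaves implicit.
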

\begin{proof}
	As we set $\ell=0$ in the constructed \fairCD instance, there cannot exist a district containing just a single vertex. 
Thus, all vertices from a star need to belong to the same district as the 
center of the star.
\end{proof}

Using this observation, we make some more involved arguments dealing with the possible number of vertices of a color in the center district in a solution.
In the following, let $V_{\text{center}}$ denote the center district in a solution to the constructed \fairCD instance. We make this observation in order to ensure that no two different colors appear the same number of times and in particular more than $Z$ times in the center district.
\begin{lemma} \label{le:3}
  For each $i, j \in [t]$, if $\cv_{d_{i, j}}(V_{\text{center}}) \ge Z$, then the number $a$ of stars $T^{i, j}_{x, y}$ with some vertex of color $d_{i,j}$ that are not part of the center district is at most two.
  In particular, it holds that $\cv_{d_{i, j}}(V_{\text{center}}) = (\frac{n}{n-1} Z- 2n f(i, j)) - a (\frac{Z}{2(n-1)} - f(i,j)) + W q$ for some $q \in [-2m, 2m]$.
\end{lemma}
\begin{proof}
  Let us consider some $i,j\in [t]$.
  By \Cref{ob:1}, for every star $T_{x, y}^{i, j}$, all the vertices from~$T_{x, y}^{i, j}$ are in $V_{\text{center}}$ or no vertex from $T_{x, y}^{i, j}$ is in $V_{\text{center}}$. 

  By construction, every star with some vertex of color $d_{i,j}$ contains $\frac{Z}{2(n-1)}+W\cdot x-f(i,j)$ vertices of this color for some $x \in [-m, m]$.
  So as there are $\frac{n}{n-1}Z-2nf(i,j)$ vertices of color $d_{i,j}$ (\Cref{o:1}), if $a \ge 3$, then the number of vertices of color $d_{i,j}$ in the center district is at most $Z-\frac{Z}{2(n-1)}+3mW$. By the definition of $Z$ this is smaller than $Z$.

  Thus, we obtain $a \le 2$.
  Since there are in total $\frac{n}{n-1} Z- 2n f(i, j)$ vertices of color 
  $d_{i, j}$ by \Cref{ob:1}, the lemma holds.
\end{proof}

Analogously, we can prove similar bounds for the number of vertices of color $b_{i,j}$ for some $i,j\in [t]$ in the center district.
\begin{lemma} \label{le:4}
  For each $i, j \in [t]$, if $\cv_{b_{i, j}}(V_{\text{center}}) \ge Z$, then the number $a$ of stars $T^{i, j}_{x, y}$ with some vertex of color $b_{i,j}$ that are not part of the center district is at most two.
  In particular, it holds that $\cv_{b_{i, j}}(V_{\text{center}}) = (\frac{n}{n-1} Z - 2n g(i, j)) - a (\frac{Z}{2(n-1)} - g(i,j)) + W v$ for some $v \in [-2m, 2m]$.
\end{lemma}

Using these two lemmas, we can now prove that there are at most $Z$ vertices of the same color in the center district in a solution to the constructed \fairCD instance. 
\begin{lemma} \label{co:1}
  For every color $q$, $\cv_q(V_{\text{center}}) \le Z$.
\end{lemma}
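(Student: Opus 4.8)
Since the \fairCD instance is assumed to be a yes-instance and $\ell = 0$, the center district $V_{\text{center}}$ is $0$-fair, so its two most frequent colors occur equally often. The $Z$ leaves of color $c$ and the $Z$ leaves of color $c'$ attached to $v_{\text{center}}$ have no neighbor other than $v_{\text{center}}$, and since no singleton district can exist when $\ell=0$ (cf.\ \Cref{ob:1}) they all lie in $V_{\text{center}}$; hence $\cv_c(V_{\text{center}}) = \cv_{c'}(V_{\text{center}}) = Z$ and the largest color frequency in $V_{\text{center}}$ is at least $Z$. As $V_{\text{center}}$ is $0$-fair, its two largest frequencies coincide, so it suffices to prove that no two distinct colors can have the \emph{same} frequency strictly above $Z$: then the largest frequency cannot exceed $Z$ (otherwise the second largest would equal it and also exceed $Z$), which is exactly the claim.

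\textbf{Disposing of the harmless colors.} I would first rule out every color that cannot even reach $Z+1$. Colors $c$ and $c'$ occur exactly $Z$ times; color $c^{\star}$ occurs only on $v_{\text{center}}$ and the $t^2 n$ star centers, which is far below $Z$ by the choice $Z = 8m(n-1)W$; and each $c_{i,j}$ occurs at most $Z$ times in all of $G$ by \Cref{o:1}. None of these can take part in a tie above $Z$. The only colors whose frequency may exceed $Z$ are the $d_{i,j}$ and $b_{i,j}$, and for these \Cref{le:3,le:4} describe the frequency, whenever it is at least $Z$, in the explicit form $(2n-a)(D - f(i,j)) + Wq$ (respectively $(2n-a)(D - g(i,j)) + Wv$), where $D := \tfrac{Z}{2(n-1)} = 4mW$, $a \in \{0,1,2\}$, and $q,v$ are integers of small absolute value.

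\textbf{The separation argument (the crux).} The three relevant quantities live on three well-separated scales: $D = 4mW$ dominates $W$, which in turn dominates every offset $f(i,j), g(i,j)$, because $W = 5n(t^2+t)+1$ while all offsets lie in $\{t+1, \dots, 2t^2+2t\}$ and, crucially, are pairwise distinct (the $f$-values and the $g$-values occupy disjoint ranges). I would compare the frequencies of two distinct $d$/$b$-colors that both exceed $Z$. If their leading multipliers $(2n-a)$ agree, equality of the two frequencies reduces to $(2n-a)\cdot(\text{offset}' - \text{offset}) = W\cdot(\text{integer})$, whose left-hand side is a \emph{nonzero} integer of absolute value below $W$ while the right-hand side is either $0$ or at least $W$; this forces the two small integers to agree, hence the two offsets to agree, contradicting their distinctness. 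If the multipliers differ, they differ by $1$ or $2$, so the two $D$-scale terms differ by at least $D = 4mW$, a gap that the combined slack of the $W$-terms and offset-terms cannot close. Hence distinct $d$/$b$-colors never realise a common value above $Z$.

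Putting this together, no two distinct colors have equal frequency above $Z$ in $V_{\text{center}}$, so $0$-fairness forces the maximum frequency down to $Z$, proving the lemma. The delicate point is precisely the unequal-multiplier case: there the separation must exploit the \emph{tight} range of the fine integers $q,v$ (essentially $|q|,|v|\le m$, reflecting that at most two stars per color are dropped, each carrying a coordinate in $[m]$) to guarantee that the $D$-scale gap of size $4mW$ strictly exceeds the total contribution of the lower-scale terms. Verifying this inequality is a routine but careful constant-chase, and it is exactly what the choices $W = 5n(t^2+t)+1$ and $Z = 8m(n-1)W$ are engineered to make go through.
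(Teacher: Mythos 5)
Your proposal is correct and follows essentially the same route as the paper's proof: reduce via $0$-fairness to showing that no two distinct colors can tie at a frequency above $Z$, rule out $c$, $c'$, $c^{\star}$, and the $c_{i,j}$ by counting, and then use the explicit form from \Cref{le:3,le:4} together with the three-scale separation ($D$ versus $W$ versus the offsets $f,g$) and a mod-$W$ argument to derive a contradiction. The paper's write-up is terser on the unequal-multiplier case (it simply asserts the right-hand side is smaller than $\frac{Z}{2(n-1)}$), whereas you correctly flag that closing that case requires the refined bound on $q,v$ coming from the number of dropped stars; this is the same argument, just with the delicate constant-chase made explicit.
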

\begin{proof}
  We claim that for two distinct colors $q, q'$ with $\cv_q(V_\text{center}), \cv_{q'}(V_\text{center}) > Z$, it holds that $\cv_q(V_\text{center}) \ne \cv_{q'}(V_\text{center})$.
  The statement of the lemma directly follows from the claim, as $V_{\text{center}}$ needs to be $0$-fair.

  Assume for a contradiction that $\cv_q(V_\text{center}) = \cv_{q'}(V_\text{center}) > Z$ for some colors $q \ne q'$.
  Since there are at most $Z$ vertices of color $c,c',c^{\star},c_{i,j}$ for $i,j\in [t]$, $q$ and $q'$ are among the colors $b_{i, j}$ and $d_{i, j}$ for $i, j \in [t]$.
  By \Cref{le:3,le:4} and as it holds that $\cv_q(V_\text{center}) = \cv_{q'}(V_\text{center})$, from this it follow that $2n h(i, j) + a\big(\frac{Z}{2(n - 1)} - h(i, j)\big) + Wv = 2n h'(i', j') + a'\big(\frac{Z}{2(n - 1)} - h'(i', j')\big) + Wv'$, for some $a, a' \in \{ 0, 1, 2 \}$, $v, v' \in [-2m, 2m]$, $h(i, j) \in \{ f(i, j), g(i, j) \}$, and $h'(i', j') \in \{ f(i', j'), g(i', j') \}$.
  Rewriting yields that $(5m(a - a') + v - v') W = (2n - a') h'(i', j') - (2n - a) h(i, j)$.
  Since $W$ is sufficiently large, the absolute value of the right hand side does not exceed $W$.
  Thus, we have $5m(a - a') + v - v' = 0$ and $(2n - a) h(i, j) = (2n - a') h'(i', j')$.
  The first equation implies $a = a'$ since $v - v' \in [-4m, 4m]$.
  Cancelling out $2n - a > 0$ in the second equation, we obtain $h(i, j) = h'(i', j')$.
  Now we have $q = q'$, as $f(\tilde{i},\tilde{j})\neq g(\tilde{i}',\tilde{j}')$ for all $\tilde{i},\tilde{j},\tilde{i}',\tilde{j}'\in [t]$, $f(\tilde{i},\tilde{j})= f(\tilde{i}',\tilde{j}')$ only if $\tilde{i}=\tilde{i}'$ and $\tilde{j}=\tilde{j}'$, and $g(\tilde{i},\tilde{j})= g(\tilde{i}',\tilde{j}')$ only if $\tilde{i}=\tilde{i}'$ and $\tilde{j}=\tilde{j}'$. 
  We have reached a contradiction.
\end{proof}

Using \Cref{co:1}, we can prove that each solution to the constructed \fairCD instance induces a selection of one tile from each tile set in the given \textsc{Grid Tiling} instance: 
\begin{lemma}\label{le:6}
 For each $i,j\in [t]$, exactly one star $T_{x, y}^{i, j}$  for some $(x,y)\in S^{i,j}$ is not part of $V_{\text{center}}$.
\end{lemma}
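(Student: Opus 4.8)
The plan is to track, for each tile set $S^{i,j}$, the number $e_{i,j}$ of its stars that are \emph{not} contained in the center district $V_{\text{center}}$, and to prove $e_{i,j}=1$ for every $i,j\in[t]$. First I would observe that every non-center district must be a complete star: deleting $v_{\text{center}}$ leaves a disjoint union of stars, so any connected district avoiding $v_{\text{center}}$ is contained in a single star, and by \Cref{ob:1} it must then consist of that whole star. Since the instance has exactly $k=t^2+1$ districts, exactly $t^2$ of them are non-center; that is, $\sum_{i,j} e_{i,j} = t^2$.

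Next I would exploit the color $d_{i,j}$, which occurs only in the $2n$ stars of $S^{i,j}\cup S^{i-1,j}$. By \Cref{o:2} there are $\frac{n}{n-1}Z - 2n f(i,j) > Z$ vertices of this color in total, while each such star contains at most $\frac{Z}{2(n-1)}+Wm = \frac{5Z}{8(n-1)}$ of them (using $Wm=\frac{Z}{8(n-1)}$). Hence, to bring $\cv_{d_{i,j}}(V_{\text{center}})$ down to at most $Z$ (as required by \Cref{co:1}) one must remove $\frac{Z}{n-1}-2nf(i,j)$ such vertices, which is strictly more than a single star can supply; so at least two of the $S^{i,j}\cup S^{i-1,j}$ stars are excluded, i.e.\ $e_{i,j}+e_{i-1,j}\ge 2$. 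Summing this over all $i,j$ gives $2\sum_{i,j}e_{i,j}=2t^2$ on the left but at least $2t^2$ on the right, so every inequality is tight: $e_{i,j}+e_{i-1,j}=2$ for all $i,j$ (and, symmetrically via the color $b_{i,j}$, $e_{i,j}+e_{i,j-1}=2$).

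Finally I would rule out $e_{i,j}=0$. If $e_{i,j}=0$, tightness forces $e_{i-1,j}=2$, so both excluded $d_{i,j}$-stars lie in $S^{i-1,j}$ and contribute $\frac{Z}{2(n-1)}-Wx'-f(i,j)$ vertices each. Subtracting these two stars from the total of \Cref{o:2} then yields $\cv_{d_{i,j}}(V_{\text{center}}) = Z - 2(n-1)f(i,j) + W(x_1'+x_2')$, and since $W > (n-1)f(i,j)$ and $x_1'+x_2'\ge 2$ this strictly exceeds $Z$, contradicting \Cref{co:1}. Thus $e_{i,j}\ge 1$ for every $i,j$, which together with $\sum_{i,j}e_{i,j}=t^2$ forces $e_{i,j}=1$ throughout, establishing the claim.

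I expect the last step to be the main obstacle. The two easy facts ($e_{i,j}+e_{i-1,j}=2$ and $\sum_{i,j}e_{i,j}=t^2$) do not by themselves pin down the distribution when $t$ is even, since an alternating $0,2,0,2,\dots$ pattern satisfies both. Breaking this symmetry genuinely requires the \emph{sign} of the $Wx'$ terms: excluding stars from the ``wrong'' side $S^{i-1,j}$ removes too few $d_{i,j}$-vertices, which is exactly the inequality $W>(n-1)f(i,j)$ driving the contradiction above.
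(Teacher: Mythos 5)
Your proof is correct and takes essentially the same route as the paper's: first the lower bound of two excluded stars per color $d_{i,j}$ (which the paper obtains by combining \Cref{le:3} with \Cref{co:1} rather than re-deriving it), then the exact count of $t^2$ non-center districts forcing tightness, and finally the sign of the $\pm Wx$ terms to exclude the $2$--$0$ split (the paper phrases this as two stars from $S^{i,j}$ being excluded and examining color $d_{i+1,j}$, which is your argument after an index shift). Your closing remark about the alternating $0,2,0,2,\dots$ pattern correctly identifies why the last step is indispensable, a point the paper handles but does not comment on.
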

\begin{proof}
 \Cref{le:3} and \Cref{co:1} imply that for each $i,j\in [t]$ at least two stars containing vertices of color $d_{i,j}$ are not part of $V_{\text{center}}$. 
As each star $T_{x, y}^{i, j}$ contains vertices of colors $d_{i,j}$ and $d_{i+1,j}$ and as there exist $t^2$ non-center districts in the end, this means that for each $i,j\in [t]$ exactly two stars containing vertices of this color are not part of $V_{\text{center}}$. 

For the sake of contradiction, let us assume that there exists $i,j\in [t]$ such that two stars corresponding to tiles $(x, y), (x', y') \in S^{i,j}$ are not part of $V_{\text{center}}$. 
This implies by our previous observation that all other stars containing vertices of color $d_{i+1,j}$ need to belong to $V_{\text{center}}$. 
However, in this case, $\frac{Z}{n-1} - W(x + x') - 2 f(i, j) \le \frac{W}{n - 1} - 2W$ vertices of color $d_{i+1,j}$ are not part of $V_{\text{center}}$. 
By \Cref{o:2}, this implies that the number of vertices of color $d_{i+1,j}$ in $V_{\text{center}}$ is at least $Z+2W-2nf(i,j)$. 
As $W>n(t^2+t)$, this number is greater than~$Z$, contradicting \Cref{co:1}. 
\end{proof}

Putting all peaces together, we are now ready to prove the correctness of the backward direction of the construction.

\begin{lemma}\label{le:t-ba}
 If the constructed \fairCD instance is a yes-instance, then the given \textsc{Grid Tiling} instance is a yes-instance.
\end{lemma}
\begin{proof}
Let $\mathcal{V}$ be a solution to the constructed \fairCD instance and let $V_{\text{center}}\in \mathcal{V}$ be the district containing $v_{\text{center}}$. 
\Cref{le:6} implies that for each $i,j\in [t]$ exactly one star corresponding to a tile from $S^{i,j}$ is not part of $V_{\text{center}}$.
Let $$S:=\{(x^{i,j},y^{i,j})\in S^{i,j} \mid i,j \in [t]\wedge T^{i,j}_{x,y}\text{ is not part of $V_{\text{center}}$} \}$$ be a set of all tiles corresponding to excluded stars (one for each $i,j\in [t]$). 
We claim that $S$ is a valid solution to the given \textsc{Grid Tiling} instance.

For the sake of contradiction, assume that this is not the case. 
That is,  there either (a) exist  $i,j\in [t]$ such that $x^{i,j}\neq x^{i-1,j}$ or (b) exist $i',j'\in [t]$ such that $y^{i',j'}\neq y^{i',j'-1}$. 
Let us start by assuming that (a) is the case. 
Note that it is possible to assume without loss of generality that $x^{i,j}<x^{i-1,j}$, as from the fact that there exists some $x^{i,j}\neq x^{i-1,j}$ it follows that there also exists some $\tilde{i}\in [t]$ such that  $x^{\tilde{i},j}<x^{\tilde{i}-1,j}$.
The only stars containing vertices of color $d_{i,j}$ that are not part of $V_{\text{center}}$ are the star corresponding to $(x^{i,j},y^{i,j})$, which contains $\frac{Z}{2(n-1)}+W\cdot x^{i,j}-f(i,j)$ vertices of this color, and the star corresponding to $(x^{i-1,j},y^{i-1,j})$, which contains $\frac{Z}{2(n-1)}-W\cdot x^{i-1,j}-f(i,j)$ vertices of this color. 
This implies that at most $\frac{Z}{n-1}+W\cdot (x^{i,j}-x^{i-1,j})-2f(i,j)\leq \frac{Z}{n-1}-W-2f(i,j)$ vertices of this color are not part of $V_{\text{center}}$, where the inequality holds by our assumption that $x^{i,j}<x^{i-1,j}$. 
Combining this with \Cref{o:2}, it follows that the number of vertices of color
 $d_{i,j}$ in $V_{\text{center}}$ is at least $Z+W-2(n-1)f(i,j)$. By the definition of $W$, this is strictly  greater than $Z$. 
Applying \Cref{co:1}, we reach a contradiction. 

The same argument can also be applied if (b) holds, which proves that $S$ is a 
solution to the given \textsc{Grid Tiling} instance.
\end{proof}

From \Cref{le:t-fo} and \Cref{le:t-ba} the correctness of the reduction 
follows. As our construction takes only polynomial time and 
$ \vert C \vert +k=3t^2+4$ is bounded in a function of $t$, the NP-hardness and the
W[1]-hardness with respect to $ \vert C \vert +k$ of \fairCD on trees~follows.
\begin{theorem} \label{thm:whard}
\fairCD on trees is NP-hard and W[1]-hard with respect to $ \vert C \vert +k$, even if $\smin=1$ and $\smax=\infty$. 
\end{theorem}

Recall that by \Cref{co:pw1}, \fairCD can be solved in polynomial time on all graphs with pathwidth one (disjoint unions of caterpillars). 
Observe that the tree constructed in the reduction above has pathwidth two: Graph $G'$ obtained from $G$ by deleting $v_{\text{center}}$ is a disjoint union of stars.
Thus, $G'$ admits a path decomposition of width one.
Placing $v_{\text{center}}$ into every bag yields a path decomposition of $G$ of width two. This results in the following.
\begin{corollary}\label{co:pw2}
  \fairCD on graphs~$G$ with~$\pw(G) = 2$ is NP-hard and W[1]-hard with respect to~$ \vert C \vert +k$, even if $\smin=1$ and $\smax=\infty$. 
\end{corollary}
Notably, this result is tight in the sense that we have proved polynomial-time solvability on pathwidth-one graphs in \Cref{co:pw1}. 

As the tree constructed in the previous reduction has diameter four (it consists of a center vertex and centers of stars attached to it), we conclude that \fairCD is computationally intractable even on trees with a small constant diameter: 
\begin{corollary} \label{co:dia4}
\fairCD is NP-hard and W[1]-hard with respect to $ \vert C \vert +k$ on trees with diameter four, even if $\smin=1$ and $\smax=\infty$. 
\end{corollary}
We will complement this result in \Cref{co:dia3} where we show that \fairCD is polynomial-time solvable on trees with diameter at most three.

\subsection{An XP-algorithm for \texorpdfstring{$\mathbf{tw +  \vert C \vert }$}{tw +  \vert C \vert }} \label{se:tw}
Motivated by the hardness result from the previous subsection, we search for an XP-algorithm for the parameters $ \vert C \vert $ and $k$ for \fairCD on trees. We start by considering the parameter $ \vert C \vert $ here and the parameter $k$ in the next two subsections. Specifically, we show that there exists an XP-algorithm for $ \vert C \vert $ on tree-like graphs---more precisely, we present an XP-algorithm with respect to $ \vert C \vert +\twidth$, where $\twidth$ is the treewidth of the underlying graph.

Given a graph $G=(V,E)$, a tree decomposition  $(T = (V_T, E_T), \{B_x \}_{x \in V_T})$ of $G$ (as defined in the Preliminaries) is \emph{nice} if each node $x\in V_T$ has one of the following types\footnote{Note that in the following we refer to the elements of $V$ as vertices and the elements of $V_T$ as nodes. We refer to the set $B_x$ for a node $x \in V_T$ as a bag.}:
\begin{description}
  \item[Leaf node.] A leaf of $T$ with $B_x = \{ v \}$ for $v \in V$.
  \item[Introduce vertex $v$ node.] An internal node of $T$ with one child $y\in V_T$ such that $B_x = B_y \cup \{ v \}$.
  \item[Introduce edge $\{ u, v \}$ node.] An internal node of $T$ with one child $y\in V_T$ such that $u, v \in B_x = B_y$.
  \item[Forget $v$ node.] An internal node of $T$ with one child $y\in V_T$ such that $B_x = B_y \setminus \{ v \}$ for $v \in B_y$.
  \item[Join node.] An internal node of $T$ with two children $y\in V_T$ and $z\in V_T$ such that $B_x = B_y = B_z$.
\end{description} 
We will implicitly assume that every introduce edge $\{ u, v \}$ node is labeled by $\{ u, v \}$ and that for each edge there is exactly one such node.
Given a tree decomposition, a nice tree decomposition of equal width can be computed in linear time \cite{DBLP:books/sp/Kloks94}. 
By applying dynamic programming on top of the nice tree decomposition of the given graph, we establish the following: 

\begin{theorem}
\label{th:tw-XP}
\fairCD can be solved in $\mathcal{O}(n^{\mathcal{O}(\twidth \cdot  \vert C \vert )})$ time. 
\end{theorem}
\begin{proof}
  Suppose that we are given an instance $\mathcal{I}=(G,C,\col,k,\ell,\smin,\smax)$ of \fairCD admitting a solution $\mathcal{V} = \{ V_1, \dots, V_k \}$.
  Let $(T = (V_T, E_T), \{ B_x \}_{x \in V_T})$ be a tree decomposition of $G$. 
  For $x \in V_T$, let $G_x$ be the graph whose vertices and edges are those introduced in a node from the subtree of~$T$ rooted at~$x$. 
  Further, let $U_x$ be the set of vertices in~$G_x$.
  Our algorithm employs a dynamic programming method that traverses~$T$ from bottom to~top.

  Observe that each district $V_i \in \mathcal{V}$ from the solution is of one of the following three types with respect to each $x\in V_T$: 
  \begin{itemize}
        \item[(i)] $V_i \cap B_x \ne \emptyset$, i.e., the district $V_i$ overlaps with the vertices in the bag of $x$. 
        \item[(ii)] $V_i \cap B_x = \emptyset$ and $V_i \subseteq U_x$, i.e., the district $V_i$ is disjoint from $B_x$ but fully contained in vertices from bags from the subtree rooted at $x$. 
        \item[(iii)] $V_i \cap B_x = \emptyset$ and $V_i \subseteq V \setminus U_x$, i.e., the district $V_i$ is disjoint from $B_x$ and does not contain any vertex occurring in a bag from the subtree rooted at $x$. 
       \end{itemize}
  
  These three cases are exhaustive, as by definition of the tree decomposition, for each vertex, all nodes in which bags the vertex appears form a connected subgraph in $T$. 
  Note that for each district $V_i\in \mathcal{V}$ of type (i), the induced subgraph~$G_x[V_i]$ can have multiple connected components, in which case $G[V_i]$ includes some vertices or edges not in $G_x$.
  
  For each $x\in V_T$, we keep track of the following information:
  \begin{enumerate}
    \item[(1)] the intersection of $B_x$ and every $V_i\in \mathcal{V}$ of type (i),
    \item[(2)] the intersection of $B_x$ with the connected components of the subgraph $G_x[V_i]$ for all districts $V_i\in \mathcal{V}$ of type (i),
    \item[(3)] the number of occurrences of color $c$ for every color $c \in C$ in $V_i\cap U_x$ for  every district $V_i\in \mathcal{V}$ of type (i), and
    \item[(4)] the number of districts of type (ii).
  \end{enumerate}
	
  We capture this information in the following variables: For each $x\in V_T$, let (1)~$\pb$ be a partition of~$B_x$,  (2)~$\ppb$ be a partition of~$B_x$ where each subset from $\ppb$ is contained in a subset of $\pb$,  (3)~$\cc \colon \pb \to \mathbb{N}^{ \vert C \vert }$ be some function, and (4) $\kp \in \mathbb{N}$ be some integer with $\kp\leq k$. 
  
  On an intuitive level, moving in the tree from the bottom to the top, we keep track of the number of ``completed'' districts (type (ii) districts). 
  Moreover, for each ``unfinished'' district (type (i) district), we store its current color distribution (and thereby implicitly its current size of so far added vertices), and how the vertices from the current bag are distributed to the unfinished districts. 
  Also, for each unfinished district, we keep track of which of its vertices from the current bag are already connected over vertices  that have been added to the district and edges that have been introduced in the respective subtree.
  It is necessary to keep track of the connectivity of vertices from one district (stored in $\ppb$), as we need this information to ensure that we make the district connected by adding further vertices and edges to it further up in the tree. 
  Moreover, we cannot directly infer it from $\pb$, as we cannot remember all vertices put in the districts so far. 
  
  It remains to formally describe the dynamic program we use. 
 For all combinations of the variables $\pb$, $\ppb$, $\cc$, and $\kp$, we compute $A_x(\pb, \ppb, \cc, \kp) \in \{ 0, 1 \}$ such that $A_x(\pb, \ppb, \cc, \kp) = 1$ if and only if there is a partition of the vertices $U_x$ into exactly $ \vert \ppb \vert  + \kp$ subsets $(W_1,\dots, W_{ \vert \ppb \vert  + \kp})$ with the following properties:
  \begin{itemize}
    \item For every $i\in [ \vert \ppb \vert  + \kp]$, $G_x[W_i]$ is connected.
    \item Exactly $\kp$ subsets from $(W_1,\dots, W_{ \vert \ppb \vert  + \kp})$ are disjoint from $B_x$. Moreover, all of them are $\ell$-fair and respect the size constraints.
    \item For every subset $W_i$ intersecting $B_x$, it holds that $W_i \cap B_x \in \ppb$.
    \item For every $B \in \pb$, it holds that $\cc(B) = \sum_{i\in [ \vert \ppb \vert  + \kp], W_i \cap B \neq \emptyset} \cv(W_i)$.
  \end{itemize}
  From the subsets in the partitioning $(W_1,\dots, W_{ \vert \ppb \vert  + \kp})$, $\kp$ are $\ell$-fair connected districts of type (ii) with respect to $x$. 
  The other $ \vert \ppb \vert $ subsets are partly contained in $B_x$ and restricted to $B_x$ form the partition~$\ppb$ describing the intersection of $B_x$ with the connected components of $G_x[V_i]$ for districts~$V_i$ of type~(i). 
  Note that some of these $ \vert \ppb \vert $ subsets may end up in the same district in the 
  solution. 
   Moreover, assuming that $\pb$ corresponds to the intersections of $B_x$ and districts of type (i), it needs to hold that for each set in the intersection $B\in \mathcal{B}$,  $\cc(B)$ is the distribution of colors in the union of all subsets $W_i$ intersecting with $B$ (these are the subsets that will finally end up being one district in the solution).  
  
  Observe that a given instance of \fairCD is a yes-instance if and only if there exists a partition $\pb_r$ of $B_r$ for the root $r$ of $T$ and a function $\cc_r$ such that for all $B \in \pb_r$, $\cc_r(B)$ is $\ell$-fair and its $L_1$-norm is in the range $[\smin, \smax]$ and $A_r(\pb_r, \pb_r, \cc_r, k -  \vert \pb_r \vert ) = 1$.

  We compute $A_x(\pb, \ppb, \cc, \kp)$ by traversing the nodes $V_T$ of the tree $T$ from bottom to top. 
  Depending on the type of the current node $x\in V_T$, we apply one of the following five cases.
  In the following, let $y\in V_T$ (and $z\in V_T$) be the child node(s) of $x$ if $x$ has one (two) child node(s) in~$T$.
  \paragraph*{Leaf node $x$.}
  For a leaf node $x$ with $B_x = \{ v \}$, we have $A_x(\{ v \}, \{ v \}, \cc, 0) = 1$ if and only if $\cc(\{ v \}) = \cv(\{ v \})$.

  \paragraph*{Introduce vertex $v$ node $x$.}
  Note that $v$ is isolated in $G_x$.
  Thus, $A_x(\pb, \ppb, \cc, \kp) = 0$ if $\{ v \}$ is not contained in $\ppb$.
  Otherwise we have
  \[
    A_x(\pb, \ppb, \cc, \kp) =
      A_y(\pb', \ppb', \cc', \kp),
  \]
  where $\pb' = (\pb \setminus \{ B \}) \cup \{ B \setminus \{ v \} \}$ for $B \in \pb$ with $v \in B$ ($\pb' = \pb \setminus \{ B \}$ if $B = \{ v \}$), $\ppb' = \ppb \setminus \{ \{ v \} \}$, and $\cc'(B') = \cc(B) - \cv(\{ v \})$ if $B' = B \setminus \{ v \}$ (we assume that $\cc'(\emptyset)$ is a zero vector here) and $\cc'(B') = \cc(B')$ otherwise.

  \paragraph*{Introduce edge $\{u,v\}$ node $x$.} 
  If $u$ and $v$ belong to distinct subsets of $\mathcal{B}$, then we have $A_x(\pb, \ppb, \cc, \kp) = A_y(\pb, \ppb, \cc, \kp)$.
  We also have the same recurrence if $u$ and $v$ belong to the same subset of $\ppb$.
  This is because in both cases the connected components of districts of type (i) in $G_x$ do not change.
  Otherwise, we have
  \[
    A_x(\pb, \ppb, \cc, \kp) = A_y(\pb, \ppb, \cc, \kp) \vee \bigvee_{\ppb'} A_y(\pb, \ppb', \cc, \kp),
  \]
  where $\bigvee$ is over all partitions $\ppb'$ of $B_y$ from which $\ppb$ can be obtained by removing $D_u, D_v\in \ppb'$ with $u \in D_u$ and $v \in D_v$ from $\ppb'$ and adding $D_u \cup D_v$ to $\ppb'$.

  \paragraph*{Forget $v$ node $x$.}
  We have two cases.
  In the first case, the district $V_i\in \mathcal{V}$ to which $v$ belongs does not intersect $B_x$.
  This implies that $V_i$ is now fully contained in $G_x$ and becomes a district of type (ii). 
  Then $\{ v \} $ should have been part of $\pb$ and $\ppb$ for child $y$.
  Let us define
  \[
    A_x^1(\pb, \ppb, \cc, \kp) = 
    \bigvee_{\cc'} (A_y(\pb \cup \{ \{ v \} \}, \ppb \cup \{ \{ v \} \}, \cc', \kp - 1)),
  \] 
  where $\bigvee$ is over all functions $\cc'$ such that $\cc'(B) = \cc(B)$ for all $B \in \pb$, $\MOV(\cc'(\{ v \})) \le \ell$, and $\smin \le \sum_{c \in C}\cc_c'(\{ v \}) \le \smax$. 
  Note that we also need to decrease $\kp$, as $V_i$ is an additional district of type (ii) for the node $x$.
  In the other case, the district in which $v$ is contained intersects $B_x$.
  In this case, it holds that
  \[
    A_x^2(\pb, \ppb, \cc, \kp) =
    \bigvee_{\pb'\!,\,\ppb'} A_y(\pb', \ppb', \cc, \kp),
  \]
  where $\bigvee$ is over all partitions $\pb'$ and $\ppb'$ of $B_y$ such that $\pb' = (\pb \setminus \{ B \}) \cup (B \cup \{ v \})$ and $\ppb' = (\ppb \setminus \{ D \}) \cup (D \cup \{ v \})$ for some $B \in \pb$ and $D \in \ppb$.
  All in all, we have $A_x(\pb, \ppb, \cc, \kp) = A_x^1(\pb, \ppb, \cc, \kp) \vee A_x^2(\pb, \ppb, \cc, \kp)$.

  \paragraph*{Join node $x$.}
  Note that two vertices $u, w \in D$ for some $D \in \ppb$ can reach each other in $G_x$ via connections in $G_y$ and $G_z$.
  So we have
  \begin{align*}
  & A_x(\pb, \ppb, \cc, \kp) = \\
  & \bigvee_{\ppb_y, \ppb_z, \cc_y, \cc_z, \kp_y, \kp_z}  A_y(\pb, \ppb_y, \cc_y, \kp_y) \wedge A_z(\pb, \ppb_z, \cc_z, \kp_z).
  \end{align*}
  Here, $\bigvee$ is over all $\ppb_y, \ppb_z, \cc_y, \cc_z, \kp_y, \kp_z$ such that all the following hold:
  \begin{itemize}
    \item
      $\mathcal{D}$ are the connected components of the graph whose vertex set is $B_x$ and edge set is $\{ \{ u, w \} \mid (\exists D_y \in \mathcal{D}_y \colon u, w \in D_y) \vee (\exists D_z \in \mathcal{D}_z \colon u, w \in D_z) \}$.
    \item
      $\cc_y(B) + \cc_z(B) = \cc(B)-\cv(B)$ for every $B \in \mathcal{B}$.
    \item
      $\kp_y + \kp_z = \kp$.
  \end{itemize}

  Now we examine the running time.
  Note that there are $\twidth^{\mathcal{O}(\twidth)}$ partitions of $B_x$ for every node~$x$ of $T$.
  Since $\cc(B) \in \{ 0, \dots, n \}^{ \vert C \vert }$ for every $B \in B_x$ and $x \in V_T$, there are at most $n^{\mathcal{O}(\twidth \cdot  \vert C \vert )}$ states for~$\cc$.
  It is easy to see that the computation of all values takes 
  $n^{\mathcal{O}(\twidth \cdot  \vert C \vert )}$ time.
\end{proof}

Since a tree is of treewidth one, we have the following:

\begin{corollary}
 \fairCD on trees can be solved in $n^{\mathcal{O}( \vert C \vert )}$ time. 
\end{corollary}

\subsection{An XP-algorithm for \texorpdfstring{$\mathbf{fen + k}$}{fen + k}} \label{se:fen}
Having constructed a polynomial-time algorithm for constant $ \vert C \vert $ on trees and on graphs with a constant treewidth, we now consider the number $k$ of districts as our parameter for \fairCD on trees. 
We show that there is a simple XP-algorithm with respect to $k$ for \fairCD on trees, which naturally extends to an XP-algorithm with respect to $\fen+~k$, where $\fen$ is the number of edges that need to be deleted to make the given graph a~tree:

\begin{proposition}\label{th:fen}
  \fairCD can be solved in $n^{\mathcal{O}(\fen + k)}$ time.
\end{proposition}
\begin{proof}
  Suppose that there is a solution $(V_1, \dots, V_k)$.
  As for each $i\in [k]$, $V_i$ is connected (for which at least $ \vert V_i \vert -1$ edges are needed), 
  the number of edges whose endpoints both lie in the same district is at least $\sum_{i = 1}^k ( \vert V_i \vert  - 1) = n - k$.
  Since the input graph $G = (V, E)$ has at most $n + \fen - 1$ edges (by the definition of $\fen$), it follows that there are at most $(n + \fen - 1) - (n - k) = \fen + k - 1$ edges whose endpoints belong to different districts.
  Accordingly, in our algorithm for each edge set $E' \subseteq E$ of size at most $\fen + k - 1$, we verify whether $(V, E \setminus E')$ has $k$ connected components each of which being $\ell$-fair and respecting the size constraints.
  We return yes if this is the case for some subset of edges and no otherwise.
  Overall, it takes $(n + \fen - 1)^{\fen + k - 1} \cdot n^{\mathcal{O}(1)} = 
  n^{\mathcal{O}(\fen + k)}$ time to do so.
\end{proof}

As the treewidth of a graph is upper-bounded in a function of its feedback edge number, we can conclude from \Cref{th:tw-XP} that 
\fairCD parameterized by $\fen+~ \vert C \vert $ is in XP.
However, \Cref{th:fen} raises the question whether there is an XP-algorithm for~$\twidth+~k$. 
We answer this question negatively in the next subsection. 

\subsection{NP-hardness for \texorpdfstring{$\mathbf{fvn = 1}$}{fvn = 1}  and \texorpdfstring{$\mathbf{k = 2}$}{k = 2}}\label{se:fvn}

Having already considered the parameters treewidth and feedback edge number, we now consider a third way to measure the distance from a tree, the number of vertices to delete to make it a tree (feedback vertex number $\fvn$). 
As $\twidth+1\leq \fvn$, from \Cref{th:tw-XP} it follows that there is an XP-algorithm for $\fvn+~ \vert C \vert $.
We now prove that \fairCD is NP-hard even for $\fvn=1$ and $k=2$, in contrast to the result from the previous subsection, where we gave an $n^{\mathcal O(\fen + k)}$-time algorithm.
Notably, this NP-hardness result also excludes the existence of an XP-algorithm for~$\twidth+~k$ unless P $=$ NP:
\begin{theorem}
\label{th:fvn}
\fairCD is NP-hard for $\operatorname{fvn} = 1$ and $k=2$, even if $\smin=1$ and $\smax=\infty$.
\end{theorem}
The rest of the section is devoted to proving the theorem. 
We reduce from the NP-hard \textsc{Not-All-Equal 3-Sat} problem \cite{DBLP:conf/stoc/Schaefer78} where the input is a set $X$ of $n$ boolean variables and a set~$Y$ of $m$~clauses over~$X$ such that each clause $y\in Y$ contains three different literals, and the question is whether there exists a truth assignment to the variables in~$X$ such that for each clause $y\in Y$ at least one literal is set to true and at least one literal is set to false. 
Notably, given an assignment fulfilling these constraints, the assignment that assigns all variables in $X$ the opposite truth value also fulfills the constraints.

The general idea of the construction is that we introduce one vertex for each literal and that the two districts in the solution correspond to two opposite truth assignments of the variables from $X$ that are both solutions of the \textsc{Not-All-Equal 3-Sat} instance. 

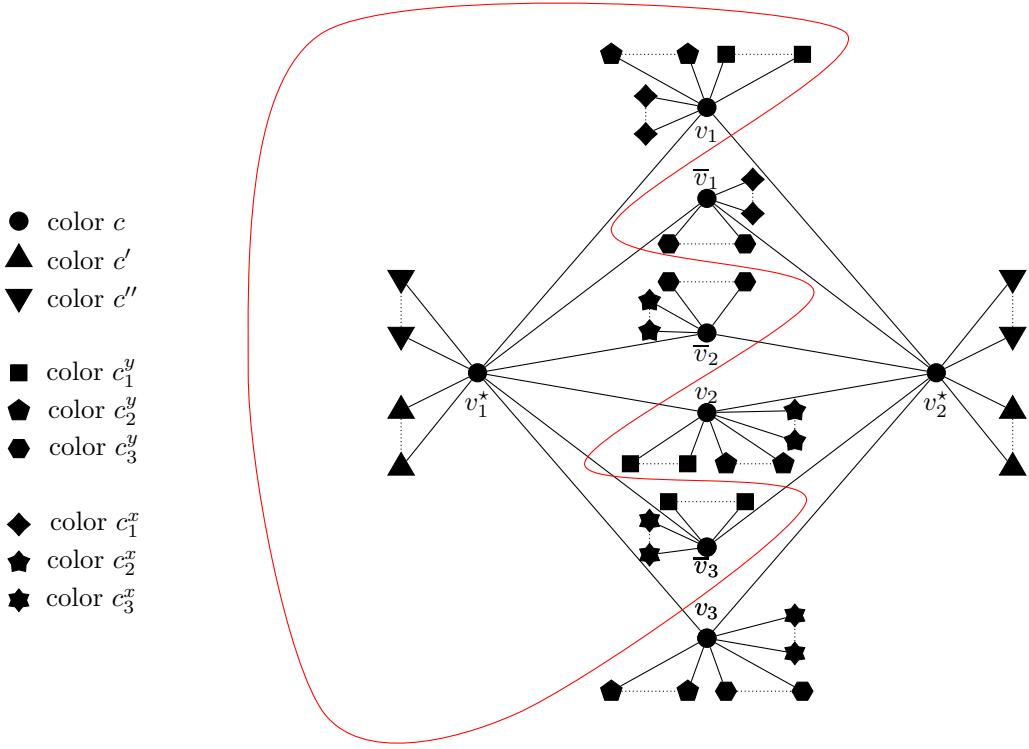
\begin{figure*}[bt]
	\begin{center}
	\resizebox{.9\textwidth}{!}{%
		\begin{tikzpicture}
		\node[vertex,label={[label distance=0.125cm]0:color $c$}] (1) at (-6, 2) {};
		\node[triangle,label={[label distance=0.125cm]0:color $c'$}] (2) at (-6, 1.5) {};
		\node[itriangle,label={[label distance=0.125cm]180:color $c''$}] (3) at (-6, 1) {};
		\node[square,label={[label distance=0.125cm]0:color $c^y_1$}] (4) at 
		(-6, 0) {};
		\node[penta,label={[label distance=0.125cm]0:color $c^y_2$}] (5) at 
		(-6, 
		-0.5) {};
		\node[hexa,label={[label distance=0.125cm]0:color $c^y_3$}] (5) at 
		(-6, -1) {};
		\node[star4,label={[label distance=0.125cm]0:color $c^x_1$}] (4) at 
		(-6, -2) {};
		\node[star5,label={[label distance=0.125cm]0:color $c^x_2$}] (5) at 
		(-6, -2.5) {};
		\node[star6,label={[label distance=0.125cm]0:color $c^x_3$}] (5) at 
		(-6, -3) {};
		
		\node[vertex,label=-90:$v^{\star}_1$] (vs1) at (0, 0) {};
		\node[vertex,label=-90:$v^{\star}_2$] (vs2) at (6, 0) {};
		\node[triangle] (v2) at (-1, -0.5) {};
		\node[triangle] (v3) at (-1, -1.25) {};
		
		\node[itriangle] (vv2) at (-1, 0.5) {};
		\node[itriangle] (vv3) at (-1, 1.25) {};
		\node[triangle] (2v2) at (7, -0.5) {};
		\node[triangle] (2v3) at (7, -1.25) {};
		
		\node[itriangle] (2vv2) at (7, 0.5) {};
		\node[itriangle] (2vv3) at (7, 1.25) {};
		
		\node[vertex,label={[label distance=-0.125cm]90:$\overline{v}_1$}] (w1) at (3, 2.3) {};
		\node[vertex,label=-90:$v_1$] (w2) at (3, 3.5) {};
		\node[vertex,label={[label distance=-0.1cm]-90:$\overline{v}_2$}] (w5) at (3, 0.525) {};
		\node[vertex,label={[label distance=-0.125cm]90:$v_2$}] (w6) at (3, -0.525) {};
		\node[vertex,label={[label distance=-0.125cm]-90:$\overline{v}_3$}] (w3) at (3, -2.3) {};
		\node[vertex,label=90:$v_3$] (w4) at (3, -3.5) {};
		
		\node[hexa] (1C1a) at (2.5, 1.7) {};
		\node[hexa] (1C1b) at (3.5, 1.7) {};
		\node[penta] (1C2a) at (1.75,4.2) {};
		\node[penta] (1C2b) at (2.75,4.2) {};
		\node[square] (1C3a) at (3.25,4.2) {};
		\node[square] (1C3b) at (4.25,4.2) {};
		
		\node[square] (3C1a) at (2.5, -1.7) {};
		\node[square] (3C1b) at (3.5, -1.7) {};
		\node[penta] (3C2a) at (1.75,-4.2) {};
		\node[penta] (3C2b) at (2.75,-4.2) {};
		\node[hexa] (3C3a) at (3.25,-4.2) {};
		\node[hexa] (3C3b) at (4.25,-4.2) {};
		
		\node[hexa] (2C1a) at (2.5, 1.2) {};
		\node[hexa] (2C1b) at (3.5, 1.2) {};
		\node[square] (2C2a) at (2,-1.2) {};
		\node[square] (2C2b) at (2.75,-1.2) {};
		\node[penta] (2C3a) at (3.25,-1.2) {};
		\node[penta] (2C3b) at (4,-1.2) {};
		
		\draw[] (vs1)--(v2);
		\draw[] (vs1)--(v3);
		\draw[densely dotted] (v2)--(v3);
		\draw[] (vs1)--(vv2);
		\draw[] (vs1)--(vv3);
		\draw[densely dotted] (vv2)--(vv3);
		\draw[] (vs2)--(2v2);
		\draw[] (vs2)--(2v3);
		\draw[densely dotted] (2v2)--(2v3);
		\draw[] (vs2)--(2vv2);
		\draw[] (vs2)--(2vv3);
		\draw[densely dotted] (2vv2)--(2vv3);
		
		\draw[] (vs1)--(w1);
		\draw[] (vs1)--(w2);
		\draw[] (vs1)--(w3);
		\draw[] (vs1)--(w4);
		\draw[] (vs1)--(w5);
		\draw[] (vs1)--(w6);
		\draw[] (vs2)--(w1);
		\draw[] (vs2)--(w2);
		\draw[] (vs2)--(w3);
		\draw[] (vs2)--(w4);
		\draw[] (vs2)--(w5);
		\draw[] (vs2)--(w6);
		
		\draw[] (w1)--(1C1a);
		\draw[] (w1)--(1C1b);
		\draw[densely dotted] (1C1a)--(1C1b);	
		\draw[] (w2)--(1C2a);
		\draw[] (w2)--(1C2b);
		\draw[densely dotted] (1C2a)--(1C2b);
		\draw[] (w2)--(1C3a);
		\draw[] (w2)--(1C3b);
		\draw[densely dotted] (1C3a)--(1C3b);
		
		\draw[] (w3)--(3C1a);
		\draw[] (w3)--(3C1b);
		\draw[densely dotted] (3C1a)--(3C1b);	
		\draw[] (w4)--(3C2a);
		\draw[] (w4)--(3C2b);
		\draw[densely dotted] (3C2a)--(3C2b);
		\draw[] (w4)--(3C3a);
		\draw[] (w4)--(3C3b);
		\draw[densely dotted] (3C3a)--(3C3b);
		
		\draw[] (w5)--(2C1a);
		\draw[] (w5)--(2C1b);
		\draw[densely dotted] (2C1a)--(2C1b);	
		\draw[] (w6)--(2C2a);
		\draw[] (w6)--(2C2b);
		\draw[densely dotted] (2C2a)--(2C2b);
		\draw[] (w6)--(2C3a);
		\draw[] (w6)--(2C3b);
		\draw[densely dotted] (2C3a)--(2C3b);
		\draw [red] plot [smooth cycle] coordinates {(-3,0) (-2,4.5) (4.8,4.5) 
		(1.75,1.9) (4.4,1.05) (1.4,-1.2) (4.3,-1.7) (0.5,-4.5) (-2,-4.5)};
		
		\node[star4] (v11) at (2.2, 3.65) {};
		\node[star4] (v12) at (2.2, 3.15) {};
		\draw[] (w2)--(v11);
		\draw[] (w2)--(v12);
		\draw[densely dotted] (v11)--(v12);	
		
		\node[star4] (v21) at (3.6, 2.1) {};
		\node[star4] (v22) at (3.6, 2.55) {};
		\draw[] (w1)--(v21);
		\draw[] (w1)--(v22);
		\draw[densely dotted] (v21)--(v22);	
		
		\node[star5] (v31) at (2.25, 0.95) {};
		\node[star5] (v32) at (2.25, 0.55) {};
		\draw[] (w5)--(v31);
		\draw[] (w5)--(v32);
		\draw[densely dotted] (v31)--(v32);	
		
		\node[star5] (v41) at (4.15, -0.9) {};
		\node[star5] (v42) at (4.15, -0.5) {};
		\draw[] (w6)--(v41);
		\draw[] (w6)--(v42);
		\draw[densely dotted] (v41)--(v42);
		
		\node[star6] (v51) at (4.15, -3.7) {};
		\node[star6] (v52) at (4.15, -3.2) {};
		\draw[] (w4)--(v51);
		\draw[] (w4)--(v52);
		\draw[densely dotted] (v51)--(v52);	
		
		\node[star6] (v61) at (2.25, -2.4) {};
		\node[star6] (v62) at (2.25, -1.95) {};
		\draw[] (w3)--(v61);
		\draw[] (w3)--(v62);
		\draw[densely dotted] (v61)--(v62);	
		
		\node[vertex,label={[label distance=-0.125cm]-90:$\overline{v}_3$}] (w3) at (3, -2.3) {};
		\node[vertex,label=90:$v_3$] (w4) at (3, -3.5) {};
	
		\end{tikzpicture}}
	\end{center}
	\caption{Example of the hardness reduction from \Cref{th:fvn} for the 
	\textsc{Not-All-Equal 3-Sat} instance~$X=\{x_1,x_2,x_3\}$, 
	$Y=\{y_1=\{x_1,x_2,\overline{x}_3\}, y_2=\{x_1,x_2,x_3\}, 
	y_3=\{\overline{x}_1,\overline{x}_2,x_3\} \}$. The district marked in red 
	corresponds to the solution setting $x_1$ to 
	true and $x_2$ and $x_3$ to false.}\label{fi:fvn}
\end{figure*}

\paragraph{Construction.} Given an instance $(X=\{x_1,\dots, x_n\},Y=\{y_1,\dots, y_m\})$ of \textsc{Not-All-Equal 3-Sat}, we construct an instance of \fairCD as follows. 
We add a color~$c_i^x$ for each variable $x_i\in X$ and a color~$c^y_j$ for each clause $y_j\in Y$. 
In addition, we add three colors $c$, $c'$, and~$c''$. 
Moreover, we set $k=2$, $\ell=0$, $\smin=1$, and $\smax=\infty$. 
Let $Z:=2\cdot n\cdot m+1$. 

We start the construction of the vertex colored graph $G=(V,E)$ by introducing two \emph{central vertices} $v^{\star}_1$ and $v^{\star}_2$ of color~$c$. 
We will construct the instance in a way that these two vertices need to lie in different districts. 
For each central vertex $v^{\star}_i$, $i \in \{ 1, 2 \}$, we introduce $3Z$~vertices of color~$c'$ and $3Z$~vertices of color~$c''$ and connect them to~$v^{\star}_i$.

Subsequently, for each variable $x_i\in X$, we introduce two \emph{literal vertices} $v_{i}$ and $\overline{v}_{i}$ of color $c$ and connect both these vertices to the two central vertices. 
For each literal vertex $\tilde{v}_i \in \{ v_i, \overline{v}_i \}$, we introduce $3Z-i$ vertices of color $c^x_{i}$ and connect them to $\tilde{v}_i$ 
(these vertices make sure that the two literal vertices end up in different districts).
For each clause $y_j\in Y$ in which $x_i$ occurs positively, we introduce $Z+j$ vertices of color $c^y_{j}$ and connect them to $v_{i}$.
For each clause $y_j\in Y$ in which $x_i$ occurs negatively, we introduce $Z+j$ vertices of color $c^y_{j}$ and connect them to $\overline{v}_{i}$ (these 
vertices ensure that there is no clause in which all three literal vertices corresponding to literals from the clause lie in the same district).
See \Cref{fi:fvn} for a visualization of the construction. 

We start by showing the forward direction of the correctness of the construction.  

\begin{lemma}\label{le:fvn1}
If the given {\normalfont\textsc{Not-All-Equal 3-Sat}} instance is a yes-instance, then the constructed {\fairCD} instance is a yes-instance.
\end{lemma}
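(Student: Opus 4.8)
The plan is to read off the two districts directly from a satisfying assignment, letting the district of $v^\star_1$ encode the ``true'' literals and the district of $v^\star_2$ the ``false'' ones. So fix a truth assignment $\sigma\colon X\to\{\text{true},\text{false}\}$ witnessing that the \textsc{Not-All-Equal 3-Sat} instance is a yes-instance. First I would place $v^\star_1$ into $V_1$ and $v^\star_2$ into $V_2$. Next, for each variable $x_i$, I place the literal vertex of the literal that is \emph{true} under $\sigma$ into $V_1$ and the other literal vertex into $V_2$; since exactly one of $x_i,\overline{x_i}$ is true, this puts exactly one of $v_i,\overline{v}_i$ into each district. Every remaining vertex is a leaf attached to a unique neighbour (a $c'$- or $c''$-vertex to a central vertex, a $c_i^x$-vertex to a literal vertex, a $c^y_j$-vertex to a literal vertex), so I assign each such leaf to the district of its neighbour. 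This yields a partition of $V$ into exactly $k=2$ non-empty parts.

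Connectivity is immediate: every vertex of $V_1$ is either $v^\star_1$, a literal vertex (adjacent to $v^\star_1$ since literal vertices are joined to both central vertices), or a leaf adjacent to one of these, so every vertex of $V_1$ reaches $v^\star_1$ and $G[V_1]$ is connected; the same holds symmetrically for $V_2$. It remains to verify $0$-fairness, and I would do this by counting colour occurrences in $V_1$ (the count for $V_2$ is symmetric). Colours $c'$ and $c''$ each occur exactly $3Z$ times, on the leaves hanging off $v^\star_1$. Every other colour occurs strictly fewer than $3Z$ times: colour $c$ occurs $n+1$ times ($v^\star_1$ plus one literal vertex per variable, and $n+1<3Z$); colour $c_i^x$ occurs $3Z-i<3Z$ times (the leaves of the single chosen literal vertex of $x_i$); and $\cv_{c^y_j}(V_1)$ equals $(Z+j)$ times the number of literal vertices of clause $y_j$ lying in $V_1$. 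Hence $c'$ and $c''$ are the two most frequent colours and are tied, giving $\MOV(V_1)=0$.

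The key step, and the only place where the Not-All-Equal property enters, is bounding $\cv_{c^y_j}(V_1)$. By construction the literal vertex associated with a literal $\ell$ of clause $y_j$ lies in $V_1$ exactly when $\ell$ is true under $\sigma$, so $\cv_{c^y_j}(V_1)=(Z+j)\cdot(\text{number of true literals of }y_j)$. Since $\sigma$ satisfies the Not-All-Equal constraint, each clause has between one and two true literals, whence $\cv_{c^y_j}(V_1)\le 2(Z+j)\le 2(Z+m)<3Z$, the last inequality using $Z=2nm+1>2m$. I expect this inequality to be the main point to pin down: had some clause had all three literals true (or all false), three literal vertices would land in one district and $c^y_j$ would occur $3(Z+j)>3Z$ times, overtaking $c'$ and $c''$ and forcing $\MOV>0$; it is precisely the Not-All-Equal guarantee, together with the choice of $Z$, that rules this out. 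Since both districts are non-empty, connected, and $0$-fair, the constructed \fairCD instance is a yes-instance.
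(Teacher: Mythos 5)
Your proposal is correct and follows essentially the same route as the paper's proof: build the two districts from the satisfying assignment by sending each true literal vertex (with its leaves) to the district of $v^\star_1$, then verify that $c'$ and $c''$ tie at $3Z$ while every other colour stays below $3Z$, with the Not-All-Equal property guaranteeing that each clause contributes at most $2(Z+j)<3Z$ vertices of colour $c^y_j$ to either district. Your count $\cv_c(V_1)=n+1$ is in fact slightly sharper than the paper's bound via the total number of $c$-coloured vertices, but the argument is the same.
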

\begin{proof}
Let $X'\subseteq X$ be the set of variables set to true in a solution to the given {\normalfont\textsc{Not-All-Equal 3-Sat}} instance. 
From this, we construct a solution $(V_1, V_2)$ to the constructed \fairCD instance as follows. 
We include $v^{\star}_1$ and all leaves attached to it in $V_1$. 
Moreover, we include in $V_1$ the following vertices: $v_i$ and all leaves attached to it for all $x_i\in X'$ and $\overline{v}_i$ and all leaves attached to it for all $x_i \in X \setminus X'$.
We include all other vertices in $V_2$.

It is easy to verify that $V_1$ and $V_2$ are both connected.
We will show that $V_1$ is $0$-fair.
By symmetry, an analogous argument will show that $V_2$ is $0$-fair.
First, observe that $V_1$ contains exactly $3Z$ vertices of color $c'$ and $3Z$ vertices of color $c''$.
We show that  $\cv_{\tilde{c}}(V_1)\leq 3Z$  for every color $\tilde{c} \in C$.
For color~$c$, we have that $\cv_{c}(V_1) \le \cv_{c}(V) = n+2 < 3Z$.
For each variable $x_i\in X$, as the two corresponding literal vertices are part of different districts, we have that $\cv_{c^x_i}(V_1)= 3Z-i < 3Z$.
For each clause $y_j\in Y$, as $X'$ is a solution, either one or two literal vertices corresponding to literals in $y_j$ and the attached leaves are part of $V_1$.
Thus, the number of vertices of color $c^y_j$ in $V_1$ is either $Z+j$ or $2Z+2j$. 
As $Z>2m$, it follows that $\cv_{c^y_j}(V_1) < 3Z$.
Thus, both districts $V_1$ and~$V_2$ are $0$-fair.
\end{proof}

It remains to prove the correctness of the backward direction of the reduction. For this, note that the \fairCD instance is constructed such that the two central vertices need to end up in different districts and for each $x_i \in X$, the two literal vertices $v_i$ and~$\overline{v}_{i}$ need to end up in different districts.
Thus, the two districts correspond to two inverse truth assignments. 
Subsequently, we will prove that there is no clause in which all corresponding vertices are in the same district. 
This will show that the two truth assignments induced by the two districts are a solution to the given \textsc{Not-All-Equal 3-Sat} instance. 

We start by observing that all leaves need to lie in the same district as the vertex they are attached to.
\begin{observation}\label{ob:fvn1}
  In a solution to the constructed \fairCD instance, all leaves attached to a vertex $v\in V$ need to lie in the same district as $v$.
\end{observation}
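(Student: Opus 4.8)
The plan is to exploit two features of the construction: that $\ell=0$ and that every leaf has degree one. First I would record a preliminary fact that no district in any solution can consist of a single vertex. Indeed, if a district $D$ were to contain exactly one vertex $u$, say of color $\tilde{c}$, then $\cv(D)$ would have a single entry equal to one and all remaining entries equal to zero; since the construction uses $|C|>1$, the vector $\cv(D)$ has length greater than one, its second-largest entry is~$0$, and therefore $\MOV(D)=1-0=1>0=\ell$. This contradicts the requirement that every district in a solution be $\ell$-fair. Hence every district contains at least two vertices.

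Next I would argue the observation itself by contradiction. Suppose some leaf $u$ attached to a vertex $v$ lies in a district $D$ with $v\notin D$. By construction $u$ is a leaf, so its unique neighbor in $G$ is $v$. As $v\notin D$, the vertex $u$ has no neighbor inside~$D$, i.e., $u$ is an isolated vertex of $G[D]$. Since $G[D]$ is required to be connected in any solution, this forces $D=\{u\}$. But a singleton district was just ruled out, so we reach a contradiction. Consequently every leaf must share its district with the vertex it is attached to, which is precisely the claim.

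I do not expect any genuine obstacle here: the whole statement is immediate once one notes that $\ell=0$ together with $|C|>1$ excludes singleton districts, and that a degree-one vertex separated from its unique neighbor cannot be part of a larger connected district. The only point that deserves a line of care is verifying that the margin-of-victory definition indeed yields $\MOV=1$ for a one-vertex district (rather than falling under the $t=1$ convention), which holds because in this reduction the color set has size strictly greater than one.

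\begin{proof}
Since $\ell = 0$ and $|C| > 1$ in the constructed instance, no district can consist of a single vertex: a district containing only one vertex $u$ of color $\tilde{c}$ has $\cv(V')$ with largest entry $1$ and second-largest entry $0$, so $\MOV(V') = 1 > 0 = \ell$. Now suppose some leaf $u$ attached to a vertex $v$ belongs to a district $D$ with $v \notin D$. The only neighbor of $u$ in $G$ is $v$, so $u$ is isolated in $G[D]$; as $G[D]$ must be connected, we obtain $D = \{u\}$, contradicting the fact that no district is a singleton. Hence $u$ lies in the same district as $v$.
\end{proof}
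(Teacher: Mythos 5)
Your proof is correct and follows essentially the same route as the paper's: a leaf separated from its unique neighbor would have to form a singleton district, which cannot be $0$-fair. You merely spell out the margin-of-victory computation for a singleton district (and correctly note that $|C|>1$ in this construction, so $\MOV=1>0$ either way), which the paper leaves implicit.
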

\begin{proof}
  If a leaf is not in the same district as its neighbor, then the leaf needs to form its own district.
  However, this is not possible, as we require each district to be $0$-fair. 
\end{proof}

Next, we show that indeed the two central vertices $v^{\star}_1$ and $v^{\star}_2$ always lie in different districts.
\begin{observation} \label{ob:fvn2}
The two central vertices $v^{\star}_1$ and $v^{\star}_2$ cannot belong to the same district in any solution to the constructed \fairCD instance. 
\end{observation}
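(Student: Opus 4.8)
The plan is to argue by contradiction: assume a solution $(V_1, V_2)$ places both central vertices in one district, say $v^{\star}_1, v^{\star}_2 \in V_1$. Since $k = 2$ and every district must be non-empty, $V_2 \neq \emptyset$. First I would invoke \Cref{ob:fvn1} to fix the coarse picture: every leaf lies in the district of the vertex it is attached to, so all the $c'$- and $c''$-leaves hanging off $v^{\star}_1$ and $v^{\star}_2$ are forced into $V_1$, and, more importantly, any leaf in $V_2$ must have its (unique) neighbor in $V_2$ as well.

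The key structural step is a connectivity argument that pins down $V_2$ exactly. The only non-leaf vertices of $G$ are the two central vertices and the $2n$ literal vertices, and every literal vertex is adjacent only to $v^{\star}_1$, $v^{\star}_2$, and its own leaves; in particular there is no edge between two distinct literal vertices. Hence $\{v^{\star}_1, v^{\star}_2\}$ separates every pair of literal vertices. As $V_2$ is connected and avoids both central vertices, it can contain at most one literal vertex, and by \Cref{ob:fvn1} any leaf in $V_2$ forces its parent---necessarily a literal vertex---into $V_2$. Combined with $V_2 \neq \emptyset$, this forces $V_2$ to consist of exactly one literal vertex $\tilde{v}_i$ together with all of its leaves.

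Finally I would contradict $\ell$-fairness (recall $\ell = 0$) by counting colors in this forced $V_2$. It contains a single vertex of color $c$ (namely $\tilde{v}_i$), exactly $3Z - i$ vertices of color $c^x_i$, and, for each clause $y_j$ containing the literal of $\tilde{v}_i$, exactly $Z + j \le Z + m$ vertices of color $c^y_j$. Using $Z = 2nm + 1$ and $i \le n$ one checks $3Z - i > Z + m$, so $c^x_i$ is the strict plurality color of $V_2$ and the second most frequent color has at most $Z + m$ occurrences; therefore $\MOV(V_2) \ge (3Z - i) - (Z + m) = 2Z - i - m > 0 = \ell$, contradicting that $V_2$ is $\ell$-fair.

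I do not expect a genuine obstacle here: all three steps are short, and the only point needing care is the middle one, where one must rule out every alternative shape of $V_2$ (stray leaves whose parent sits in $V_1$, or chains of several literal vertices). Both are excluded cleanly---the former by \Cref{ob:fvn1}, the latter by the absence of literal–literal edges---so the argument collapses to the single-star case. The arithmetic in the last step is routine given the gap engineered by the factor $3$ in the $c^x_i$-leaf counts against the $Z + j$ clause counts.
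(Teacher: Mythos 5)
Your proposal is correct and follows essentially the same route as the paper: invoke the leaf observation to force $V_2$ to be a single literal vertex together with its attached leaves, then derive a contradiction with $0$-fairness by comparing the $3Z-i$ occurrences of $c^x_i$ against the at most $Z+m$ occurrences of any clause color. Your version just spells out the connectivity step (no literal--literal edges, so $V_2$ cannot contain two literal vertices) and states the final inequality more carefully than the paper does.
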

\begin{proof}
Assume that $v^{\star}_1$ and $v^{\star}_2$ belong to the same district.
By \Cref{ob:fvn1}, the other district~$V'$ needs to consist of a single literal vertex $v_i$ or $\overline{v}_i$ for some $x_i\in X$ and all leaves attached to it. 
The most frequent color appearing in $V'$ is $c^x_i$ which occurs $3Z-i > 2Z$ times (by the definition of $Z$). 
The second most frequent color is either $c^y_j$ for some $j\in [m]$ occurring $Z+j < 2Z$ times or $c$ occurring once (by the definition of $Z$). 
Thus, the district cannot be $0$-fair.
\end{proof}

To prove that, for each variable, the two corresponding literal vertices need to be part of different districts and that not all literal vertices corresponding to literals from a clause can lie in the same district, we prove that neither district can contain more than $3Z$ vertices of the same color.
\begin{lemma}\label{le:fvn2a}
For any solution to the constructed \fairCD instance, no district contains more than $3Z$ vertices of the same color. 
\end{lemma}
\begin{proof}
Assume that there exists a solution with a district $V'$ containing more than $3Z$ vertices of the same color. 
Since $V'$ is $0$-fair, there are two colors $q$ and $q'$ with $\cv_q(V') = \cv_{q'}(V') > 3Z$.
We have $\cv_{c}(V') \le \cv(V) = n + 2 < 3Z$ and $\cv_{c'}(V') = \cv_{c''}(V') = 3Z$ by \Cref{ob:fvn2}.
Thus, we have $q, q' \in \{ c_{i}^x \mid x_i \in X \} \cup \{ c_{j}^y \mid y_j \in Y \}$.

For $x_i\in X$, there are two literal vertices each of which have $3Z-i$ leaves of color $c^x_i$ attached to it. 
For $y_j\in Y$, there are three literal vertices each of which have $Z+j$ leaves of color $c^y_j$ attached to it.
It follows from \Cref{ob:fvn1} that $\cv_{c^x_i}(V') \in \{ 0, 3Z - i, 6Z - 2i \}$ for each $x_i \in X$ and that $\cv_{c^y_j}(V') \in \{ 0, Z + j, 2Z + 2j, 3Z + 3j \}$ for each $y_j \in Y$.
Since $i \le n$, $j \le m$, $Z = 2\cdot n\cdot m+1$, and $\cv_q(V') = \cv_{q'}(V')$ for 
$q, q' \in \{ c_{i}^x \mid x_i \in X \} \cup \{ c_{j}^y \mid y_j \in Y \}$, it needs to hold that $\cv_q(V') = \cv_{q'}(V') = 0$, which contradicts $\cv_q(V') = 
\cv_{q'}(V') > 3Z$.
\end{proof}

Recalling that $3Z-i$ vertices of color~$c^x_i$ are attached to each literal vertex corresponding to $x_i\in X$ and that $6Z-2i>3Z$, the next observation directly follows from the previous lemma and \Cref{ob:fvn1}.
\begin{observation} \label{ob:fvn4}
Let $(V_1,V_2)$ be a solution of the constructed \fairCD instance. 
Then, for each $x_i\in X$, exactly one of the two corresponding literal vertices $v_i$ and $\overline{v}_i$ is part of $V_1$ and the other is part of~$V_2$.
\end{observation}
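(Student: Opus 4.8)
The plan is to derive the statement as an immediate consequence of \Cref{le:fvn2a} together with the connectivity-forced co-location of leaves recorded in \Cref{ob:fvn1}. Since $(V_1,V_2)$ is a partition of all vertices, each literal vertex lies in exactly one of the two districts, so it suffices to rule out the possibility that $v_i$ and $\overline{v}_i$ share a district; the ``exactly one in each'' conclusion then follows because there are only two districts.

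First I would recall the relevant part of the construction: both literal vertices $v_i$ and $\overline{v}_i$ carry $3Z-i$ attached leaves of color $c^x_i$. By \Cref{ob:fvn1}, every such leaf must lie in the same district as the literal vertex to which it is attached. Hence, if $v_i$ and $\overline{v}_i$ were to lie in a common district~$V'$, then $V'$ would necessarily contain all $2(3Z-i)=6Z-2i$ leaves of color $c^x_i$, so that $\cv_{c^x_i}(V')\ge 6Z-2i$ (the literal vertices themselves have color~$c$ and do not contribute to this count).

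Next I would verify the inequality $6Z-2i>3Z$. Since $Z=2\cdot n\cdot m+1$ and each variable index satisfies $i\le n$, we have $3Z=6nm+3>2n\ge 2i$, and therefore $6Z-2i=3Z+(3Z-2i)>3Z$. Consequently $V'$ would contain strictly more than $3Z$ vertices of color $c^x_i$, which directly contradicts \Cref{le:fvn2a}. This contradiction forces $v_i$ and $\overline{v}_i$ into distinct districts, i.e., exactly one of them is in $V_1$ and the other is in $V_2$, as claimed.

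As for difficulty, there is essentially no obstacle here: the observation is a short corollary of the two preceding results, and the only point requiring a moment's care is confirming $6Z-2i>3Z$ from the choice $Z=2nm+1$ and the bound $i\le n$ on the variable indices. This is precisely the arithmetic that keeps the color-$c^x_i$ count above the per-district threshold of \Cref{le:fvn2a} whenever both literal vertices would be co-located, and it is also why the $-i$ offsets in the leaf counts are harmless.
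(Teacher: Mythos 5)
Your proposal is correct and matches the paper's argument exactly: the paper also derives \Cref{ob:fvn4} directly from \Cref{ob:fvn1} and \Cref{le:fvn2a} by noting that co-locating $v_i$ and $\overline{v}_i$ would force $6Z-2i>3Z$ vertices of color $c^x_i$ into one district. Your explicit verification of the inequality from $Z=2nm+1$ and $i\le n$ is the same arithmetic the paper leaves implicit.
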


We are now ready to prove the backward direction of the correctness of our construction.
\begin{restatable}{lemma}{fvnn}\label{le:fvn2}
 If the constructed {\fairCD} instance is a yes-instance, then the given \textsc{Not-All-Equal 3-Sat} instance is a yes-instance.
\end{restatable}
\begin{proof}
Let $(V_1,V_2)$ be a solution to the constructed \fairCD instance. 
From  \Cref{ob:fvn4}, it follows that for each $x_i\in X$, exactly one of $v_i$ and $\overline{v}_i$ is part of $V_1$. 
Let $\varphi$ be the truth assignment induced by $V_1$, i.e., $\varphi$ sets $x_i$ to true if $v_i\in V_1$ and $x_i$ to false if $\overline{v}_i\in V_1$. 
We claim that $\varphi$ is a solution to the given  \textsc{Not-All-Equal 3-Sat} instance. 
Firstly, for the sake of contradiction, assume that there exists a clause $y_j\in C$ containing only literals that are satisfied by~$\varphi$. 
However, by \Cref{ob:fvn1}, this implies that all vertices of color $c^y_j$ are part of~$V_1$, contradicting \Cref{le:fvn2a} as there exist $3Z+3j$ such vertices. 
Secondly, for the sake of contradiction, assume that $y_j$ contains no literal satisfied by~$\varphi$. 
However, by \Cref{ob:fvn1},  this implies that all vertices of color $c^y_j$ are part of~$V_2$, contradicting again \Cref{le:fvn2a}. 
Consequently, $\varphi$ is a solution.
\end{proof}

Observing that $\{v^{\star}_1\}$ is a feedback vertex set of the constructed graph and that the construction can be computed in polynomial time, \Cref{th:fvn} follows directly from \Cref{le:fvn1} and \Cref{le:fvn2}

From our reduction, we can further conclude that \fairCD is also para-NP-hard with respect to the treewidth plus the number $k$ of districts.
\begin{corollary} \label{co:tw-hard}
\fairCD is NP-hard for $\twidth = 2$ and $k=2$, even if $\smin=1$ and $\smax=\infty$.
\end{corollary}

\section{\textsc{FCD} on Graphs of Bounded Vertex Cover Number} \label{se:vc}
Motivated by our hardness results for graphs with constant treewidth, we now turn to the size $\vc$ of a minimum vertex cover, a parameter never smaller than the treewidth. 
In this section, we present two parameterized algorithms, namely, an XP-algorithm for~$\vc$ and an FPT-algorithm for $\vc +  \vert C \vert $. 
Unfortunately, we were unable to settle whether \fairCD parameterized by~$\vc$ is W[1]-hard or fixed-parameter tractable. 
In contrast, we develop an FPT-algorithm for the number of vertices with degree at least two (a parameter which is never smaller than~$\vc$).

Both algorithms for $\vc$ rely on the following~lemma.

\begin{lemma}\label{obs:vc}
  Let $S$ be a vertex cover of minimum size and $\mathcal{V} = (V_1, \dots, V_k)$ a solution to an \fairCD instance on a graph~$G$. There are at most $\vc$ districts in $\mathcal{V}$ that contain at least one vertex from $S$.
  Moreover, for every $V_i\in \mathcal{V}$ with $S \cap V_i \ne \emptyset$, there is a set $J_i \subseteq V_i \setminus S$ of at most $ \vert S \cap V_i \vert -1$ vertices such that $G[(S \cap V_i) \cup J_i]$ is connected.
\end{lemma}
\begin{proof}
 As each vertex is only contained in one district, the first part of the lemma follows from the definition of $\vc$. 
 To prove the second part, fix some $V_i\in \mathcal{V}$.
 Consider a minimum spanning tree $T = (V_i, F)$ of $G[V_i]$.
 Let $J_i \subseteq V_i \setminus S$ be the set of vertices of degree at least two in $T$.
 Let $T' := ((S \cap V_i) \cup J_i, F')$ be the result of deleting from $T$ each vertex $v \in V_i \setminus (S \cup J_i)$ along with an edge incident to it. 
 Observe that $T'$ is connected and thus $G[(S \cap V_i) \cup J_i]$ is connected, which contains $T'$ as a subgraph.
 We show that $\vert J_i \vert \le \vert S \cap V_i \vert - 1$.
 As by the definition of $J_i$ vertices from $J_i$ are only adjacent to vertices from $S\cap V_i$ and as every vertex from $J_i$ has degree at least two in $T$ and $T'$, we have $\vert F' \vert \ge 2 \vert J_i \vert$.
 We also have $\vert F' \vert = \vert S \cap V_i \vert + \vert J_i \vert - 1$ since $T'$ is a tree.
 Thus, we have $\vert J_i \vert \le \vert S \cap V_i \vert - 1$.
\end{proof}

We first show that \fairCD parameterized by $\vc$ is in XP using the following approach:
We first guess how the vertex cover is partitioned into districts in the sought solution, which gives partial (not necessarily connected) districts.
For every partial district, we then guess some vertices outside the vertex cover to include such that the partial district becomes connected, the two most frequent colors, and how often they occur in the resulting district. 
The remaining problem can be reduced to the polynomial-time solvable \textsc{$(g, f)$-Factor} problem~\cite{DBLP:conf/stoc/Gabow83}. 
\begin{theorem} \label{thm:xpvc}
    \fairCD is solvable in $n^{\mathcal{O}(\vc)}$ time.
\end{theorem}
\begin{proof}
  Suppose there is a solution $\mathcal{V} = (V_1, \dots, V_k)$ to the given \fairCD instance.
  Let $S$ be a vertex cover of minimum size and let $I = V \setminus S$ be the vertices outside of the vertex cover.
  As in the proof of \Cref{thm:fptvcc}, our algorithm first tries all possibilities to fix some structure with respect to $S$.
  Then, we will construct an instance of the polynomial-time solvable \textsc{$(g, f)$-Factor} problem, which generalizes \textsc{Maximum Matching}, for each choice of the following:
  \begin{itemize}
    \item
      An integer $k' \le \min(k,\vc)$.
      If $\ell = 0$ or $\smin \ge 2$, then we consider only one choice for $k'$, namely $k' = k$.
    \item
      A partition of $S$ into $k'$ non-empty subsets $S_1, \dots, S_{k'}$.
      There are at most $\vc^{\vc}$ such partitions.
    \item
      For every $i \in [k']$, a set $J_i \subseteq I$ of at most $ \vert S_i \vert  - 1$ vertices such that $G[S_i \cup J_i]$ is connected and $J_i \cap J_{i'} = \emptyset$ for $i \ne i' \in [k']$.
      We can assume that $ \vert J_i \vert  \le  \vert S_i \vert  - 1$ vertices are sufficient to make $G[S_i]$ connected by \Cref{obs:vc}.
      The number of choices for $J_i$ is at most $\prod_{i \in [k']} n^{ \vert S_i \vert  - 1} \le n^{\vc}$.
    \item
      For every $i \in [k']$, a pair $(c_i, c_i')$ of colors, which have the largest numbers of occurrences in the sought $V_i$ among all colors.
      There are at most $ \vert C \vert ^{2\vc}$ choices for all pairs of colors.
    \item
      For every $i \in [k']$, the numbers $z_{c_i}, z_{c_i'}$ of occurrences of color $c_i$ and $c_i'$, respectively, with $z_{c_i} - \ell \leq z_{c_i'} \leq z_{c_i}$.
      Since $z_{c_i} \le n$ and $z_{c_i'} \le n$, there are at most $n^{2\vc}$ choices.
  \end{itemize}
  Since $ \vert C \vert  \le n$, there are at most $n^{\mathcal{O}(\vc)}$ choices. Let $I' = I \setminus \bigcup_{i \in [k']} J_i$.
  Now the question is whether we can partition the vertices $V$ into $k$ districts $(V_1, \dots, V_k)$ such that, for $i\in [k',k]$, $V_i$ consists of a single vertex from $I'$, and for $i\in [k']$, $S_i\cup J_i\subseteq V_i$, $G[V_i]$ is connected, $ \vert V_i \vert \in [\smin,\smax]$, $\cv_{c_i}(V_i)=z_{c_i}$, $\cv_{c_i'}(V_i)=z_{c_i'}$ and $\cv_{c}(V_i)\leq \cv_{c_i'}(V_i)$ for all $c\in C\setminus \{c_i\}$. 
 We reject the current combination, if for some $i\in [k']$ $\cv_{c_i}(S_i\cup J_i)> z_{c_i}$, $\cv_{c_i'}(S_i\cup J_i)>z_{c_i'}$, $\cv_{c}(S_i\cup J_i)> z_{c_i'}$ for some $c\in C\setminus \{c_i\}$, or $ \vert S_i \cup J_i \vert  > \smax$.
  Otherwise, to decide whether it is possible to distribute the vertices $I'$ to construct a partition respecting the above-described properties, we reduce to \textsc{$(g, f)$-Factor}, where given a graph $H=(U,F)$ and two functions $g,f:V\mapsto \mathbb{N}$, the question is whether there is a subgraph $H' = (U, F')$ with $F' \subseteq F$ such that every vertex has at least $g(v)$ and at most $f(v)$ neighbors in $H'$.
  
  We construct a bipartite graph $H$ and $f, g$ as follows.
  The left bipartition of $H$ consists of all vertices $v\in I'$.
  For every $i \in [k']$, we add the following vertices to the right bipartition:
  \begin{itemize}
    \item
      For each color $c \in \{ c_i, c_i' \}$, we add $z_c-\cv_{c}(S_i\cup J_i)$ vertices (call them $A_i$) and connect them via edges to all vertices $v\in I'$ having color $c$ in $G$ which have at least one neighbor in $S_i$ in $G$.
    \item
      For each color $c \in C \setminus \{ c_i, c_i' \}$, we add $z_{c_i'}-\cv_{c}(S_i\cup J_i)$ vertices (call them $B_i^c$) and connect them via edges to all vertices $v\in I'$ having color $c$ in $G$ which have at least one neighbor in $S_i$ in $G$.
      Let $B_i := \bigcup_{c \in C \setminus \{ c_i, c_i' \}} B_i^c$.
  \end{itemize}
  If $\smin = 1$, then we add a set $A'$ of $k - k'$ vertices to the right bipartition and connect them via edges to all vertices of $I'$.
  For every vertex $v$ introduce thus far, let $f(v) = g(v) = 1$.
  Finally, for every $i \in [k']$, add a vertex $v_i$ to the left side and let $f(v_i) =  \vert B_i \vert  - \max(0, \smin -  \vert A_i\cup S_i \cup J_i \vert )$ and $g(v_i) =  \vert B_i \vert  - \smax +  \vert A_i\cup S_i \cup J_i \vert $ (if it does not hold that $0\leq g(v_i)\leq f(v_i)$, then we continue with the next combination). We connect $v_i$ to all vertices from $B_i^c$.

  If the constructed \textsc{$(g, f)$-Factor} instance is a yes-instance, we return yes; otherwise, we continue with the next combination. 
  To see why the algorithm works correctly, suppose that there is a subgraph $H' = (U, F')$ of $H$ such that the degree of every vertex $v$ in $H'$ is in the range $[g(v), f(v)]$.
  From $H'$, we can construct a solution of the given \fairCD instance by 
  putting each vertex $v\in I'$ into the district corresponding to the neighbor of $v$ in $H'$ (if $\smin = 1$, then every vertex adjacent to some vertex from $A'$ forms a district of its own).
  Moreover, for all $i\in [k']$, we add $S_i\cup J_i$ 
  to $V_i$.
  As all vertices from~$I'$ have one neighbor in 
  $H'$, each vertex from $I'$ is assigned to a district. 
  As all vertices from $A$ have one neighbor in $H'$, for $i\in [k']$, it holds 
  that $\cv_{c_i}=z_{c_i}$ and $\cv_{c'_i}=z_{c'_i}$ and thus that the 
  resulting districts are $\ell$-fair.
  Moreover, the number of neighbors of $v_i$ in $H'$ is between $ \vert B_i \vert  - \smax +  \vert A_i\cup S_i \cup J_i \vert $ and $ \vert B_i \vert  - \max(0, \smin -  \vert A_i\cup S_i \cup J_i \vert )$, implying that there are at least $\smin- \vert A_i\cup S_i \cup J_i \vert $ and at most $\smax -  \vert A_i\cup S_i \cup J_i \vert $ vertices in $B_i$ that adjacent to some vertex from $I'$.
  Thus, the size of each district is in $[\smin, \smax]$. 
\end{proof}

Recall that  \fairCD is NP-hard and W[1]-hard with respect to $ \vert C \vert +k$ on trees with diameter four (\Cref{co:dia4}). In contrast to this, since a tree of diameter at most three has a vertex cover of size at most two, by \Cref{thm:xpvc}, the following holds.
\begin{corollary}\label{co:dia3}
  \fairCD is polynomial-time solvable on trees with diameter at most three.
\end{corollary}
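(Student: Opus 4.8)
The plan is to derive this directly from \Cref{thm:xpvc}, which already provides an $n^{\mathcal{O}(\vc)}$-time algorithm for \fairCD. Since $n^{\mathcal{O}(\vc)}$ is polynomial in $n$ whenever $\vc$ is bounded by a constant, it suffices to prove the purely structural claim that every tree of diameter at most three has vertex cover number at most two. The entire work of the corollary therefore lies in this combinatorial observation; no new algorithmic ingredient is needed.

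To establish the structural claim, I would first dispose of the easy cases: a tree of diameter at most two is a single vertex, a single edge, or a star, each of which admits a vertex cover of size at most one (the center). The only substantial case is diameter exactly three. Here I fix a diametral path $v_0 v_1 v_2 v_3$ and argue that $\{v_1, v_2\}$ is a vertex cover. Let $w$ be any vertex not on this path; since the graph is a tree, the unique path from $w$ to the diametral path meets it in a first vertex $u$. If $u \in \{v_0, v_3\}$, then $w$ would be at distance at least four from the opposite endpoint of the path, contradicting that the diameter equals three; hence $u \in \{v_1, v_2\}$. Likewise, if the distance from $w$ to $u$ were at least two, then $w$ would again be at distance at least four from the far endpoint, a contradiction; so $w$ is a leaf adjacent to $v_1$ or $v_2$. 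Consequently every vertex is either $v_1$, $v_2$, or a leaf attached to one of them, i.e., the tree is a double star with centers $v_1$ and $v_2$, and every edge is incident to $\{v_1, v_2\}$. Thus $\vc \le 2$ in all cases.

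With $\vc \le 2$ in hand, \Cref{thm:xpvc} immediately yields an algorithm running in $n^{\mathcal{O}(\vc)} = n^{\mathcal{O}(1)}$ time, which proves the corollary. I do not anticipate a genuine obstacle: the statement is essentially a specialization of \Cref{thm:xpvc}, and the only point requiring care is the short distance argument showing that a diameter-three tree must be a double star, everything else being a routine invocation of the already-established XP-algorithm.
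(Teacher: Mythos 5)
Your proposal is correct and follows exactly the paper's route: the paper likewise observes that a tree of diameter at most three has vertex cover number at most two and then invokes \Cref{thm:xpvc}. Your explicit double-star argument for the diameter-three case is sound and merely fills in a detail the paper leaves implicit.
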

Recall that we have shown in \Cref{co:dia4} that \fairCD is NP-hard on trees with diameter four.

Using a similar approach as for \Cref{thm:xpvc}, we show that \fairCD is fixed-parameter tractable with respect to $\vc +  \vert C \vert $.
The overall idea is the following:
We guess the partition of the vertex cover into the districts.
We categorize vertices outside the vertex cover according to their neighborhood and color.
We formulate the resulting problem as an ILP whose number of variables only depends on $\vc +  \vert C \vert $.
Subsequently we employ Lenstra's FPT-algorithm for ILPs \cite{Kan87,Len83} to 
obtain the following:
\begin{theorem}
\label{thm:fptvcc}
    \fairCD parameterized by $\vc+~ \vert C \vert $ is fixed-parameter tractable.
\end{theorem}
\begin{proof}
  Suppose there is a solution $\mathcal{V} = (V_1, \dots, V_k)$.
  Let $S$ be a vertex cover of minimum size.
  Then the vertices $I = V \setminus S$ that are not part of the vertex cover form an independent set.
  Let $k'$ be the number of districts in $\mathcal{V}$ that contain at least one vertex of $S$.
  By \Cref{obs:vc}, we have $k' \le \vc$.
  Assume without loss of generality that exactly for every $i \in [k']$, $S_i \ne \emptyset$ where $S_i = S \cap V_i$.
  Then, for every $i \in \{ k' + 1, \dots, k \}$, we have $V_i = \{ v \}$ where $v$ is some vertex from $I$.

  We say that a vertex $v \in I$ has type $(c, X)$ for $c \in C$ and $X \subseteq S$ if the color of $v$ in $G$ is $c$ and its neighborhood is $X$.
  Let $\mathcal{T}$ denote the set of all types (note that $ \vert \mathcal{T} \vert  = 2^{\vc} \cdot  \vert C \vert $).
  For type $T \in \mathcal{T}$, let $n_T$ be the number of vertices of type $T$, and for type~$T$ with $n_T > 0$ let $v_{T} \in I$ be an arbitrary vertex of type $T$.
  Our algorithm will construct an ILP instance for each choice of the following:
  \begin{itemize}
    \item
      An integer $k' \le \vc$.
      If $\ell = 0$ or $\smin > 1$, then we consider only one choice for $k'$, namely $k' = k$.
    \item
      A partition of $S$ into non-empty subsets $S_1, \dots, S_{k'}$.
      There are at most $\vc^{\vc}$ such partitions.
    \item
      For every $i \in [k']$, a set $\mathcal{T}_i$ of at most $ \vert S_i \vert  - 1$ types such that $G[S_i \cup J_i]$ is connected for $J_i = \{ v_T \mid T \in \mathcal{T}_i \}$. If $ \vert \{ i \mid T \in \mathcal{T}_i \} \vert  > n_T$ for some $T \in \mathcal{T}$, then we reject the current combination.
      Note that we can assume that $ \vert \mathcal{T}_i \vert  \le  \vert S_i \vert  - 1$  vertices are sufficient to make $G[S_i]$ connected by \Cref{obs:vc}.
      As there are at most $2^{\vc} \cdot  \vert C \vert $ types, the number of choices for all $\mathcal{T}_i$ is at most $\prod_{i \in [k']} (2^{\vc} \cdot  \vert C \vert )^{ \vert S_i \vert  - 1} \le 2^{\vc^2} \cdot  \vert C \vert ^{\vc}$.
    \item
      For every $i \in [k']$, a pair $(c_i, c_i')$ of colors, which have the largest numbers of occurrences in $V_i$ among all colors. In the sought solution it holds that $\cv_{c_i}(V_i) \ge \cv_{c_i'}(V_i)$.
      There are at most $ \vert C \vert ^{2\vc}$ choices for all pairs of colors.
  \end{itemize}
  To decide on the distribution of the vertices from $I$, we now construct an ILP. For the ILP formulation, we introduce an integer variable $x_{i, T}$ for each $i \in [k']$ and $T \in \mathcal{T}$.
  The variable $x_{i, T}$ will indicate the number of vertices of type $T$ that we put in $V_i$.
  Clearly, there are at most $n_T$ vertices for each type $T\in \mathcal{T}$ that belong to one of $V_1, \dots V_{k'}$:
  \begin{align*}
    \sum_{i \in [k']} x_{i, T} \le n_T.
  \end{align*}
  Then for every $i \in [k']$ and $T \in \mathcal{T}_i$, at least one vertex of type $T$ is included in $V_i$ to satisfy our previous guesses:
  \begin{align*}
    x_{i, T} \ge 1.
  \end{align*}
  Further for every $i \in [k']$, only vertices from $I$ that are adjacent to a vertex from $S_i$ can be part of~$V_i$. Thus, for all types $T=(c,X)\in \mathcal{T}$ with $X\cap S_i=\emptyset$ it holds that: 
  \begin{align*}
    x_{i, T} = 0.
  \end{align*}
  Moreover, there are exactly $k - k'$ vertices which are contained in none of $V_1, \dots, V_{k'}$ as they form their own districts:
  \begin{align*}
    \sum_{T \in \mathcal{T}} \left( n_T - \sum_{i \in [k']} x_{i, T} \right) = k - k'.
  \end{align*}
  For $i\in [k']$ and $c\in C$, let $n_{i, c}$ be the number of vertices of color $c$ in $V_i$:
  \begin{align*}
    n_{i, c} = \cv_{c}(S_i) + \sum_{T \in \mathcal{T} \text{ with } T = (c, X) \text{ for some }X\subseteq S} x_{i, T}.
  \end{align*}
  For every $i \in [k']$, the following two constraints will ensure that the district $V_i$ is $\ell$-fair.
  \begin{align*}
    n_{i,c_i'} \le n_{i, c_i} \le n_{i, c_i'} + \ell, \text{ and } n_{i, c_i'} \ge n_{i, c} \text{ for all } c \in C \setminus \{ c_i \}.
  \end{align*}
  Finally, for every $i \in [k']$, the following imposes the size constraints:
  \begin{align*}
    \smin \le \sum_{c \in C} n_{i, c} \le \smax.
  \end{align*}
  For the running time, observe that we construct at most $2^{\mathcal{O}(\vc^2)} \cdot  \vert C \vert ^{\vc}$ ILP instances.
  Since each ILP instance uses at most $\vc \cdot 2^{\vc} \cdot  \vert C \vert $ variables, 
it can be solved in $f(\vc +  \vert C \vert ) \cdot n^{\mathcal{O}(1)}$ time for some 
computable function $f$ due to a  result of Lenstra \cite{Kan87,Len83}.
\end{proof}

It remains open whether \fairCD is fixed-parameter tractable or W[1]-hard with respect to $\vc$. However, we can prove fixed-parameter tractability with respect to the number of vertices with degree at least two. 
Neglecting connected components consisting of two vertices, the set of vertices with degree at least two is always also a vertex cover; therefore, this parameter upper-bounds the vertex cover number in connected graphs on at least three vertices. 
The idea of the algorithm is to guess the partitioning of the degree-two vertices into the districts. 
Then, for each of these districts, there exists a set of degree-one vertices where each of these vertices either belongs to the district or needs to form its own. 
Lastly, we can distribute the degree-one vertices using \Cref{le:stars}. 
\begin{proposition}\label{pr:v2}
 Let $\nov$ be the number of vertices with degree at least 
 two. \fairCD is solvable in $\mathcal{O}(\nov^{\nov}\cdot k \cdot n)$ 
 time.
\end{proposition}
For the sake of simplicity, we assume that the graph does not contain connected components of size two (we can deal with them easily separately).
 Let $X\subseteq V$ be the set of vertices with degree at least two and $Y=V\setminus X$ be the set of degree-one vertices. 
 We iterate over all combinations of the following: 
 \begin{itemize}
  \item An integer $k'\leq k$. If $\ell=0$ or $\smin > 1$, then we only consider $k'=k$. 
  \item A partition of $X$ into $k'$ non-empty subsets $X_1,\dots, X_{k'}$. There are at most $\nov^{\nov}$ such partitions. 
 \end{itemize}
 For $i\in [k']$, let $Y_i\subseteq Y$ be the set of vertices from $Y$ that are adjacent to a vertex from $X_i$. 
 If $\ell=0$ or $\smin > 1$, then $k' = k$ and hence we accept if ($X_1\cup Y_1,\dots , X_k\cup Y_k$) is a solution and otherwise continue with the next combination.
 
Otherwise, we put all vertices from $X_i$ in district $V_i$. Each vertex $v\in Y_i$ is either 
part of~$V_i$ or forms its 
own district. Thereby, for each $i\in [k']$, we know by applying \Cref{le:stars}  an interval $[\alpha_i,\beta_i]$ for the number of $\ell$-fair districts respecting the size constraints in which the vertices $X_i\cup Y_i$ can be partitioned or we know that no solution exists (in this case, we continue with the next combination). We now check whether the given $k$ lies between $\sum_{i\in [k']} \alpha_i$ and $\sum_{i\in [k']} \beta_i$ and return the answer. 
  
 As there exist $\nov^{\nov}$ partitions of $X$, the running time of 
 the algorithm is $\mathcal{O}(\nov^{\nov} \cdot k\cdot n)$.

\section{Conclusion}
We initiated a thorough study of the NP-hard 
\textsc{Fair Connected Districting} (\fairCD) problem. 
We considered \fairCD on specific graph classes and analyzed the parameterized complexity of \fairCD with a focus on the number of districts, the number of colors, and various graph parameters. We have shown that while \fairCD can be solved on simple graph classes in polynomial time (mostly using approaches based on dynamic programming), on trees it is already NP-hard and W[1]-hard with respect to the combined parameter number of colors plus number of districts. Nevertheless, for graph parameters such as the vertex cover number and the max leaf number, we developed XP-algorithms. 

As to challenges for future research, 
we left open
whether \fairCD is fixed-parameter tractable or W[1]-hard with respect to the 
vertex cover number or max leaf number. 
The former question is particularly intriguing because it is closely related to 
an open question of Stoica et al.\ \cite{DBLP:conf/atal/StoicaCDG20} on the 
parameterized complexity of \textsc{Fair Regrouping} with respect to the number 
of districts. The latter is interesting because \fairCD could turn out to be 
one of few problems that are in XP yet W[1]-hard with respect to the max leaf 
number. 
It would be also promising to consider additional graph parameters (such as cliquewidth, treedepth or sparsity related parameters like the maximum degree or the degeneracy) or to examine further graph classes (such as grids).

From a broader perspective, there are several natural extensions of \fairCD. 
For instance, as already suggested by Stoica et 
 al.\ \cite{DBLP:conf/atal/StoicaCDG20}, there may be a function that specifies for each vertex a set of districts to which the vertex can belong.\footnote{However, it seems that in most applications where connectivity plays a crucial rule districts are ``isomorphic'' to each other, i.e., districts do not carry any particular meaning; for instance, when dividing a city into voting districts or people placed on a social network into teams for a competition. Thus, a priori, it is irrelevant for a vertex in which district it is put.}
 While all our hardness results still hold for this more complicated setting, it is open which of our algorithmic results can be adapted.
 Moreover, we did not study the generalization of \fairCD where each vertex has 
 an integer weight for each color (such a study has been done in the context of 
 gerrymandering by Cohen-Zemach 
 et al.\ \cite{DBLP:conf/atal/Cohen-ZemachLR18} and 
 Ito et al.\ \cite{DBLP:conf/atal/ItoK0O19}). 
 Weighted \fairCD is motivated by the following application scenarios: 
 First, in some settings we might be restricted to put a certain group of agents always in the same district (those can be combined into one vertex). 
 Second, if each vertex represents a voter and the colors represent alternatives, then voters might want to give points to different alternatives (such as done in the context of positional scoring rules). 

 Lastly, one can modify the definition of \fairCD. For instance, as done, among others, by Lewenberg et al.~\cite{DBLP:conf/atal/LewenbergLR17} and Eiben et al.~\cite{DBLP:conf/aaai/EibenFPS20} in the context of \textsc{Gerrymandering} and, among others, by Lu et al.~\cite{DBLP:conf/sigecom/LuSWZ10} and Fotakis and Tzamos~\cite{DBLP:journals/teco/FotakisT14} in the context of facility location problems, instead of placing agents on a social network, the agents may be placed 
 in a metric space. 
 The task is then to place $k$ ballot boxes in the space where each agent is assigned to the closest ballot box. 
The goal is again to make the resulting districts as fair as possible. 
\subsection*{Acknowledgments}
NB was supported by the DFG projects MaMu (NI 369/19) and ComSoc-MPMS (NI 369/22).
 TK was supported by the DFG projects FPTinP (NI 369/18) and DiPa (NI 369/21). This work
was started at the research retreat of the TU Berlin Algorithmics and Computational Complexity research group
held in Zinnowitz (Usedom) in September 2020.

\bibliographystyle{splncs04}
 \newcommand{\noop}[1]{}

\end{document}